\newcommand{\Al}{X}
\newcommand{\State}{V}
\newtheorem{theorem}{Theorem}
\newtheorem{proposition}{Proposition}
\newtheorem{lemma}{Lemma}
\newtheorem{corollary}{Corollary}
\theoremstyle{definition}
\newtheorem{example}{Example}
\newcommand{\refprop}[1]{Proposition~\ref{#1}}
\newcommand{\refthm}[1]{Theorem~\ref{#1}}
\newcommand{\reflem}[1]{Lem\-ma~\ref{#1}}
\newcommand{\refsec}[1]{Section~\ref{#1}}
\newcommand{\unit}{\mathbb{1}}
\newcommand{\zero}{\mathbb{O}}
\newcommand{\Rectangle}[2]{#1 \times #2}
\definecolor{bdcolor}{rgb}{0.8,0,0.2}
\definecolor{jdcolor}{rgb}{1,0,0}
\definecolor{ihcolor}{cmyk}{0.2,1,0,0}
\definecolor{gscolor}{cmyk}{0.2,1,0,0}
\begin{document}

\title{Convolution, Separation and Concurrency}

\author{Brijesh Dongol\\University of Sheffield\\United Kingdom \and
  Ian J. Hayes\\University of Queensland\\Australia \and Georg
  Struth\\University of Sheffield\\United Kingdom}

\renewcommand{\labelenumi}{(\alph{enumi})}

\maketitle

\begin{abstract}
  A notion of convolution is presented in the context of formal power
  series together with lifting constructions characterising algebras
  of such series, which usually are quantales. A number of examples
  underpin the universality of these constructions, the most prominent
  ones being separation logics, where convolution is separating
  conjunction in an assertion quantale; interval logics, where
  convolution is the chop operation; and stream interval functions,
  where convolution is used for analysing the trajectories of
  dynamical or real-time systems. A Hoare logic is constructed in a
  generic fashion on the power series quantale, which applies to each
  of these examples. In many cases, commutative notions of convolution
  have natural interpretations as concurrency operations.

\vspace{\baselineskip} 

\noindent Keywords: formal power series, convolution, semigroups,
quantales, formal semantics, systems verification, concurrency,
separation logics, interval logics, Hoare logics
  \end{abstract}

  \pagestyle{plain}

%%%%%%%%%%%%%%%%%%%%%%%%%%%%%%%%%%%%%%%%%%%%%%%%%%%%%%%%%%%%%%%%%%%%%%%%%%%

\section{Introduction}\label{sec:introduction}

Algebraic approaches play a fundamental role in mathematics and
computing.  Algebraic axioms for groups, rings, modules or lattices,
for instance, capture certain features of concrete models in an
abstract uniform fashion. Fundamental constructions, such as products,
quotients or adjunctions, can be presented and investigated in algebra
in simple generic ways.

This article investigates the notion of \emph{convolution} or
\emph{Cauchy product} from formal language
theory~\cite{Handbook,BerstelReutenauer} as such a fundamental
notion, supporting the generic construction of various models and
calculi that are interesting to computing. This provides a unified
structural view on various computational models known from the
computer science literature.

Questions of summability and divergence aside, the operational content
of convolution is simple: an entity is separated in all possible ways
into two parts, two functions are simultaneously applied to these
parts, their outputs are combined, and the sum over all possible
combinations is taken. Suppose two functions $f$ and $g$ from an
algebra $S$ (with suitable multiplication $\circ$) into an algebra $Q$
(with suitable multiplication $\odot$ and suitable summation ${\rm
  \Sigma}$). Using the nomenclature of formal language theory, the
convolution of $f$ and $g$ for an element $x\in S$ is  defined as
\begin{equation*}
  (f\otimes g)\, x \ = \sum_{x=y\circ z} f\, y \odot g\, z.
\end{equation*}
Hence $x$ is first separated in all possible ways into parts $y$ and
$z$. The function $f$ is then applied to $y$ and $g$ to $z$. After
that, the results of these applications are combined in $Q$. The
convolution is indeed the sum of all possible splittings of $x$.

In formal language theory, functions $f:S\to Q$ are also known as
power series---more precisely as formal or rational power series. This
notion is slightly different from that commonly used in algebra, as
are the notions of convolution or Cauchy product. In formal language theory,
moreover, power series usually map elements of the free monoid
$S=X^\ast$ over the finite alphabet $X$---the set of words or strings
over $X$---into a semiring $(Q,+,\odot,0)$. Since every word can only
be split into finitely many prefix/suffix pairs, the summation
occurring in convolution is finite and therefore well defined.  A
simple example of $Q$ is the boolean semiring with $+$ as disjunction
and $\odot$ as conjunction. Power series then become characteristic
functions representing languages, telling us whether or not some word
is in some language, and convolution becomes language product. In more
general settings, $Q$ can model probabilities or weights associated
to words; a Handbook has been devoted to the
subject~\cite{Handbook}. This example alone underpins the power of
power series and convolution.

Complementing this body of work, we generalise the typeof power
series, rebalancing the assumptions on source algebras $S$ and target
algebras $Q$ and thus shifting the focus to other applications.  Among
those, we show that, for suitable algebras $S$ and $Q$, convolution
becomes \emph{separating conjunction} of separation logic
(cf.~\cite{COY07}), or alternatively the \emph{chop} operator of
interval temporal logics~\cite{Mos00}. Both can in fact be combined,
for instance within interval logics, to provide new notions of
concurrency for this setting. In addition, we use power series to
capture, in a generic manner, the algebraic properties of convolution
for wide classes of instances and show how Hoare-style compositional
inference systems can be derived uniformly for all of them.

More concretely, the main contributions of this article are as
follows.
\begin{itemize}
\item Considering power series that map arbitrary partial semigroups
  into quantales, we prove a generic lifting result showing that
  spaces of power series form quantales as well.
\item This lifting result is generalised by making the target quantale
  partial, by considering bi-semigroups and bi-quantales with two
  multiplication operations, by mapping two separate semigroups into a
  bi-quantale, and by setting up source semigroups suitable for
  distinguishing between finite and infinite system behaviours.
\item We show that algebras of state and predicate transformers arise
  as instances of the generic lifting theorem.
\item Propositional Hoare calculi (without assignment axioms) are
  derived within the power series quantale in a generic fashion; and
  we discuss some ramifications of deriving concurrency rules in this
  setting.
\item We provide a series of instances of the lifting result, showing
  how quantales of languages, binary relations, matrices and automata,
  sets of paths and traces as well as interval functions and
  predicates arise from a non-commutative notion of convolution.
\item In the commutative case, we present the assertion quantales of
  separation logic with separation based on general resource monoids
  as well as multisets, sets with disjoint union and heaplets. We also
  present a separation operation on finite vectors, which leads to a
  notion of convolution-based parallelism for linear transformations.
\item Both kinds of instances are combined into a new algebraic
  approach to stream interval functions and predicates, which allow
  the logical analysis of trajectories of dynamic and real time
  systems. This provides a convolution-based spatial concurrency
  operation in addition to the conventional temporal chop operator.
\item We illustrate how convolution as separating conjunction allows
  us to derive the frame rule of separation logic by simple equational
  reasoning.
\end{itemize}

Our lifting results are generic in the following sense: after setting
up a suitable partial semigroup---words under concatenation, closed
intervals under chop, multisets under addition or resource monoids
under resource aggregation---the space of all functions into a
quantale automatically forms a quantale with convolution as
multiplication.  When the target quantale is formed by the booleans,
power series can be identified with and predicates and characteristic
functions for sets, as their extensions. Multiplication in the
booleans becomes conjunction and convolution then reduces to
\begin{equation*}
  (f\otimes g)\, x = \sum_{x=y\circ z} f\, y \sqcap g\, z.
\end{equation*}
If $S$ is a set of resources and $\circ$ a (commutative) notion of
resource aggregation, then convolution is separating conjunction. If
$S$ is a set of closed intervals and $\circ$ splits an interval into
two disjoint parts, then convolution is chop. In that sense,
separating conjunction can be seen as a language product over
resources and chop as a language product over intervals. Here and in
all similar cases, our lifting result implies that the predicates of
type $S \to \mathbb{B}$ form an assertion quantale; in the first case
that of separation logic; in the second one that of interval
logics. But our results cover models beyond the booleans, for instance
probabilistic or weighted predicates or other kinds of functions. In
general, the convolution has a strongly spatial and concurrent flavour
whenever the operations $\circ$ and $\odot$ are commutative.

Similarly, for all instances of this lifting, the construction of
Hoare logics is generic because it works for abitrary
quantales~\cite{HMSW11}. Finally, due to the emphasis on functions
instead of sets, the approach is constructive so long as the
underlying source and target algebras are.

The remainder of this article is organised as
follows. Section~\ref{sec:algebr-prel} recalls the basic algebraic
structures needed. Section~\ref{sec:fpsquantale} introduces our
approach to power series with partial semigroups as source algebras
and quantales as target algebras; it also proves our basic lifting
result. Section~\ref{sec:booleancase} discusses the case of power
series into the boolean quantale, when convolution becomes a possibly
non-commutative notion of separating
conjunction. Section~\ref{sec:fpsquantaleexamples}
and~\ref{sec:fpscomquantaleexamples} present non-commutative and
commutative instances of our lifting lemma;
Section~\ref{sec:fpsquantaleexamples} discussing, among others, the
chop operation over intervals and
Section~\ref{sec:fpscomquantaleexamples} focusing on variants of
separating conjunction. Section~\ref{sec:transformers} shows how state
and predicate transformers arise in the power series setting.
Section~\ref{sec:partial-formal-power} presents a lifting result for
power series into partial quantales with an
example. Section~\ref{sec:formal-power-series} generalises the lifting
result to bi-semigroups and bi-quantales and presents two
examples. Section~\ref{sec:fpsbiquantale} generalises the result to
power series from two semigroups into a bi-quantale;
Section~\ref{sec:biquantale-examples} presents in particular the
quantale of stream interval functions, which is based on this
generalisation.  Section~\ref{sec:futuristic} further generalises the
approach to applications with finite and infinite
behaviours. Section~\ref{sec:interchange} shows that the interchange
laws of concurrent Kleene algebras fail in general power series
quantales. Based on this, Section~\ref{sec:hoare} discusses how
generic Hoare logics can be developed over power series
quantales. Section~\ref{sec:frame} shows how the approach can be used
for deriving the frame rule of separation logic, using convolution as
the algebraic notion of separating
conjunction. Section~\ref{sec:conclusion} contains a conclusion.

%%%%%%%%%%%%%%%%%%%%%%%%%%%%%%%%%%%%%%%%%%%%%%%%%%%%%%%%%%%%%%%%%%%%%

\section{Algebraic Preliminaries}
\label{sec:algebr-prel}

In this section, we briefly recall the most important mathematical
structures used in this article: partial semigroups and monoids, their
commutative variants, semigroups and dioid as well as quantales. We
also consider such structures with two operations of composition or
multiplication, that is, bi-semigroups, bi-monoids, bi-semirings and
bi-quantales.

\paragraph{Semigroups.}
A \emph{partial semigroup} is a structure $(S,\cdot,\bot)$ such that
$(S,\cdot)$ is a semigroup and $x\cdot \bot =\bot = \bot \cdot x$
holds for all $x \in S$.  It follows that $\bot \notin S$, which is
significant for various definitions in this article. A \emph{partial
  monoid} is a partial semigroup with multiplicative unit $1$. We
often write $(S,\cdot)$ for partial semigroups and $(S,\cdot, 1)$ for
partial monoids, leaving $\bot$ implicit. A (partial) semigroup $S$ is
\emph{commutative} if $x\cdot y=y\cdot x$ for all $x,y\in
S$. Henceforth, we use $\cdot$ for a general multiplication and $\ast$
for a commutative one.

An important property of semigroups is \emph{opposition duality}.  For
every semigroup $(S,\cdot)$, the structure $(S,\odot)$ with $x\odot
y=y\cdot x$ for all $x,y\in S$ forms a semigroup; the \emph{opposite}
of $S$. Similarly, the opposite of a monoid is a monoid.

The definitions of semigroups and monoids generalise to $n$
operations, but we are mainly interested in the case $n=2$. A
\emph{partial bi-semigroup} is a structure $(S,\circ,\bullet)$ such
that $(S,\circ)$ and $(S,\bullet)$ are partial semigroups. 
\emph{Partial bi-monoids} $(S,\circ,\bullet,1,1')$ can be obtained from them as
standard.

\paragraph{Semirings.}
A \emph{semiring} is a structure $(S,+,\cdot,0)$ such that $(S,+,0)$
is a commutative monoid, $(S,\cdot)$ a semigroup, and the
distributivity laws $x\cdot (y+z)=x\cdot y+x\cdot z$ and $(x+y)\cdot
z= x\cdot z+y\cdot z$ as well as the annihilation laws $0\cdot x= 0$
and $x\cdot 0=0$ hold. A semiring is \emph{unital} if the
multiplicative reduct is a monoid (with unit $1$). A \emph{dioid} is
an additively idempotent semiring $S$, that is, $x+x=x$ holds for all
$x\in S$. The additive reduct of a dioid thus forms a semilattice with
order defined by $x\le y\Leftrightarrow x+y=y$.  Obviously, the
classes of semirings and dioids are closed under opposition duality.

A \emph{bi-semiring} is a structure 
$(S,+,\circ,\bullet,0)$ such that $(S,+,\circ,0)$ and 
$(S,+,\bullet,0)$ are semirings; a \emph{trioid} is an additively 
idempotent bi-semiring. A bi-semiring or trioid is \emph{unital} if 
the underlying bi-semigroup is a bi-monoid. 

\paragraph{Quantales.}
A \emph{quantale} is a structure $(Q,\le,\cdot)$ such that $(Q,\le)$
is a complete lattice, $(Q,\cdot)$ is a semigroup and the
distributivity axioms
\begin{equation*}
  x\cdot (\sum_{i\in I}y_i) = \sum_{i\in I} (x\cdot y_i),\qquad (\sum_{i\in I} x_i)\cdot y = \sum_{i\in I}(x_i\cdot y) 
\end{equation*}
hold, where $\sum X$ denotes the supremum of a set $X\subseteq
Q$. Similarly, we write $\prod X$ for the infimum of $X$.  The
distributivity laws imply, in particular, the  isotonicity
laws
\begin{equation*}
  x
  \le y  \Rightarrow   z\cdot x \le z\cdot y, \qquad
  x\le y  \Rightarrow  
  x\cdot z \le y\cdot z.
\end{equation*}  
A quantale is \emph{commutative} and
\emph{partial} if the underlying semigroup is as well; \emph{unital} if the
underlying semigroup is a monoid; and \emph{distributive} if the
infinite distributivity laws
\begin{equation*}
  x\sqcap (\sum_{i\in I} y_i) = \sum_{i\in I} (x\sqcap y_i),\qquad 
  x+ (\prod_{i\in I} y_i) = \prod_{i\in I} (x+ y_i) 
\end{equation*}
hold. A \emph{boolean quantale} is a distributive quantale in which
every element has a complement. 

The boolean unital quantale $\mathbb{B}$, where multiplication $\cdot$
coincides with meet, plays an important role in this article.

A \emph{bi-quantale} is a structure $(Q,\le,\circ,\bullet)$ such that
$(Q,\le,\circ)$ and $(Q,\le,\bullet)$ are quantales. It is unital if
the two underlying semigroups are monoids.

It is easy to see that every (unital) quantale is a (unital) dioid and
every (unital) bi-quantale a (unital) trioid. In particular,
$0=\sum\emptyset =\sum_{i\in\emptyset} x_i$ and annihilation laws as
in dioids follow from this as special cases of distributivity.

%%%%%%%%%%%%%%%%%%%%%%%%%%%%%%%%%%%%%%%%%%%%%%%%%%%%%

\section{Power Series Quantales}\label{sec:fpsquantale}

Formal (or rational) power series ~\cite{BerstelReutenauer} have been
studied in formal language theory for decades. For brevity, we call
them \emph{power series} in this article.
In formal language theory, a power series is simply a function from the
free monoid $\Al^\ast$ over a finite alphabet $\Al$ into a suitable
algebra $Q$, usually a semiring or dioid $(Q,+,\cdot,0,1)$.

Operations on $f,g:\Al^\ast\to Q$ are defined as follows. Addition is
lifted pointwise, that is, $(f+g)\, x = f\, x + g\, x$. Multiplication is
given by the \emph{convolution} or \emph{Cauchy product}
\begin{equation*}
  (f\cdot g)\, x =\sum_{x=yz} f\, y\cdot g\, z,
\end{equation*}
where $yz$ denotes word concatenation and the sum in the convolution
is finite since finite words can only be split in finitely many ways
into prefix/suffix pairs. Furthermore, the \emph{empty power series}
$\zero$ maps every word to $0$, whereas the \emph{unit power series}
$\unit$ maps the empty word to $1$ and all other words to $0$.

We write $Q^{\Al^\ast}$ for the set of power series from $\Al^\ast$ to
$Q$ and, more generally, $Q^S$ for the class of functions of type
$S\to Q$. The following lifting result is well known.
\begin{proposition}\label{prop:fpslifting}
  If $(Q,+,\cdot,0,1)$ is a semiring (dioid), then so is
  $(Q^{\Al^\ast},+,\cdot,\zero,\unit)$.
\end{proposition}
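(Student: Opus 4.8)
The plan is to verify that the lifted structure $(Q^{\Al^\ast},+,\cdot,\zero,\unit)$ satisfies each semiring axiom, working componentwise and reducing every identity to the corresponding axiom in $Q$. The strategy throughout is that equations between power series are equations between functions, hence they hold iff they hold after applying both sides to an arbitrary word $x\in\Al^\ast$.

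First I would dispose of the additive structure. Since addition is lifted pointwise, $(f+g)\,x = f\,x + g\,x$, the commutative-monoid laws for $(Q^{\Al^\ast},+,\zero)$ follow immediately and pointwise from those for $(Q,+,0)$: associativity, commutativity, and the unit law $(\zero + f)\,x = 0 + f\,x = f\,x$. For the dioid case, additive idempotence $f+f=f$ likewise transfers pointwise from $x+x=x$ in $Q$. None of this involves convolution and so presents no difficulty.

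The substantive work is associativity of convolution and the distributivity and annihilation laws. For associativity I would compute $((f\cdot g)\cdot h)\,x$ and $(f\cdot(g\cdot h))\,x$ by unfolding the convolution definition and then interchanging the order of the (finite) sums; both sides should reduce to a single sum $\sum_{x=uvw} f\,u\cdot g\,v\cdot h\,w$ indexed by the ways of splitting $x$ into three consecutive factors. Concretely, $((f\cdot g)\cdot h)\,x = \sum_{x=pw}\big(\sum_{p=uv} f\,u\cdot g\,v\big)\cdot h\,w$, and using right-distributivity in $Q$ to pull $h\,w$ inside, together with associativity of concatenation to reindex the splittings $x=pw,\ p=uv$ as $x=uvw$, yields the symmetric three-factor sum; the other bracketing gives the same sum by the mirror argument. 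The distributivity laws $f\cdot(g+h)=f\cdot g + f\cdot h$ follow by applying the $Q$-distributivity inside each summand and then splitting the finite sum, and the annihilation laws $\zero\cdot f = \zero = f\cdot\zero$ follow because every summand contains a factor $0$. Finally, for the unital case, $(\unit\cdot f)\,x = \sum_{x=yz}\unit\,y\cdot f\,z$ collapses to the single term with $y$ the empty word (all others vanish since $\unit$ maps nonempty words to $0$), giving $1\cdot f\,x = f\,x$, and symmetrically on the right.

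I expect the main obstacle to be the bookkeeping in the associativity argument: one must justify that reindexing the nested finite sums over $\{(p,w):x=pw\}$ and $\{(u,v):p=uv\}$ against the single index set $\{(u,v,w):x=uvw\}$ is a genuine bijection, which relies on associativity of concatenation and on the fact that, in the free monoid, a factorisation into a given number of factors is unique once the factors are fixed. Because $\Al^\ast$ is free, each $x$ has only finitely many such factorisations, so the sums are finite and the interchange of summation needs no convergence hypothesis; this is exactly where the finiteness remark in the surrounding text is used. Everything else is a routine pointwise transfer of the $Q$-axioms.
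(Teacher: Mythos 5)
Your proposal is correct and is exactly the standard argument: the paper itself gives no proof of this proposition, citing it as a well-known result from formal language theory, and the later proof of Theorem~\ref{thm:quantale-lifting} implicitly relies on precisely the reindexing of three-fold factorisations $x=uvw$ that you spell out for associativity. The only point worth stressing, which you do correctly, is that the bijection between nested splittings $\{(p,w,u,v): x=pw,\ p=uv\}$ and triples $\{(u,v,w): x=uvw\}$ together with finiteness of factorisations in the free monoid is all that is needed, so no convergence issues arise.
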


This construction generalises from free monoids over finite alphabets
to arbitrary partial semigroups or monoids.  The sums in convolutions
then become infinite due to infinitely many possible decompositions of
elements. Here, due to potential divergence, these sums may not
exist. However, we usually consider target algebras in which addition
is idempotent and sums corresponds to suprema. The existence of
arbitrary suprema can then be covered by completeness assumptions.

We fix suitable algebraic structures $S$ and $Q$.  First, we merely
assume that $S$ is a set, but for more powerful lifting results it is
required to be a partial semigroup or partial monoid.

For a family of functions $f_i:S\to Q$
and $i\in I$ we define
\begin{equation*}
  (\sum_{i\in I} f_i)\, x = \sum_{i\in I} f_i\, x,
\end{equation*}
whenever the supremum in $Q$ at the right-hand side exists. This
comprises 
\begin{equation*}
(f+g)\, x = f\, x + g\, x
\end{equation*}
as a special case.  Since $x$ ranges over $S$, the constant $\bot$ is
excluded as a value.  Another special case is
\begin{equation*}
  (\sum_{i\in\emptyset} f_i)\, x = (\sum \emptyset)\, x =
  \sum_{i\in\emptyset} f_i\, x = 0.
\end{equation*}
Hence, in particular, $\sum_{i\in\emptyset} f_i=\lambda x.\ 0$ and we
write $\zero$ for this function.

We define the convolution
\begin{equation*}
  (f\cdot g)\, x = \sum_{x=y\cdot z} f\, y\cdot g\, z,
\end{equation*}
where the multiplication symbol is overloaded to be used on $S$, $Q$
and $Q^S$. Again, this requires that the supremum in the right-hand
side exists in $Q$.  In the expression $x=y\cdot z$, the constant
$\bot$ is again excluded as a value. Undefined splittings of $x$ are
thus excluded from contributing to convolutions.

Finally, whenever $S$ and $Q$ are endowed with suitable units, we
define $\unit : S\to Q$ as
\begin{equation*}
  \unit\, x = 
  \begin{cases}
    1, & \text{if } x = 1,\\
    0, & \text{otherwise},
  \end{cases}
\end{equation*}
as for formal languages. 
% Once more notation for $0$ and $1$ is overloaded.

Theorem~\ref{thm:quantale-lifting}, the main result in this section,
shows that quantale laws lift from the algebra $Q$ to the function
space $Q^S$ of power series under these definitions. On the way to
this result we recall that semilattice and lattice structures lift to
function spaces, a fundamental result of domain
theory~\cite{AbramskyJung}.

\begin{lemma}\label{lem:semilattice-lifting}
  Let $S$ be a set. If $(L,+,0)$ is a semilattice with least element
  $0$ then so is $(L^S,+,\zero)$. If $L$ is a complete lattice, then so is
  $L^S$.
\end{lemma}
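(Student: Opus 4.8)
The plan is to lift every structural property pointwise from $L$ to $L^S$, using the definitions $(f+g)\, x = f\, x + g\, x$ and $\zero = \lambda x.\ 0$. The whole argument rests on function extensionality: two functions in $L^S$ are equal iff they agree at every $x \in S$, so each lifted law reduces to the corresponding law in $L$ evaluated pointwise.

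First I would establish the semilattice laws for $(L^S,+)$. For associativity, commutativity and idempotence, I would fix an arbitrary $x \in S$, evaluate both sides at $x$, and reduce the equation to the corresponding law in $L$; since $x$ is arbitrary, equality of the functions follows. For the least element, I would verify $\zero + f = f$ by computing $(\zero + f)\, x = 0 + f\, x = f\, x$, using that $0$ is the least element of $L$, equivalently the unit of $+$.

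Next I would identify the induced order. The semilattice order on $L^S$ is $f \le g \Leftrightarrow f + g = g$, and I would show it coincides with the pointwise order $f\, x \le g\, x$ for all $x \in S$, by evaluating $f + g = g$ at each point and appealing to the definition of the order in $L$. This identification is what allows completeness to transfer cleanly. For the second claim, assuming $L$ is a complete lattice, I would show that every family $\{f_i\}_{i \in I} \subseteq L^S$ has a supremum, defining $(\sum_{i \in I} f_i)\, x = \sum_{i \in I} f_i\, x$, which exists in $L$ by completeness. Using the pointwise characterisation of the order, both upper-boundedness ($f_j \le \sum_{i \in I} f_i$ for each $j$) and leastness among upper bounds reduce to the corresponding supremum properties in $L$ applied at each $x$. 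Since a complete lattice is exactly a poset in which all suprema exist, this suffices to conclude that $L^S$ is a complete lattice.

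The main obstacle, such as it is, is not any single calculation but the bookkeeping around the order: one must confirm that the order induced by the lifted semilattice operation genuinely is the pointwise order before completeness can be transferred, since suprema are only meaningful relative to that order. Everything else follows by extensionality and a pointwise appeal to the laws of $L$.
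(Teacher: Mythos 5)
Your proposal is correct and follows essentially the same route as the paper: pointwise lifting of the semilattice laws, the induced order $f\le g\Leftrightarrow f+g=g$ (which is the pointwise order), pointwise suprema, and the observation that a poset with all suprema is a complete lattice. The paper merely compresses the pointwise verification into a reference to its Proposition~1, whereas you spell it out, including the (worthwhile but routine) check that the induced order coincides with the pointwise one.
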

\begin{proof}~ The semilattice lifting is covered by
  Proposition~\ref{prop:fpslifting}. As usual, $L^S$ is ordered by
  $f\le g\Leftrightarrow f+g=g$, and $\zero \le f$ for all $f\in L^S$.

  If arbitrary suprema exist in $L$, then completeness lifts to
  $L^S$ by definition of $\sum_{i\in I}f_i$.  Finally, every complete
  join-semilattice is a complete lattice.
\end{proof}

Infima, if they exist, are defined like suprema by pointwise lifting as
\begin{equation*}
  (\prod_{i\in I}f_i)\, x = \prod_{i\in I}f_i\, x,
\end{equation*}
thus $(f\sqcap g)\, x = f\l x\sqcap g\
x$. \reflem{lem:semilattice-lifting} can then be strengthened.
\begin{lemma}\label{lem:lattice-lifting}
  Let $S$ be a set. If $(D,+,\sqcap,0)$ is a (distributive) lattice with
  least element $0$, then so is $(D^S,+,\sqcap,\zero)$. Completeness and infinite distributivity laws between infima and suprema lift from $D$ to $D^S$.
\end{lemma}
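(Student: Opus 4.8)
The plan is to reuse \reflem{lem:semilattice-lifting} for everything it already supplies---that $(D^S,+,\zero)$ is a join-semilattice with least element $\zero$, that the pointwise order coincides with $f\le g\Leftrightarrow f+g=g$, and that completeness of $D$ lifts to $D^S$ with suprema computed pointwise---and then to add only the meet, the absorption laws connecting $+$ and $\sqcap$, distributivity, and the two infinitary distributivity laws.

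The core observation is that all the \emph{finitary} lattice axioms (associativity, commutativity and idempotence of each of $+$ and $\sqcap$, the two absorption laws, and, in the distributive case, $f\sqcap(g+h)=(f\sqcap g)+(f\sqcap h)$) are universally quantified equations, while the operations on $D^S$ are defined pointwise. Hence any such identity holds in $D^S$ if and only if it holds at every point: for an identity $s=t$ and any $x\in S$, evaluating both sides at $x$ reduces to the corresponding identity in $D$ applied to the arguments $f_i\,x$. Since these hold in the (distributive) lattice $D$, they hold in $D^S$. Equivalently, $D^S$ is the direct power $\prod_{x\in S}D$ and equational classes are closed under products. This immediately yields that $(D^S,+,\sqcap,\zero)$ is a (distributive) lattice with least element $\zero$.

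For the order-theoretic content I would verify that the pointwise infimum $(\prod_{i\in I}f_i)\,x=\prod_{i\in I}f_i\,x$ really is the greatest lower bound in the pointwise order: $\prod_{i}f_i\le f_j$ for each $j$ because $\prod_i f_i\,x\le f_j\,x$ at every $x$, and if $g\le f_j$ for all $j$ then $g\le\prod_i f_i$ since $g\,x\le\prod_i f_i\,x$ pointwise; the dual statement for suprema is already covered by \reflem{lem:semilattice-lifting}. Thus completeness lifts in both directions.

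The only step that genuinely uses the new hypothesis---rather than mere closure of equational classes under powers---is the pair of \emph{infinitary} distributivity laws, since these are not finitary identities. Here I would argue pointwise directly: for the meet--sup law, at any $x\in S$,
\begin{equation*}
  \Bigl(f\sqcap\sum_{i\in I}g_i\Bigr) x = f\,x\sqcap\sum_{i\in I}g_i\,x = \sum_{i\in I}\bigl(f\,x\sqcap g_i\,x\bigr) = \Bigl(\sum_{i\in I}(f\sqcap g_i)\Bigr) x,
\end{equation*}
where the middle equality is exactly infinite distributivity in $D$ and the outer equalities unfold the pointwise definitions of $\sqcap$ and $\sum$. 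The join--inf law follows dually. I expect this infinitary step to be the only real obstacle, and it is a mild one: the finitary laws come for free from product closure, whereas here one must check that the arbitrary suprema and infima in $D^S$ are computed pointwise before invoking the corresponding law in $D$.
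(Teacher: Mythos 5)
Your proof is correct and follows essentially the same route as the paper: the finitary lattice axioms (absorption, distributivity) are verified pointwise---your direct-power/equational-class framing is just a more abstract packaging of the same check---and the infinite distributivity law is established by exactly the pointwise computation the paper uses, with completeness inherited from Lemma~\ref{lem:semilattice-lifting}. Your extra verification that pointwise infima are genuine greatest lower bounds is a small addition the paper leaves implicit, but it does not change the argument.
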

\begin{proof}
  The join- and meet-semilattice laws for $+$ and $\sqcap$ follow from
  Lemma~\ref{lem:semilattice-lifting}. We need to verify absorption
  and distributivity. Let $f,g,h:S\to D$ and $x \in S$.
  \begin{itemize}
  \item $(f\sqcap (f+g))\, x = f\, x \sqcap (f\, x + g\, x) = f\, x$ by
    absorption on $D$. The proof of $f+(f\sqcap
    g)=f$ is lattice dual.
  \item The finite distributivity laws are special cases of the
    infinite ones below.
  \end{itemize}
  Completeness is covered by \reflem{lem:semilattice-lifting}. For
  infinite distributivity,
\begin{equation*}
  (f\sqcap \sum_{i\in I} g_i)\, x = f\, x \sqcap \sum_{i\in I}
  g_i\, x = \sum_{i\in I} f\, x \sqcap g_i\, x = \sum_{i\in I} (f\sqcap
  g_i)\, x = (\sum_{i\in I} f\sqcap g_i)\, x.
\end{equation*} 
The other distributivity law then follows from lattice duality.
\end{proof}

The final lifting result in this section deals with multiplicative
structure as well. This requires $S$ to be a partial semigroup instead
of a set.
\begin{theorem}\label{thm:quantale-lifting} Let $(S,\cdot)$ be a 
  partial semigroup. If $(Q,\le,\cdot)$ is a (distributive) quantale,
  then so is $(Q^S,\le,\cdot)$. In addition, commutativity in $Q$
  lifts to $Q^S$ if $S$ is commutative; unitality in $Q$ lifts to 
  $Q^S$ if $S$ is a partial monoid. 
\end{theorem}
\begin{proof}
  Since $Q$ is a quantale, all infinite suprema and infima exist; in
  particular those needed for convolutions.

  The lifting to complete (distributive) lattices is covered by
  Lemma~\ref{lem:lattice-lifting}.  It therefore remains to check the
  multiplicative monoid laws, distributivity of multiplication and
  annihilation. For left distributivity, for instance,
    \begin{align*}
      (f\cdot \sum_{i\in I} g_i)\, x 
      = \sum_{x=y\cdot z} f\, y\cdot \sum_{i\in I} g_i\, z
      = \sum_{\substack{x=y\cdot z,\\ i\in I}} f\, y\cdot g_i\, z
      = \sum_{i\in I} (f\cdot g_i)\, x.
    \end{align*}
    The proof of right distributivity is opposition dual. 

    Left distributivity ensures associativity, the proof of which
    lifts as with rational power series
    (Proposition~\ref{prop:fpslifting}).  The restriction to partial
    semigroups is insignificant as, in $x= y\cdot z$, the constraint
    $x\in S$ only rules out contributions of $y\cdot z=\bot$. The same
    holds for unitality proofs.

    Commutativity lifts from $S$ and $Q$ as follows:
\begin{equation*}
 (f\cdot g)\, x = \sum_{x=y\cdot z} f\, y \cdot g\, z
  = \sum_{x=z\cdot y} g\, z \cdot f\, y =(g\cdot f)\, x.
\end{equation*}
\end{proof}

Once more the distributivity laws on $Q^S$ imply the annihilation laws
$\zero \cdot f= \zero$ and $f\cdot \zero = \zero$ for all $f:S\to Q$. When only finite
sums are needed, $Q$ can be assumed to be a semiring or dioid instead
of a quantale.  The following corollary to
\refthm{thm:quantale-lifting} provides an example.
\begin{corollary}\label{cor:quantale-lifting-finite} Let $(S,\cdot)$ be a finite
  partial semigroup. If $(Q,+,\cdot,0)$ is a semiring, then so is
  $(Q^S,+,\cdot,\zero)$. In addition, idempotency in $Q$ lifts to $Q^S$;
  commutativity in $Q$ lifts to $Q^S$ if $S$ is commutative; unitality
  in $Q$ lifts to $Q^S$ if $S$ is a partial monoid.
\end{corollary}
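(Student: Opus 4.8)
The plan is to derive the corollary directly from \refthm{thm:quantale-lifting} by observing that finiteness of $S$ turns every sum occurring in the construction into a finite sum, so the completeness hypothesis on the target — which in the theorem served only to guarantee the existence of the suprema appearing in convolutions — can be weakened to the mere assumption that $Q$ is a semiring. Concretely, since $S$ is finite the decomposition set $\{(y,z) : x = y\cdot z,\ y,z\in S\}$ has at most $|S|^2$ elements for each $x$, so the convolution $(f\cdot g)\, x = \sum_{x=y\cdot z} f\, y\cdot g\, z$ is a finite iterated sum in the commutative monoid $(Q,+,0)$ and is therefore well defined; when $x$ admits no decomposition the sum is empty and equals $0$, matching $\zero$.

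First I would establish the additive structure. Since the additive reduct $(Q,+,0)$ of a semiring is a commutative monoid rather than a semilattice, I cannot simply invoke \reflem{lem:semilattice-lifting}; instead the commutative monoid laws for $(Q^S,+,\zero)$ follow by an entirely routine pointwise computation, the identities $(f+g)\,x = f\,x + g\,x$ and $\zero\,x = 0$ transporting associativity, commutativity and the unit law from $Q$ to $Q^S$. The idempotency clause is then immediate: if $x+x=x$ holds in $Q$, then $(f+f)\,x = f\,x + f\,x = f\,x$, so $Q^S$ is additively idempotent whenever $Q$ is.

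Next I would treat the multiplicative axioms — associativity of convolution, the two distributivity laws and annihilation — by reusing verbatim the calculations in the proof of \refthm{thm:quantale-lifting}. The only point to check is that these calculations never used more than the finite distributivity law $x\cdot(y+z)=x\cdot y+x\cdot z$ of a semiring, together with associativity and commutativity of $+$: the reindexing steps merge two finite index sets (the decomposition set and the finite family $I$) into one, and finite distributivity, applied finitely often, suffices precisely because $S$ is finite. Annihilation $\zero\cdot f = f\cdot \zero = \zero$ again drops out of distributivity over the empty family. The commutativity clause follows from the same reindexing $x = y\cdot z \leftrightarrow x = z\cdot y$ used in the theorem, and the unitality clause from the fact that $\unit$ selects exactly the decompositions involving the unit $1$, both arguments being insensitive to whether the underlying sums are finite or infinite.

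I do not expect any genuine obstacle here: the entire force of the corollary lies in the single observation that finiteness of $S$ replaces the completeness of $Q$. The only mild care needed is to record that the additive part now rests on commutative-monoid rather than semilattice lifting, since without idempotency $(Q^S,+)$ need not be a semilattice.
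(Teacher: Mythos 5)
Your proposal is correct and matches the paper's (implicit) argument exactly: the corollary is stated without a separate proof precisely because, as you observe, finiteness of $S$ makes every sum in the construction finite, so the completeness of $Q$ used in \refthm{thm:quantale-lifting} can be replaced by the semiring axioms, with the additive part handled by routine pointwise lifting. One tiny correction: annihilation $\zero\cdot f=f\cdot\zero=\zero$ should be justified by the annihilation axiom $0\cdot q=0=q\cdot 0$ of the semiring $Q$ applied inside the (finite) convolution sum, not by ``distributivity over the empty family,'' which a bare semiring does not provide.
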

As another specialisation, \refprop{prop:fpslifting} is recovered
easily when $S$ is the free monoid over a given alphabet and $Q$ a
semiring or dioid.

%%%%%%%%%%%%%%%%%%%%%%%%%%%%%%%%%%%%%%%%%%%%%%%%%%%%%
\section{Power Series into the  Boolean Quantale}\label{sec:booleancase}

In many applications, the target quantale $Q$ is formed by the
booleans $\mathbb{B}$. Power series are then of type $S\to\mathbb{B}$
and can  be interpreted as characteristic functions or
predicates. In fact, $\mathbb{B}^S$ is isomorphic to the power set of
$S$, which, in turn is in one to one correspondence with the set of
all predicates over $S$, identifying predicates with their extensions.

In this context, \refthm{thm:quantale-lifting} specialises to the
powerset lifting of a partial semigroup or monoid $S$. For each $x\in
S$, the boolean value $f\, x$ expresses whether or not $x$ is in the set
corresponding to $f$. Powerset liftings have been studied widely in
mathematics~\cite{Goldblatt,Brink}. They have various applications in
program semantics, for instance as power domains (cf.~\cite{AbramskyJung}).

\begin{corollary}\label{cor:powerset-lifting}
  Let $S$ be a partial (commutative) semigroup. Then $\mathbb{B}^S$
  forms a (commutative) distributive quantale where $\mathbb{B}^S\cong
  2^S$, $\le$ corresponds to $\subseteq$ and convolution $\cdot$ to
  the complex product
  \begin{equation*}
    X\cdot Y =\{x\cdot y \mid x\in X\wedge y\in Y\}
  \end{equation*}
  for all $X,Y\subseteq S$. If $S$ has unit $1$, then $\mathbb{B}^S$
  has unit $\{1\}$.
\end{corollary}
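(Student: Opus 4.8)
The plan is to derive Corollary~\ref{cor:powerset-lifting} as a direct specialisation of Theorem~\ref{thm:quantale-lifting}, so that essentially no fresh algebraic work is needed beyond the boolean instantiation and the translation of notation. First I would observe that $\mathbb{B}$ is a boolean, hence distributive, quantale in which multiplication coincides with meet, and that $\mathbb{B}$ is commutative and unital (with unit $1=\top$). Since $S$ is assumed to be a partial (commutative) semigroup, Theorem~\ref{thm:quantale-lifting} applies verbatim and yields that $(\mathbb{B}^S,\le,\cdot)$ is a (commutative) distributive quantale, with unitality transferred precisely when $S$ carries a unit. This discharges the entire lattice- and quantale-theoretic content of the statement; the only remaining task is the identification of $\mathbb{B}^S$ with the concrete powerset structure.

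Next I would set up the isomorphism $\mathbb{B}^S \cong 2^S$ explicitly, sending a power series $f:S\to\mathbb{B}$ to its support $\{x\in S \mid f\, x = 1\}$, with inverse sending $X\subseteq S$ to its characteristic function. This is the standard bijection; I would note that it is an order isomorphism, since $f\le g$ in $\mathbb{B}^S$ means $f\, x \le g\, x$ pointwise in $\mathbb{B}$, which holds exactly when the support of $f$ is contained in that of $g$, so $\le$ corresponds to $\subseteq$. Because the isomorphism is a complete lattice isomorphism, suprema correspond to unions and the bottom element $\zero$ corresponds to $\emptyset$.

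The one genuinely computational step is verifying that convolution transports to the complex product, and this is where I would spend the bulk of the argument. Unfolding the definition, for characteristic functions of $X$ and $Y$ we have $(f\cdot g)\, x = \sum_{x=y\cdot z} f\, y \sqcap g\, z$, since $\cdot$ on $\mathbb{B}$ is meet and $\sum$ is supremum, i.e.\ disjunction. This value is $1$ exactly when there exists a decomposition $x = y\cdot z$ with $y\in X$ and $z\in Y$, which is precisely the membership condition $x\in X\cdot Y$ for the complex product. Here the exclusion of $\bot$ is harmless: undefined splittings $y\cdot z=\bot$ never equal $x\in S$ and so contribute nothing, matching the convention already used in Theorem~\ref{thm:quantale-lifting}.

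The main obstacle, such as it is, lies only in being careful about the $\bot$-bookkeeping and about confirming that the isomorphism is simultaneously a lattice and a multiplicative isomorphism, so that the quantale structure genuinely agrees on both sides; there is no deep difficulty, as all the hard lifting is already carried by Theorem~\ref{thm:quantale-lifting}. Finally, for the unit I would compute that $\unit$, which maps $1$ to $1\in\mathbb{B}$ and everything else to $0$, is exactly the characteristic function of the singleton $\{1\}$, so the unit of $\mathbb{B}^S$ corresponds to $\{1\}$ as claimed.
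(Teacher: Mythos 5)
Your proposal is correct and follows exactly the route the paper intends: the paper gives no explicit proof of this corollary, presenting it as a direct specialisation of Theorem~\ref{thm:quantale-lifting} to $Q=\mathbb{B}$, with the identification of $\mathbb{B}^S$ with $2^S$ via characteristic functions and of convolution with the complex product spelled out informally in the surrounding discussion. Your explicit verification of the order isomorphism, the $\bot$-bookkeeping, and the transport of convolution and unit matches that intended argument point for point.
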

Various instances of Corollary~\ref{cor:powerset-lifting} are discussed
in Sections~\ref{sec:fpsquantaleexamples}
and~\ref{sec:fpscomquantaleexamples}.

The quantale $\mathbb{B}^S$ carries a natural logical structure with
elements of $\mathbb{B}^S$ corresponding to predicates, suprema to
existential quantification, infima to universal quantification and the
lattice order to implication.  In particular, $+$ corresponds to
disjunction and $\sqcap$ to conjunction.

More interesting is the logical interpretation of convolution
\begin{equation*}
(f\cdot g)\, x =\sum_{x=y\cdot
  z} f\, y\cdot g\, z
\end{equation*}
in the boolean quantale $\mathbb{B}^S$. The expression $x= y\cdot z$
denotes the decomposition or separation of the semigroup element $x$
into parts $y$ and $z$. The composition $f\ y\cdot g\ z=f\ y\sqcap g\
z$ in $\mathbb{B}$ models the conjunction of predicate $f$ applied to
$y$ with predicate $g$ applied to $z$. Finally, the supremum $\sum$
models the existential quantification over these
conjunctions with respect to all possible decompositions of $x$.

The commutative case of Corollary~\ref{cor:powerset-lifting} is
immediately relevant to separation logic. In this context, the partial
commutative semigroup $(S,\ast)$ is know as the \emph{resource
  semigroup}~\cite{COY07}; it provides an algebraic abstraction of the
heap. Its powerset lifting $\mathbb{B}^S$ captures the algebra of
resource predicates that form the assertions of an extended Hoare
logic---the assertion quantale of separation logic.  In this assertion
quantale, separating conjunction is precisely convolution: the product
$x=y\ast z$ on the resource semigroup $S$ decomposes or separates the
resource or heap $x$ into parts of heaplets $y$ and $z$ and the
product $f\ y\ast g\ z=f\ y\sqcap g\ z$ in $\mathbb{B}$ once more
conjoins $f\ y$ and $g\ z$; hence $x=y\ast z$ separates whereas $f\
y\ast g\ z=f\ y\sqcap g\ z$ conjoins.  The concrete case of the heap
is considered in more detail in
Example~\ref{ex:separating-conjunction-heaplets}.

The power series approach thus yields a simple algebraic view on a
lifting to function spaces in which the algebraic operation of
convolution into the booleans allows various interpretations,
including that of a complex product, that of separating
conjunction---commutative or non-commutative---and that of separating
conjunction as a complex product. In the commutative setting it gives
a simple account of the category-theoretical approach to O'Hearn and
Pym's logic of bunched implication~\cite{OHearnP99} in which
convolution corresponds to coends and the quantale lifting is embodied
by Day's construction~\cite{Day}.

%%%%%%%%%%%%%%%%%%%%%%%%%%%%%%%%%%%%%%%%%%%%%%%%%%%%%%%%%%%%%%%%%%%%%%%%%%

\section{Non-Commutative Examples}\label{sec:fpsquantaleexamples}

After the conceptual development of the previous sections we now
discuss a series of examples which underpin the universality and
relevance of the notion of convolution in computing. All of them can
be obtained as instances of \refthm{thm:quantale-lifting} after
setting up partial semigroups or monoids appropriately. For all these
structures, the lifting to the function space is then generic and
automatic. The booleans often form a particularly interesting target
quantale.

This section considers only examples with a non-commutative notion of
convolution; for commutative examples see Section~\ref{sec:fpscomquantaleexamples}.

\begin{example}[Formal Languages]\label{ex:formal-languages}
  Let $(\Al^\ast,\cdot,\varepsilon)$ be the free monoid generated by
  the finite alphabet $\Al$ with $\varepsilon$ denoting the empty
  word. Let $Q$ form a distributive unital quantale. Then
  $Q^{\Al^\ast}$ forms a distributive unital quantale as well by
  Theorem~\ref{thm:quantale-lifting}. More precisely, since suprema in
  convolutions are always finite, one obtains the unital dioid
  $(Q^{\Al^\ast},+,\cdot,\zero,\unit)$ by lifting from a dioid
  $(Q,+,\cdot,0,1)$. This is the well known rational power series
  dioid of formal language theory. For $Q=\mathbb{B}$ one obtains, by
  Corollary~\ref{cor:powerset-lifting}, the quantale
  $\mathbb{B}^{\Al^\ast}$ of formal languages over $\Al$. \qed
\end{example}

\begin{example}[Binary Relations]\label{ex:binary-relations}
  For a set $A$ consider the partial semigroup $(A\times A,\cdot)$
  with $\cdot$ defined, for all $a,b,c,d\in A$, by
  \begin{equation*}
    (a,b)\cdot
  (c,d)=
  \begin{cases}
    (a,d), & \text{ if } b=c,\\
    \bot, & \text{ otherwise}.
  \end{cases}
\end{equation*}
For $Q=\mathbb{B}$, Theorem~\ref{thm:quantale-lifting} (or its
Corollary~\ref{cor:powerset-lifting}) ensures that
$(\mathbb{B}^{A\times A},\le,\cdot)$, which is isomorphic to
$(2^{A\times A},\subseteq,\cdot)$, is the quantale of binary relations
under union, intersection, relational composition and the empty
relation.

More specifically, with every power series $f$ we associate a binary
relation $R_f$ defined by $(a,b)\in R_f \Leftrightarrow f\
(a,b)=1$. The empty relation $\emptyset$ obviously corresponds to the
power series defined by $\zero\, (a,b) = 0$ for all $a,b\in A$. Relational
composition is given by convolution
\begin{equation*}
  (f\cdot g)\, (a,b)=\sum_{c\in A} f\, (a,c)\cdot g\, (c,b).
\end{equation*}
It can then be checked  that $R_{f\cdot g}=R_f\cdot R_g=\{(a,b)
\mid \exists c. (a,c) \in R_f\wedge (c,b)\in R_g\}$.

The unit relation cannot be lifted from a unit in $A\times A$ because
$A\times A$ has no unit. Instead it can be defined on
$\mathbb{B}^{A\times A}$ directly as
\begin{equation*}
  \unit\, (a,b) = 
  \begin{cases}
    1, &\text{ if } a=b,\\
    0, &\text{ otherwise}.
  \end{cases}
\end{equation*}
\qed
\end{example}

The constructions for relations generalise, for instance, to
probabilistic or fuzzy relations where $Q\neq\mathbb{B}$, but this is
not explored any further. Instead we consider the case of matrices.

\begin{example}[Matrices]\label{ex:matrices}
  Matrices are functions $f:A_1\times A_2\to B$, where $A_1$ and $A_2$
  are index sets and $Q$ is a suitable coefficient algebra. For the
  sake of simplicity we restrict our attention to square matrices with
  $A_1=A_2= A$. General non-square matrices require more complex
  partiality conditions.

  The development is similar to binary relations, but uses coefficient
  algebras beyond $\mathbb{B}$. It is easy to check that matrix
  addition is modelled by
\begin{equation*}
(f+g)\, (i,j)= f\, (i,j)+g\, (i,j),
\end{equation*}
whereas matrix multiplication is given by
convolution
\begin{equation*}
  (f\cdot g)\, (i,j)=\sum_{k\in A}f\, (i,k)\cdot g\, (k,j),
\end{equation*}
under suitable restrictions to guarantee the existence of sums, such
as finiteness of $A$ or idempotency of additionin $Q$. The zero and
unit matrices are defined as in the relational case.
\begin{equation*}
  \unit\, (i,j) =
  \begin{cases}
    1,& \text{ if } i=j,\\
    0, & \text{ otherwise},
  \end{cases}
\qquad\qquad
\zero\, (i,j) = 0.
\end{equation*}
\refthm{thm:quantale-lifting} then shows that quantales are closed
under matrix formation. It can easily be adapted to showing that square
matrices of finite dimension over a semiring form a semiring or that
matrices over a dioid form a dioid.\qed
\end{example}

This example not only links matrices with  power series, it also
yields a simple explanation of the well known relationship between
binary relations and boolean matrices. If a relation $R\subseteq
A\times A$ is modelled as $f_R:A\times A\to \mathbb{B}$ defined by
$f_R\ (a,b)=1 \Leftrightarrow (a,b)\in R$ as indicated above, then it
\emph{is} a boolean matrix.

\begin{example}[Finite Automata]\label{ex:finite-automata} 
  Suppose $\State$ is a set of state symbols, $\Al$ an alphabet, $i
  \in \State$ the initial state and $F \subseteq \State$ a set of
  final states. Conway~\cite{Conway71} has shown that transition
  relations $\delta$ of finite automata $(\State,\Al,\delta,i,F)$ can
  be modelled in terms of finite matrices of type $\State\times
  \State\to \mathsf{Rex}(\Al)$ into the algebra of regular expressions
  $\mathsf{Rex}(\Al)$ over $\Al$, for instance a Kleene algebra with
  constants from $\Al$. Consider the following automaton and
  transition matrix as an example.
\begin{equation*}
\def\labelstyle{\normalsize}
%\entrymodifiers={=<3ex,3ex>[o][F-]}
\xymatrix{
 {}\ar[r] &*++[o][F-]{1} \ar^{a,b}@(ul,ur)\ar^b[r] & *++[o][F-]{2} \ar^a[r] & *++[o][F=]{3}
}
\qquad\qquad
   \begin{pmatrix}
     a+b&b&0\\
     0&0&a\\
     0&0&0
   \end{pmatrix}
\end{equation*}
More generally, the full automaton, including its initial and final
state information, is captured by the following triple.
\begin{equation*}
\left[
\begin{pmatrix}
1\\
0\\
0
\end{pmatrix},
   \begin{pmatrix}
     a+b&b&0\\
     0&0&a\\
     0&0&0
   \end{pmatrix},
\begin{pmatrix}
0\\
0\\
1
\end{pmatrix}
\right]
\end{equation*}

It is well known that the algebra of regular expressions forms a
dioid, hence Theorem~\ref{thm:quantale-lifting} applies, showing that
transition matrices over the dioid of regular expressions form a
dioid, as in Example~\ref{ex:matrices}. Other kinds of automata, such
as probabilistic or weighted ones, can be modelled along this
line.\qed
\end{example}
In fact, it has been shown that Kleene algebras are closed under
matrix formation~\cite{Kozen91}, but the neccessary treatment of the
Kleene star is beyond the scope of this article.  In addition, it is
well known that regular languages need not be closed under general
unions, hence do not form quantales.

\begin{example}[Trace Functions]\label{ex:traces}
  Let $\State$ be a finite set of state symbols and $\Al$ a finite set
  of transition symbols, as in a finite automaton. A
  \emph{trace}~\cite{Eilenberg} is a finite word over
  $(\State\cup\Al)^\ast$ in which state and transition symbols
  alternate, starting and ending with state symbols. We write
  $T(\State,\Al)$ for the set of traces over $\State$ and $\Al$. It is
  endowed with a partial monoid structure by defining, for
  $p_1\alpha_1q_1,p_2\alpha_2q_2\in T(\State,\Al)$, the \emph{fusion
    product}
  \begin{equation*}
    p_1\alpha_1q_1\cdot p_2\alpha_2q_2 =
    \begin{cases}
      p_1\alpha_1q_1\alpha_2q_2, & \text{ if } q_1=p_2,\\
      \bot, & \text{ otherwise}.
    \end{cases}
  \end{equation*}
  Then convolution becomes 
\begin{equation*}
  (f\cdot g)\, \tau = \sum_{\tau=p\alpha_1 r\cdot r\alpha_2 q} f\, p\alpha_1 r\cdot g\, r\alpha_2 q
\end{equation*}
and Theorem~\ref{thm:quantale-lifting} implies that the set
$Q^{T(\State,\Al)}$ of trace functions into the distributive quantale
$Q$ forms a distributive quantale. If $Q$ is unital, then
$Q^{T(\State,\Al)}$ becomes unital by defining
\begin{equation*}
  \unit\, x =
  \begin{cases}
    1, & \text{ if } x\in \State,\\
    0, & \text{ otherwise}.
  \end{cases}
\end{equation*}
For $Q=\mathbb{B}$ we obtain the well known quantale of sets of
traces.

Trace functions $\mathbb{B}^{T(\Al,\State)}$ have a natural
interpretation as trace predicates. Convolution $(f\cdot g)\, \tau$
indicates the various ways in which property $f$ holds on a prefix of
trace $\tau$ whereas property $g$ holds conjunctively on the consecutive
suffix, as for instance in temporal logics over computation traces or
paths.\qed
\end{example}

Sets of traces generalise both languages and binary relations, which
are obtained by forgetting structure in the underlying partial
monoid. Another special case is given by sets of paths in a graph,
which is obtained by forgetting state labels. The explicit
construction of the corresponding paths quantale is straightforward
and therefore not shown.

\begin{example}[Interval Functions]\label{ex:interval-functions}
  Let $(P,\le)$ be a linear order and $I_P$ the set of all closed
  intervals over $P$---the empty interval being open by definition.
  For an interval $x$, let $x_{min}$ and $x_{max}$ represent 
  respectively the minimum and maximum value in $x$.

  We impose a partial semigroup structure on $I_P$ be defining the
  \emph{fusion product} on $I_P$, similar to the case of binary
  relations, traces and matrices, as
\begin{equation*}
  x \cdot y = 
    \begin{cases}
      x \cup y, & \text{if } x_{max} = y_{min}, \\
      \bot,       & \text{otherwise}.
    \end{cases}
\end{equation*}
An \emph{interval function} is a function $f:I_P\to Q$ into a suitable
algebra. Whenever $Q$ is a (distributive) quantale,
Theorem~\ref{thm:quantale-lifting} applies and $Q^{I_P}$ forms a
(distributive) quantale, too. Convolution of interval functions is
given by
\begin{equation*}
  (f\cdot g)\, x = \sum_{x = y \cdot z} f\, y \cdot g\, z. 
\end{equation*}
Like in the case of relations, the unit interval function is not
lifted from $I_P$, but defined directly as
\begin{equation*}
  \unit\, [a,b] = 
  \begin{cases}
    1, & \text{ if } a=b,\\
    0, & \text{ otherwise}.
  \end{cases}
\end{equation*}
The quantale of interval functions then becomes unital.

\emph{Interval predicates} are functions of type $I_P\to
\mathbb{B}$. Convolution of interval predicates is known as the
\emph{chop} operation~\cite{Mos00}, where $(f \cdot g)\, [a, c]$ holds if
it is possible to split interval $[a, c]$ into $[a, b]$ and $[b, c]$
such that $f\, [a, b]$ and $g\, [b, c]$ hold in conjunction.
\begin{center}
  \scalebox{1}{\input{chop.pspdftex}}
\end{center}

The meaning of an interval predicate $f\, x$ can be defined in
various ways. For instance $f$ can hold somewhere (at some point) in
$x$ or (almost) everywhere (see \cite{Mos00,ZH04}), and it is even
possible to define and use non-deterministic evaluators \cite{HBDJ13}
that enable calculations of apparent states (see \cite{DHD14}). \qed
\end{example}

Naive use of interval predicates may have undesired effects: If $f\,
x$ means that $f$ holds at each point in interval $x$, then
$(f\cdot\neg f)$ is always false, since both $f$ and $\neg f$ would
have to hold in at least one fusion point, which is impossible. An
alternative definition of interval composition without fusion
therefore seems desirable.

The duration calculus presents a solution in terms of an `almost
everywhere' operator, such that a property holds almost everywhere in
an interval if it is false in the interval for a set of points of
measure zero \cite{ZH04}. Others have defined `jump conditions'
leaving the possibility of both $f$ and $\neg f$ holding at the fusion
point open \cite{HM09}.  Here we model a third approach \cite{DHD14},
with chop formalised over non-overlapping intervals, in the power
series setting.

\begin{example}[Intervals without Fusion]\label{ex:intervals-no-fusion}
  We define a composition of contiguous intervals that avoids
  fusion. To this end we consider the set $I_P$ of intervals of the
  form $(a,b)$, $(a,b]$, $[a,b)$ and $[a,b]$, for $a,b\in P$.
  We include the empty interval
  $\emptyset$, which is by definition equal to $(a,a)$, $(a,a]$ and
  $[a,a)$ for all $a\in P$.
  The interval $x$ precedes the interval $y$, written $x \prec y$,
  if $\forall a \in x, b \in y.\ a < b$.
  The composition of intervals is defined as
  \begin{equation*}
    x \cdot y =
      \begin{cases}
        x \cup y, & \text{if } x \cup y \in I_P \text{ and } x \prec y, \\
        \bot,       & \text{otherwise}.
      \end{cases}
%    \{a,b\}\cdot \{c,d\} =
%    \begin{cases}
%      \{a,d\}, &\text{ if $\{a,b\}$ and $\{c,d\}$ are contiguous with
%      } b\le c,\\
%\bot, &\text{ otherwise}.
%    \end{cases}
  \end{equation*}

  Convolution $f\cdot g$ is then defined as
  usual. Theorem~\ref{thm:quantale-lifting} ensures once more that
  $Q^{I_P}$ forms a distributive quantale whenever $Q$ does.  The unit
  $\unit :I_P\to Q$, however, requires modification. Defining
  \begin{equation*}
    \unit\, x = 
    \begin{cases}
      1, &\text{ if } x=\emptyset,\\
      0, &\text{ otherwise},
    \end{cases}
  \end{equation*}
  it is easy to check that $(\unit \cdot f)\ x = f\, x = (f\cdot
  \unit)\, x$ for any interval $x$ and the new definition of interval
  composition. This makes the quantale $Q^{I_P}$ unital. \qed
\end{example}

The examples in this section show that the generic lifting
construction in Theorem~\ref{thm:quantale-lifting} allows a uniform
treatment of a variety of mathematical objects, including relations,
formal languages, matrices and sets of intervals. In each case, a
(partial) composition on the underlying objects needs to be defined,
e.g., on words, ordered pairs, index pairs of matrices, traces, paths
or intervals. Lifting to the function space is then generic.

Such a generic lifting has been discussed previously for languages,
relations, paths and traces in the context of an Isabelle/HOL library
with models of Kleene
algebras~\cite{ArmstrongSW-JLAMP,ArmstrongSW13}. Theorem~\ref{thm:quantale-lifting}
has, in fact, already been implemented in Isabelle.  Based on this,
the existing implementation of models of Kleene algebras can be
unified and simplified considerably.

% Iteration of interval predicates requires some caution. In the case of
% finite intervals, this requires splitting intervals into infinitely
% many parts, which may lead to Zeno-type paradoxes \cite{HofM11}. In
% the purely futuristic case, such paradoxes can often be avoided.

%%%%%%%%%%%%%%%%%%%%%%%%%%%%%%%%%%%%%%%%%%%%%%%%%%%%%%%%%%%%%%%%%%%%%%%%%%%%

\section{Commutative Examples}\label{sec:fpscomquantaleexamples}

This section provides instances of Theorem~\ref{thm:quantale-lifting}
and Corollary~\ref{cor:powerset-lifting} for the commutative case. As
discussed in Section~\ref{sec:booleancase},  this situation typically
arises when the composition of the underlying semigroup $(S,\ast)$ is
used to split resources, heaps, states, etc, in a spatial fashion,
which is in contrast to the previous section where $f \cdot g$ meant
that there was a dependency between $f$ and $g$, which often carries a
temporal meaning. One can often think of convolution instantiated to
such a spatial separation in terms of parallelism or concurrency.

In particular we instantiate Theorem~\ref{thm:quantale-lifting} to
four kinds of resource monoids based on multisets under multiset
union, sets under disjoint union, partial functions under union and
vectors. Notions of separating conjunction as convolution arises in
all these examples in a natural way. In the disjoint union and vector
examples, the relationship between convolution, separation and
concurrency becomes most apparent.  Previously, this observation of
separating conjunction as a notion of concurrency with a strongly
spatial meaning has been one of the motivations for concurrent
separation logic~\cite{COY07} and concurrent Kleene
algebra~\cite{HMSW11}.

As a preparation we show how multisets with multiset union
and sets with disjoint union arise in the power series
setting.

\begin{example}[Multisets]\label{ex:multisets}
  Let $S$ be a set and let $f:S\to\mathbb{N}$ assign a multiplicity to
  elements of $S$. Consider the max/min-plus algebra over
  $\mathbb{N}$~\cite{GondranMinoux}, which forms a commutative
  distributive quantale. Define, rather artificially, a partial
  semigroup on $S$ by stipulating
  \begin{equation*}
    x\ast y=
    \begin{cases}
      x, &\text{ if } x=y,\\
      \bot, &\text{ otherwise}.
    \end{cases}
  \end{equation*}
  Then $\mathbb{N}^S$ is the set of multisets over the set $S$ which,
  by Theorem~\ref{thm:quantale-lifting}, forms a commutative
  distributive quantale under the operations
  \begin{gather*}
    (f\uplus g)\, x = (f\ast g)\, x= \sum_{x=x\ast x} f\, x + g\, x = f\, x + g\, x,\\
    (\sum_{i\in I}f_i)\, x = \max_{i\in I}(f_i\, x),\qquad
(\prod_{i\in I}f_i)\, x = \min_{i\in I}(f_i\, x).
  \end{gather*}
  The ``convolution'' $\uplus$ is the usual multiset addition. For
  example,
\begin{align*}
 a^2b^5c\uplus ab^3d^2 &=a^3b^8cd^2,\\
a^2b^5c+ ab^3d^2 &=a^2b^5cd^2,\\
a^2b^5c\sqcap ab^3d^2 &=ab^3. 
\end{align*}
\end{example}

\begin{example}[Powersets]\label{ex:powersets}
  Under the same conditions as in Example~\ref{ex:multisets}, suppose
  that $f:S\to\mathbb{B}$ is the characteristic function which
  determines the subsets of $S$. Then $\mathbb{B}^S\cong 2^S$ reduces
  to the complete distributive lattice of powersets of $S$; the ring
  of sets over $S$. In particular, $f\uplus g= \max(f,g)$. This
  lifting implements the powerset functor. \qed
\end{example}

Theorem~\ref{thm:quantale-lifting} shows that the function space $Q^S$
from a partial commutative semigroup $S$ into a commutative quantale
$Q$ forms a commutative quantale. In addition, we have seen in
Section~\ref{sec:booleancase}, that, in that case, $\mathbb{B}^S$ may
yield the quantale of resource predicates in which convolution is
separating conjunction. We now discuss four special cases of
separating conjunction.

\begin{example}[Separating Conjunction on Multisets]\label{ex:separating-conjunction-ms}
  The free commutative monoid $(\Al^\ast,\ast,0)$ generated by the
  alphabet $\Al$ is isomorphic to the set of all multisets over $\Al$
  with $\ast$ being multiset addition $\uplus$. By
  Theorem~\ref{thm:quantale-lifting}, $Q^{\Al^\ast}$ forms a
  commutative quantale if $Q$ does; distributivity and unitality lift
  as usual.

  Convolution $ (f\ast g)\, x =\sum_{x=y\ast z}f\, y\ast g\, z$ separates
  the multiset or resource $x$ in all possible ways and then applies
  the functions $f$ and $g$ to the result, depending on the
  interpretation of multiplication in $Q$. For $Q=\mathbb{B}$,
  $\mathbb{B}^{\Al^\ast}$ forms the resource predicate quantale over
  multisets. Convolution $f\ast g$ is separating conjunction as a
  complex product on sets
  of multisets based on multiset addition as a separator:
  \begin{equation*}
    (f\ast g)\, x =\sum_{x=y\uplus z} f\, y\sqcap g\, z.
  \end{equation*}
  \qed
\end{example}
In many contexts, multisets form a paradigmatic data type for resources.

\begin{example}[Separating Conjunction on Sets]\label{ex:separating-conjunction-sets}
  The free commutative idempotent monoid $(\Al^\ast,\ast,0)$ generated
  by the alphabet $\Al$ is isomorphic to $2^\Al$ with $\ast$ being
  union. More interesting in our context is the consideration of
  disjoint union, which is defined as
  \begin{equation*}
    x\oplus y =
    \begin{cases}
      x\cup y, & \text{ if } x\cap y=0,\\
      \bot, & \text{ otherwise}.
    \end{cases}
  \end{equation*}
  Then $(\Al^\ast,\oplus,0,\bot)$ forms a partial commutative monoid
  and, by Theorem~\ref{thm:quantale-lifting}, $Q^{\Al^\ast}$ forms a
  commutative quantale.  Convolution $(f\ast g)\, x$ now separates the
  set $x$ into disjoint subsets and then applies the functions $f$ and
  $g$ to these subsets, depending on the interpretation of $\ast$ in
  the target quantale.  For target quantale $\mathbb{B}$ we obtain the
  resource predicate quantale $\mathbb{B}^{\Al^\ast}$ on power sets
  based on disjoint union as a separator:
  \begin{equation*}
    (f\ast g)\, x =\sum_{x=y\oplus z} f\, y\sqcap g\, z.
  \end{equation*}
\qed
\end{example}
This kind of separating conjunction is particularly appropriate for
(indexed) families.

\begin{example}[Separating Conjunction on Heaplets]\label{ex:separating-conjunction-heaplets}
  Let $(S,\ast,0)$ be the partial commutative monoid of partial
  functions $\eta:A\to B$ with empty function $0:A\to B$ and
  composition defined by
    \begin{equation*}
      \eta_1\ast \eta_2 =
      \begin{cases}
        \eta_1\cup \eta_2,& \text{ if } \mathit{dom}(\eta_1)\cap\mathit{dom}(\eta_2)=\emptyset,\\
        \bot, & \text{ otherwise}.
      \end{cases}
    \end{equation*}
    The functions $\eta$ are sometimes called \emph{heaplets} and used
    to model a memory heap. As usual, by
    Theorem~\ref{thm:quantale-lifting}, $Q^S$ forms a commutative
    distributive unital quantale whenever $Q$ does. In particular,
    $\mathbb{B}^S$ forms an algebra of heap assertions with
    convolution as separating conjunction over the heap.\qed
\end{example}

\begin{example}[Separating Conjunction on Vectors]\label{ex:separating-conjunction-vectors}
  Consider a set $S$ of vectors $x$ of fixed dimension $|x|= n$. We
  turn this into a partial commutative semigroup by defining composition as
  \begin{equation*}
    (x\ast y)_i=
    \begin{cases}
      x_i,&\text{ if } y_i=0,\\
      y_i,&\text{ if } x_i=0,\\
      \bot,& \text{ otherwise}.
    \end{cases}
  \end{equation*}
  Also let $x=\bot$ if $x_i=\bot$ for some $1\le i\le n$. It is
  obvious from this definition that the zero vector $0$ is a unit with
  respect to $\ast$. For example,
  \begin{equation*}
    \begin{pmatrix}
      5 \\ 0 \\ 7
    \end{pmatrix}
\ast
\begin{pmatrix}
  0 \\ 4 \\ 0
\end{pmatrix}
=
\begin{pmatrix}
  5 \\ 4 \\ 7
\end{pmatrix}
\qquad\qquad
    \begin{pmatrix}
      5 \\ 0 \\ 7
    \end{pmatrix}
\ast
\begin{pmatrix}
  0 \\ 4 \\ 4
\end{pmatrix}
=
\bot
  \end{equation*}
  Then Theorem~\ref{thm:quantale-lifting} implies that $Q^S$ forms a
  commutative distributive unital quantale whenever $Q$ does, and
  $\mathbb{B}^S$ forms an assertion algebra with a vector-based notion
  of separating conjunction.\qed
\end{example}
The notion of separation on vectors, which splits vectors into
disjoint blocks, lends itself to transforming such vectors in parallel
fashion. This is further elaborated in Example~\ref{ex:lin-trafos}.

In separation logic, a magic wand operation is often used. It is the
upper adjoint of separating conjunction. In the quantale setting, this
adjoint exists because separating conjunction distributes over
arbitrary suprema by definition.

Additional notions of resource monoids and liftings to assertion
algebras have been studied within the Views
framework~\cite{D-YBGPY13}. Whether their generic soundness results for
Hoare logics can be reconstructed in the power series setting is left
for future work.

%%%%%%%%%%%%%%%%%%%%%%%%%%%%%%%%%%%%%%%%%%%%%%%%%%%%%%%%%%%%%%%%%%%%%%%%%%%%

\section{Transformers and Bi-Quantales}\label{sec:transformers}

The powerset lifting discussed in Section~\ref{sec:fpsquantale}
suggests that state and predicate transformers could be modelled as
power series as well.  This section sketches how this can be
achieved. A detailed analysis and the consideration of particular
classes of predicate transformers is left for future work.

A \emph{state transformer} $f_R:A\to 2^B$ is often associated with a
relation $R\subseteq A\times B$ by defining
\begin{equation*}
  f_R\, a=\{b \mid (a,b)\in R\}.
\end{equation*}
State transformers are turned into \emph{predicate transformers}
$\hat{f}_R:2^B\to 2^A$ by the Kleisli lifting
\begin{equation*}
  \hat{f}_R\, Y=\{x \mid f_R\, x\subseteq Y\}. 
\end{equation*}
The following results are well known~\cite{BvW99-book}.
\begin{proposition}
The  state transformers in $(2^B)^A$ and the predicate transformers in
$(2^A)^{2^B}$ form complete distributive lattices.
\end{proposition}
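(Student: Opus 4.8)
The plan is to establish both claims by pointwise lifting, exactly in the spirit of \reflem{lem:semilattice-lifting} and \reflem{lem:lattice-lifting}, treating each statement separately since the two function spaces carry different structure. For the state transformers, the target algebra is $2^B$, which is a complete distributive lattice under subset inclusion. Since $A$ is merely a set, the space $(2^B)^A$ is a function space of the form $L^S$ with $L = 2^B$ and $S = A$, and so \reflem{lem:lattice-lifting} applies directly: completeness and the infinite distributivity laws lift pointwise from $2^B$ to $(2^B)^A$. This half requires no genuinely new argument beyond invoking the earlier lifting lemmas.

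The second half, concerning predicate transformers in $(2^A)^{2^B}$, again instantiates the same scheme, now with $L = 2^A$ and $S = 2^B$: the domain $2^B$ is treated purely as a set, and \reflem{lem:lattice-lifting} yields that $(2^A)^{2^B}$ forms a complete distributive lattice with order, suprema and infima all computed pointwise. The Kleisli lifting $\hat{f}_R$ and its relationship to the relation $R$ play no role in the lattice-theoretic claim itself; they are needed only to identify which functions arise as predicate transformers, not to establish that the ambient function space is a complete distributive lattice. Consequently, the proof does not need to verify any compatibility between the lifting and composition of transformers, which keeps the argument short.

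The one point deserving care is the interpretation of the order on predicate transformers. Since $(2^A)^{2^B}$ is ordered pointwise via inclusion in $2^A$, one should confirm that this is the intended order on predicate transformers, namely $\hat{f}\le\hat{g}\Leftrightarrow \hat{f}\, Y\subseteq \hat{g}\, Y$ for all $Y\in 2^B$; under this order the lattice operations are the pointwise unions and intersections, and distributivity is inherited from $2^A$. I expect this identification of the order to be the main (though minor) obstacle, in that it is the only place where a modelling choice rather than a routine lifting is involved; everything else reduces immediately to \reflem{lem:lattice-lifting}.

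\begin{proof}
  Both claims are instances of \reflem{lem:lattice-lifting}. For the
  state transformers, take $L=2^B$, which is a complete distributive
  lattice under $\subseteq$, and $S=A$. Then $(2^B)^A = L^S$ is a
  complete distributive lattice by
  \reflem{lem:lattice-lifting}, with all operations defined
  pointwise. For the predicate transformers, take $L=2^A$ and $S=2^B$,
  regarding $2^B$ merely as a set. Then $(2^A)^{2^B}=L^S$ is again a
  complete distributive lattice by \reflem{lem:lattice-lifting},
  ordered pointwise by inclusion in $2^A$.
\end{proof}
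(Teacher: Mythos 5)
Your proof is correct and follows essentially the same route as the paper: both claims are obtained by applying \reflem{lem:lattice-lifting} with $L=2^B$, $S=A$ and $L=2^A$, $S=2^B$ respectively. The only cosmetic difference is that the paper first justifies that $2^B\cong\mathbb{B}^B$ and $2^A\cong\mathbb{B}^A$ are themselves complete distributive lattices by a prior application of the same lemma to $\mathbb{B}$, whereas you take this as known.
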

\begin{proof} $2^B\cong\mathbb{B}^B$ forms a complete distributive
  lattice by Lemma~\ref{lem:lattice-lifting} because $\mathbb{B}$
  forms a complete distributive lattice. The same argument applies to
  $2^A$. It therefore follows that $(2^B)^A$ and $(2^A)^{2^B}$ are again
  complete distributive lattices by
  Lemma~\ref{lem:lattice-lifting}.
\end{proof}
Predicate transformers of type $2^A\to 2^A$ form a monoid with respect
to function composition. It is also well known that the subalgebra of
\emph{completely additive} predicate transformers, which satisfy $f\,
(\sum_{i\in I}X_i)= \sum_{i\in I} (f\, X_i)$, forms a distributive
unital quantale in which the identity function is the multiplicative
unit. However, the operation of infimum in this algebra is not the one
that is lifted pointwise; instead it is induced by the operation of
supremum~\cite{BvW99-book}. A dual result holds for \emph{completely
  multiplicative} predicate transformers, which satisfy $f\,
(\prod_{i\in I}X_i)= \prod_{i\in I} (f\, X_i)$. In this case, the
monoidal part of the quantale lifting is not obtained with the power
series lifting technique either.

The cases of resource monoids, where assertion algebras contain a
notion of separating conjunction, are more interesting.

Let $S$ be a partial monoid. A \emph{monoid transformer} is a function
of type $S\to 2^S$. A \emph{monoid predicate transformer} is a
function of type $2^S\to 2^S$. Examples are \emph{resource
  transformers} and \emph{resource predicate transformers}, in which
case $S$ is a resource monoid. Such transformers have been studied in
the context of abstract separation logic \cite{COY07}. The following
results follow immediately in our setting.
\begin{proposition}\label{prop:ptquantale}
  Let $S$ be a partial monoid. Then the monoid transformers in
  $(2^S)^S$ and the monoid predicate transformers in $(2^S)^{2^S}$
  form distributive unital quantales.  In both cases, commutativity
  lifts from $S$.
\end{proposition}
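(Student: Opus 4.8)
The plan is to reduce Proposition~\ref{prop:ptquantale} to the general lifting theorem by identifying the transformer algebras as instances of power series into a commutative quantale.

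First I would observe that, by Corollary~\ref{cor:powerset-lifting}, the powerset $2^S$ of a partial monoid $S$ is itself a distributive unital quantale under the complex product, with commutativity lifting from $S$; call this quantale $\mathcal{Q} = 2^S$. Its multiplicative unit is $\{1\}$. The key realisation is then that both transformer spaces in question are power series of the form $\mathcal{Q}^{S'}$ for a suitable index structure $S'$, so that \refthm{thm:quantale-lifting} applies directly once $S'$ is equipped with a partial semigroup (in fact monoid) structure.

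For the monoid transformers $(2^S)^S$, I would take the source semigroup to be $S$ itself and the target quantale to be $\mathcal{Q} = 2^S$. Then $(2^S)^S = \mathcal{Q}^S$ is precisely a power series quantale, and \refthm{thm:quantale-lifting} gives that it forms a distributive unital quantale, with commutativity lifting since $S$ is commutative (when it is) and $\mathcal{Q}$ is commutative by Corollary~\ref{cor:powerset-lifting}. The convolution here combines the pointwise decomposition on the source $S$ with the complex product on the target $2^S$. For the monoid predicate transformers $(2^S)^{2^S}$, I would instead take the source to be the quantale $\mathcal{Q} = 2^S$, viewed merely as a partial monoid under its complex product (forgetting the lattice structure), and again the target quantale to be $2^S$. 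Thus $(2^S)^{2^S} = \mathcal{Q}^{\mathcal{Q}}$ is once more a power series quantale, and \refthm{thm:quantale-lifting} yields a distributive unital quantale; commutativity of the source monoid $\mathcal{Q}$ follows from commutativity of $S$ via Corollary~\ref{cor:powerset-lifting}, and commutativity of the target from the same corollary, so commutativity lifts in both cases.

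The main obstacle I anticipate is conceptual rather than computational: one must be careful that the multiplication being lifted on the predicate transformer space $(2^S)^{2^S}$ is the \emph{convolution} coming from the complex product on $2^S$, and to contrast this with the function-composition monoid structure discussed for completely additive transformers earlier in this section, where (as the text notes) the infimum is not the pointwise one and the power series technique does \emph{not} apply. Here, by contrast, the quantale structure is obtained entirely by lifting, so the infima, suprema and units are all the pointwise (complex-product) ones, and no subtlety about induced infima arises. Once this identification is made explicit, the proof is a direct appeal to \refthm{thm:quantale-lifting} and Corollary~\ref{cor:powerset-lifting} with no further calculation required.
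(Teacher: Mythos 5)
Your proposal is correct and follows essentially the same route as the paper: apply Corollary~\ref{cor:powerset-lifting} to get the distributive unital (commutative) quantale $2^S$, then invoke Theorem~\ref{thm:quantale-lifting} once with source $S$ for $(2^S)^S$ and once with source the multiplicative monoid reduct of $2^S$ for $(2^S)^{2^S}$. Your additional remark distinguishing the lifted convolution structure from the function-composition structure (where pointwise infima fail) is a correct and worthwhile clarification, but it does not change the argument.
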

\begin{proof}
  $2^S$ forms a distributive unital quantale according to
  Corollary~\ref{cor:powerset-lifting}. It is commutative whenever $S$
  is. Hence $(2^S)^S$ forms a distributive unital quantale by
  Theorem~\ref{thm:quantale-lifting}. Commutativity lifts again from
  $S$.

  Similarly, $(2^S)^{2^S}$ is a distributive unital quantale by
  Theorem~\ref{thm:quantale-lifting} because $2^S$ is and the
  multiplicative reduct of $2^S$ is a monoid.  Commutativity lifts
  again from $S$.
\end{proof}

Proposition~\ref{prop:ptquantale} can be combined with the previous
observation about predicate transformer quantales.

\begin{theorem}\label{thm:ptbiquantale}
  Let $S$ be a partial (commutative) monoid. Then
  $((2^S)^{2^S},\subseteq,\cdot,\circ,\mathit{id},\unit)$ forms weak a
  unital bi-quantale with (commutative) convolution $\cdot$ and
  function composition $\circ$ as well as the unit function
  $\mathit{id}$ and unit power series $\unit$.
\end{theorem}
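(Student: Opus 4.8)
The plan is to handle the two multiplications separately over the common carrier $((2^S)^{2^S},\subseteq)$, which is a complete distributive lattice by \reflem{lem:lattice-lifting} because $2^S$ is one. Suprema and infima on $(2^S)^{2^S}$ are the pointwise unions and intersections, and this single order $\subseteq$ serves both quantale structures, as a bi-quantale requires.

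For the convolution component the work is essentially done: \refprop{prop:ptquantale} already gives that $((2^S)^{2^S},\subseteq,\cdot,\unit)$ is a distributive unital quantale, commutative whenever $S$ is. So I would simply invoke it, recording that $\unit$ is the convolution unit and that commutativity of $\cdot$ is inherited from $S$.

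The real content is the composition component $((2^S)^{2^S},\subseteq,\circ,\mathit{id})$. First I would note that $\circ$ is associative with two-sided unit $\mathit{id}$, so it forms a monoid by the standard laws of function composition; completeness of the lattice is shared with the convolution part. The one distributivity law that holds is the right one: since suprema are pointwise, for every family and every $X\in 2^S$ one computes $((\sum_i f_i)\circ g)\,X = (\sum_i f_i)(g\,X) = \bigcup_i f_i(g\,X) = (\sum_i (f_i\circ g))\,X$, unconditionally. From this, isotonicity of $\circ$ in its left argument follows in the usual quantale manner.

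The main obstacle, and the reason for the qualifier \emph{weak}, is that left distributivity genuinely fails on the full transformer space. Here $(f\circ \sum_i g_i)\,X = f\bigl(\bigcup_i g_i\,X\bigr)$ agrees with $\bigcup_i f(g_i\,X) = (\sum_i f\circ g_i)\,X$ only when $f$ preserves arbitrary unions, i.e.\ is completely additive; a generic transformer $f:2^S\to 2^S$ is neither completely additive nor even isotone, so both left distributivity and isotonicity in the right argument fail. This is exactly the phenomenon flagged just before \refprop{prop:ptquantale}, where only the completely additive transformers constitute a genuine quantale under $\circ$. I read \emph{weak quantale} accordingly as a complete lattice carrying a monoid that distributes over arbitrary suprema on one side only, and would make this weakening precise in a short preceding definition. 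Assembling the full convolution quantale and the weak composition quantale over the shared order $\subseteq$, with units $\unit$ and $\mathit{id}$, then yields the asserted weak unital bi-quantale, commutative in the convolution whenever $S$ is.
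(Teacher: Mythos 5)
Your proposal is correct and follows essentially the route the paper intends: the paper gives no explicit proof but states that Theorem~\ref{thm:ptbiquantale} combines Proposition~\ref{prop:ptquantale} (for the convolution quantale) with the earlier observation that composition of predicate transformers distributes over suprema only on one side unless the transformers are completely additive, which is precisely how you assemble the two structures over the shared order $\subseteq$ and why the result is only a \emph{weak} bi-quantale. Your additional care in spelling out the unconditional right distributivity computation and the failure of left distributivity and right-argument isotonicity for non-additive, non-isotone transformers matches the remark the paper places immediately after the theorem.
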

In this context, \emph{weak} means that the left distributivity law
$f\circ \sum_{i\in I} g_i = \sum_{i\in I} f\circ g_i$ need not hold in
the space of predicate transformers. It holds, however, when predicate
transformers are completely additive.

%%%%%%%%%%%%%%%%%%%%%%%%%%%%%%%%%%%%%%%%%%%%%%%%%%%%%%%%%%%%%%%%%%%%%%%%%%%%

\section{Partial Power Series Quantales} 
\label{sec:partial-formal-power}

This section generalises Theorem~\ref{thm:quantale-lifting} to
situations in which the target algebras $Q$ are assumed to be partial
quantales in the sense that their semigroup retracts are partial. In
this case, partiality of composition shows up not only in the
splitting $x=y\cdot z$, but also in the product $f\, y\cdot g\, z$ in
convolutions. It turns out that the quantale structure of the target
algebra is preserved at the level of the function space, but the loss
of totality in $f\, y\cdot g\, z$ causes the function space to be
partial as well. Previous proofs must therefore be reconsidered.

As an example we consider linear transformations of vectors
implemented by matrices, in which vectors that are separated as in
Example~\ref{ex:separating-conjunction-vectors} can be transformed in
concurrent fashion by matrices which can be separated into non-zero
blocks along the diagonal.  This is a particular manifestation of the
correspondence between separation and concurrency in the context of
convolution.
\begin{proposition}\label{prop:partial-quantale-lifting}
  Let $(S,\cdot)$ be a partial semigroup. If $(Q,\le,\cdot)$ is a
  (distributive) partial quantale, then so is
  $(Q^S,\le,\cdot)$. In addition, commutativity lifts
  from $S$ and $Q$ to $Q^S$ and unitality lifts if $S$ is a partial
  monoid.
\end{proposition}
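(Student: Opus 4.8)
The plan is to revisit the proof of \refthm{thm:quantale-lifting} step by step and isolate the single place where totality of multiplication in $Q$ was used, namely the convolution term $f\, y\cdot g\, z$. Everything concerning the order-theoretic structure survives verbatim: a partial quantale is still a complete (distributive) lattice on the underlying set $Q$, so \reflem{lem:lattice-lifting} already yields that $(Q^S,\le)$ is a complete (distributive) lattice with suprema and infima computed pointwise, and the infinite distributivity laws between $\sqcap$ and $\sum$ lift unchanged. Hence only the multiplicative axioms --- distributivity of convolution over suprema, associativity, annihilation, and, where applicable, commutativity and unitality --- need to be re-examined.

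The key new phenomenon is that $f\, y\cdot g\, z$ may now equal $\bot$ in $Q$. Following the treatment of undefined splittings in \refthm{thm:quantale-lifting}, where $y\cdot z=\bot$ simply fails to contribute to the convolution, I would treat an undefined product $f\, y\cdot g\, z=\bot$ in exactly the same way and inherit partiality of $Q^S$ from that of $Q$, declaring $f\cdot g=\bot$ in $Q^S$ precisely when the convolution is forced to be undefined, so that $\bot$ remains absorbing for convolution and $\bot\notin Q^S$ as required of the partial semigroup retract. With this convention the empty supremum still yields $0$, so $\zero$ and the annihilation laws $\zero\cdot f=\zero=f\cdot \zero$ are recovered exactly as before.

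For distributivity I would reproduce the computation
\begin{align*}
  (f\cdot \sum_{i\in I} g_i)\, x
  = \sum_{x=y\cdot z} f\, y\cdot \sum_{i\in I} g_i\, z
  = \sum_{\substack{x=y\cdot z,\\ i\in I}} f\, y\cdot g_i\, z
  = \sum_{i\in I} (f\cdot g_i)\, x,
\end{align*}
where the middle equality now appeals to the distributivity law of the \emph{partial} quantale $Q$, under which undefined products drop out of both sides consistently; right distributivity follows by opposition duality, and left distributivity again secures associativity exactly as for rational power series (\refprop{prop:fpslifting}). Commutativity and unitality lift by the same calculations as in \refthm{thm:quantale-lifting}, the only additional check being that the relevant products in $Q$ are defined precisely when their opposites are, which holds because partiality in a partial quantale is symmetric in its two arguments.

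The main obstacle will be this middle equality: in the total case it is a single application of distributivity in $Q$, but here I must verify that passing $f\, y\cdot(-)$ through an arbitrary supremum behaves coherently when some of the products $f\, y\cdot g_i\, z$ are undefined. Concretely, one has to confirm that $f\, y$ being incompatible with individual summands $g_i\, z$ is reconciled with $f\, y$ applied to their supremum, so that the two sides of the partial-quantale distributivity law agree and no defined contribution is spuriously lost or gained. Establishing this coherence --- essentially that the partial product distributes over suprema modulo $\bot$ at the level of $Q$ --- is the heart of the argument; once it is in place, the remaining laws transfer mechanically from \refthm{thm:quantale-lifting}. The vector/linear-transformation instance built on \refex{ex:separating-conjunction-vectors}, where blocks must be disjoint for a product to be defined, is the intended sanity check for this convention.
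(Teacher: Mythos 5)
Your overall strategy --- keep the lattice lifting from \reflem{lem:lattice-lifting} untouched and re-examine only the multiplicative laws of \refthm{thm:quantale-lifting} --- is exactly the paper's, and you have correctly located the crux in the interaction between partial products and suprema. But two things in your write-up would not survive being made precise. First, your stated convention is internally inconsistent: if an undefined product $f\,y\cdot g\,z$ is ``treated in exactly the same way'' as an undefined splitting, i.e.\ simply dropped from the sum, then $(f\cdot g)\,x$ is always a supremum of defined elements and hence always defined, so $Q^S$ would be a \emph{total} quantale --- contradicting the very statement you are proving, which asserts that $Q^S$ is a partial quantale. The paper keeps the convolution genuinely partial; undefined products are not silently discarded.

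Second, you explicitly leave the ``coherence'' of the middle distributivity equality as an open obstacle, but that is the step that actually has to be proved, and the paper's resolution is simple: state and prove every law conditionally on definedness. For instance, suppose $(f\cdot\sum_{i\in I}g_i)\,x$ is defined; then each product $f\,y\cdot(\sum_{i\in I}g_i)\,z$ occurring in it is defined, the distributivity axiom of the partial quantale $Q$ applies termwise to rewrite it as $\sum_{i\in I}(f\,y\cdot g_i\,z)$, and reindexing gives $\sum_{i\in I}(f\cdot g_i)\,x$; the computation reverses when the $(f\cdot g_i)\,x$ are defined. The same ``suppose the left-hand side is defined'' device handles associativity (where definedness of $x_1\cdot(x_2\cdot x_3)$ forces definedness and equality of $(x_1\cdot x_2)\cdot x_3$, and likewise for the products in $Q$), commutativity and unitality, with the symmetric cases following by opposition duality. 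Once the laws are cast in this conditional form there is nothing left to verify about contributions being ``spuriously lost or gained'', so your acknowledged gap closes --- but as written, the proposal stops short of the argument rather than giving it.
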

\begin{proof}
  By Theorem~\ref{thm:quantale-lifting}, the (commutative) monoidal
  and distributivity laws need to be checked.

  Suppose $(f\cdot (g \cdot h))\, x$ is defined. Then
\begin{equation*}
  (f\cdot (g \cdot h))\, x = \sum_{x=x_1\cdot (x_2\cdot x_3)} f\, x_1 \cdot (g\, x_2 \cdot h\, x_3).
\end{equation*}
Thus $x_1\cdot (x_2\cdot x_3)$ is defined and equal to $(x_1\cdot
x_2)\cdot x_3$ and $f\, x_1 \cdot (g\, x_2 \cdot h\, x_3)$ is defined and
equal to $(f\, x_1 \cdot g\, x_2) \cdot h\, x_3$. Hence
\begin{equation*}
  \sum_{x=x_1\cdot (x_2\cdot x_3)} f\, x_1 \cdot (g\, x_2 \cdot h\, x_3)=
\sum_{x=(x_1\cdot x_2)\cdot x_3} (f\, x_1 \cdot g\, x_2) \cdot h\, x_3
 = ((f\cdot g)\cdot h)\, x.
\end{equation*}
The situation where $((f\cdot g)\cdot h)\, x$ is defined is
opposition dual. Hence $Q^S$ forms a partial semigroup. 

Suppose that $ (f\cdot \sum_{i\in I} g_i)\, x$ is defined. Then
\begin{equation*}
  (f\cdot \sum_{i\in I} g_i)\, x = \sum_{x=y\cdot z} f\, y \cdot
  (\sum_{i\in I} g_i)\, z
  = \sum_{x=y\cdot z} \sum_{i\in I} (f\, y \cdot g_i\, z)
  = \sum_{i\in I} (f\cdot g_i)\, x.
\end{equation*}
The proof can be reversed if the $(f\cdot g_i)\ x$  are
defined. The proof of right distributivity is opposition dual.  This
shows that $Q^S$ forms a partial distributive quantale.

Suppose $(f \cdot g)\, x$ is defined and $S$ and $Q$ are both
commutative. Then
\begin{equation*}
  (f\cdot g)\, x = \sum_{x=y\cdot z}f\, y \cdot g\, z = \sum_{x=z\cdot y}g\, z \cdot f\, y = (g\cdot f)\, x.
\end{equation*}
This lifts commutativity.

Finally, assume that $S$ is a monoid and $Q$ is unital and define the
power series $\unit$ as usual. Suppose that $(\unit \cdot f)\, x$ is
defined. Then
\begin{equation*}
  (\unit\cdot f)\, x = \sum_{x=y\cdot z} \unit\, y \cdot f\, z = 1\cdot f\, x = f\, x.
\end{equation*}
Moreover, $f\cdot \unit =f$ follows from opposition duality. This lifts
unitality.
\end{proof}

\begin{example}[Linear Transformations of Vectors]\label{ex:lin-trafos}
  Consider again the partial semigroup $(S,\ast)$ on $n$-dimensional
  vectors from Example~\ref{ex:separating-conjunction-vectors}. It is
  easy to check that $S$ actually forms a partial commutative dioid
  with respect to $\ast$ as multiplication and standard vector
  addition. Distributivity $x\ast(y+z)=(x\ast y)+(x\ast z)$ follows
  immediately from the definition: the case of $x_i=0$ holds
  trivially, the case of $(y+z)_i=0$ requires that $y_i=z_i=0$.

  Proposition~\ref{prop:partial-quantale-lifting} then implies as a
  special case that the functions of type $S\to S$ form a commutative
  dioid; they form a trioid with the other multiplication being
  function composition. The sum in the convolution is obviously finite
  since there are only finitely many ways of splitting a vector of
  finite dimension.  In addition, the functions $f$ and $g$ in a
  convolution are not only applied to separate parts $y$ and $z$ of
  vector $x$, but they must map to separate parts $f\, y$ and $g\, z$ of
  the resulting vector as well.

  Unitality cannot be lifted as in
  Proposition~\ref{prop:partial-quantale-lifting} because the units of
  $+$ and $\ast$ coincide. It is easy to check that the unit with
  respect of $\ast$ on $S^S$ is defined as
\begin{equation*}
  e\, x = 
  \begin{cases}
    0, & \text{ if } x=0,\\
    \bot, & \text{ otherwise}.
  \end{cases}
\end{equation*}

For further illustration consider the linear transformations on
$n$-dimensional vectors given by multiplying $n$-dimensional vectors
with an $n\times n$ matrix and adding an $n$-dimensional vector.

As a simple example of a term contributing to a convolution consider
\begin{equation*}
  \begin{pmatrix}
    a_1 & b_1\\
    c_1 & d_1
  \end{pmatrix}
  \begin{pmatrix}
    x\\
    0
  \end{pmatrix}
\ast
  \begin{pmatrix}
    a_2 & b_2\\
    c_2 & d_2
  \end{pmatrix}
  \begin{pmatrix}
    0\\
    y
  \end{pmatrix}
=
\begin{pmatrix}
  a_1x\\
  c_1y
\end{pmatrix}
\ast
\begin{pmatrix}
  b_2y\\
  d_2y
\end{pmatrix}
= \bot,
\end{equation*}
whereas
\begin{equation*}
  \begin{pmatrix}
    a_1 & b_1\\
    0 & d_1
  \end{pmatrix}
  \begin{pmatrix}
    x\\
    0
  \end{pmatrix}
\ast
  \begin{pmatrix}
    a_2 & 0\\
    c_2 & d_2
  \end{pmatrix}
  \begin{pmatrix}
    0\\
    y
  \end{pmatrix}
=
\begin{pmatrix}
  a_1x\\
  0
\end{pmatrix}
\ast
\begin{pmatrix}
  0\\
  d_2y
\end{pmatrix}
= 
\begin{pmatrix}
  a_1x\\
d_2y
\end{pmatrix}.
\end{equation*}
This shows that matrices contributing to convolutions must essentially
consist of two non-trivial blocks along the diagonal modulo
(synchronised) permutations of rows and columns. That is, they are of
the form
\begin{equation*}
  \begin{pmatrix}
    M_1 & \mathbb{O}\\
\mathbb{O} & M_2,
  \end{pmatrix}
\end{equation*}
where $\mathbb{O}$ represents zero matrices of appropriate
dimension. Each pair of vectors resulting from a decomposition can be
rearranged such that the first vector consists of an upper block of
non-zero coefficients and a lower block of zeros, whereas the second
vector consists of an upper zero and a lower non-zero block, and such
that the two non-zero blocks do not overlap. One must be able to
decompose matrices and vectors of the linear transformation into the
same blocks to make convolutions non-trivial.

The transformations implemented by the above block matrix on
rearranged vectors, and more generally all linear transformations, can
clearly be executed independently or in parallel by the matrices $M_1$
and $M_2$ parts of a vector if the convolution is non-trivial.  In
this sense the convolution $\ast$ on linear transformations is a
notion of concurrent composition.\qed
\end{example}

%%%%%%%%%%%%%%%%%%%%%%%%%%%%%%%%%%%%%%%%%%%%%%%%%%%%%%%%%%%%%%%%%%%%%%%%%%%

\section{Power Series over Bi-Semigroups}
\label{sec:formal-power-series}

Our main lifting result (Theorem~\ref{thm:quantale-lifting}) shows
that the quantale structure $Q$ is preserved at the level of the
function space $Q^S$ provided that $S$ is a partial semigroup.
This % Proposition~\ref{prop:quantale-lifting} 
can easily be adapted from partial semigroups $S$ to partial
$n$-semigroups and $n$-quantales with $n$ operations of composition
which may or may not be commutative. Here we restrict our attention to
bi-semigroups and bi-quantales and we discuss several examples.
\begin{proposition}\label{prop:biquantale-lifting}
  Let $(S,\circ,\bullet)$ be a partial bi-semigroup. If
  $(Q,\le,\circ,\bullet)$ is a (distributive unital) bi-quantale, then so is
  $(Q^S,\le,\circ,\bullet)$.
\end{proposition}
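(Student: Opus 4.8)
The plan is to derive Proposition~\ref{prop:biquantale-lifting} directly from Theorem~\ref{thm:quantale-lifting} by applying it separately to each of the two multiplications, and then observing that there is nothing left to check because the two structures share the same lattice reduct. Concretely, a bi-quantale $(Q,\le,\circ,\bullet)$ is by definition a pair of quantales $(Q,\le,\circ)$ and $(Q,\le,\bullet)$ over one and the same complete lattice $(Q,\le)$, and a partial bi-semigroup $(S,\circ,\bullet)$ is a pair of partial semigroups sharing the same carrier. So first I would apply Theorem~\ref{thm:quantale-lifting} to the semigroup $(S,\circ)$ and the quantale $(Q,\le,\circ)$ to conclude that $(Q^S,\le,\circ)$ is a quantale, and then apply it again to $(S,\bullet)$ and $(Q,\le,\bullet)$ to conclude that $(Q^S,\le,\bullet)$ is a quantale.

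The one point that needs a sentence of justification is that these two applications produce the \emph{same} lattice order on $Q^S$. This is immediate: in both applications the underlying complete lattice is lifted by Lemma~\ref{lem:lattice-lifting}, which depends only on the set $S$ and the lattice $(Q,\le)$, not on any multiplication. Hence both invocations of Theorem~\ref{thm:quantale-lifting} yield the identical complete lattice $(Q^S,\le)$, with $\le$ given pointwise. Therefore $(Q^S,\le,\circ)$ and $(Q^S,\le,\bullet)$ are two quantales sharing one lattice, which is exactly the definition of the bi-quantale $(Q^S,\le,\circ,\bullet)$. The distributive and unital cases follow in the same way, since Theorem~\ref{thm:quantale-lifting} already lifts distributivity and, when each $\circ,\bullet$ comes from a partial monoid, unitality; the two units $\unit,\unit'$ are defined separately for the two monoid structures exactly as in the statement of that theorem.

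I do not expect a genuine obstacle here, which is why the proof can be kept to a couple of lines in the paper. The only thing to be careful about is the \emph{shared lattice} observation above; a reader could otherwise worry that combining two separately-constructed quantales requires compatibility conditions, but the construction transparently produces one lattice with two independent multiplications laid over it. The same remark as after Theorem~\ref{thm:quantale-lifting} applies: restricting $S$ to partial bi-semigroups is harmless, since in each splitting $x = y\circ z$ (respectively $x = y\bullet z$) the constraint $x\in S$ merely rules out the contributions where the relevant product equals $\bot$, and this is handled identically for each of the two operations.

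\begin{proof}
  By definition, $(Q,\le,\circ)$ and $(Q,\le,\bullet)$ are quantales
  over the same complete lattice $(Q,\le)$, and $(S,\circ)$ and
  $(S,\bullet)$ are partial semigroups over the same carrier.
  Applying Theorem~\ref{thm:quantale-lifting} to $(S,\circ)$ and
  $(Q,\le,\circ)$ shows that $(Q^S,\le,\circ)$ is a quantale; applying
  it to $(S,\bullet)$ and $(Q,\le,\bullet)$ shows that
  $(Q^S,\le,\bullet)$ is a quantale. In both cases the lattice reduct
  of $Q^S$ is the one obtained from Lemma~\ref{lem:lattice-lifting},
  which depends only on the set $S$ and the lattice $(Q,\le)$ and is
  therefore the same $(Q^S,\le)$ in both applications. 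Hence
  $(Q^S,\le,\circ,\bullet)$ is a bi-quantale. Distributivity lifts
  with each multiplication by Theorem~\ref{thm:quantale-lifting}, so
  $Q^S$ is distributive whenever $Q$ is. If $Q$ is unital, each of
  $\circ,\bullet$ comes from a partial monoid structure on $S$, and
  the corresponding unit power series $\unit,\unit'$ are defined as
  usual; Theorem~\ref{thm:quantale-lifting} lifts unitality for each,
  making $(Q^S,\le,\circ,\bullet)$ a unital bi-quantale.
\end{proof}
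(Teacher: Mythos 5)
Your proposal is correct and matches the paper's intent exactly: the paper gives no explicit proof of this proposition, relying precisely on the double application of Theorem~\ref{thm:quantale-lifting} (once per multiplication over the shared lattice reduct) that you spell out. Your only addition worth noting is the explicit remark that the unital case requires each operation on $S$ to come from a partial monoid structure, a hypothesis the paper's statement leaves implicit.
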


It is obvious that properties such as commutativity and unitality lift
as before.

 \begin{example}[Functions over Two-Dimensional Intervals]
   Closed two-dimensional intervals over a linear order can be
   defined in a straightforward way. 
%   We write $ ([x_1,x_2],[y_1,y_2])$ for the box with vertices 
%   $(x_1,y_1)$, $(x_1,y_2)$, $(x_2,y_1)$ and $(x_2,y_2)$.
   For intervals $x$ and $y$, we write $\Rectangle{x}{y}$ for the box
   consisting of points with x-coordinates in $x$ and y-coordinates in $y$.
   \begin{eqnarray*}
     \Rectangle{x}{y} & = & \{ (a,b)\ |\ a \in x \land b \in y \} \\
     \Rectangle{x}{\bot} & = & \bot \\
     \Rectangle{\bot}{y} & = & \bot
   \end{eqnarray*}

  We define the horizontal composition of two-dimensional intervals as
  \begin{equation*}
    (\Rectangle{x_1}{y_1}) \circ (\Rectangle{x_2}{y_2}) = 
      \begin{cases}
        \Rectangle{(x_1 \cdot x_2)}{y_1}, & \text{if } y_1 = y_2, \\
        \bot,                                            & \text{otherwise}.
      \end{cases}
%    ([x^1_1,x^1_2],[y^1_1,y^1_2])\circ ([x^2_1,x^2_2],[y^2_1,y^2_2])=
%    \begin{cases}
%       ([x^1_1,x^2_2],[y^1_1,y^1_2]), & \text{ if } x^1_2=x^2_1, y_i^1=y_i^2,\\
%       \bot, &\text{ otherwise}.
%    \end{cases}
  \end{equation*}
  and their vertical composition as
  \begin{equation*}
    (\Rectangle{x_1}{y_1}) \bullet (\Rectangle{x_2}{y_2}) = 
      \begin{cases}
        \Rectangle{x_1}{(y_1 \cdot y_2)}, & \text{if } x_1 = x_2, \\
        \bot,                                            & \text{otherwise}.
      \end{cases}
%    ([x^1_1,x^1_2],[y^1_1,y^1_2])\bullet ([x^2_1,x^2_2],[y^2_1,y^2_2])=
%    \begin{cases}
%       ([x^1_1,x^1_2],[y^1_1,y^2_2]), & \text{ if } y^1_2=y^2_1, x_i^1=x_i^2,\\
%       \bot, &\text{ otherwise}.
%    \end{cases}
  \end{equation*}
  Whenever the target algebra forms a bi-quantale,
  Proposition~\ref{prop:biquantale-lifting} applies and the function
  space forms a bi-quantale as well. In particular, horizontal and
  vertical convolution are given by
  \begin{align*}
    (f \circ g)\, (\Rectangle{x}{y}) & = \sum_{x = x_1 \cdot x_2} f\, (\Rectangle{x_1}{y}) \circ g\, (\Rectangle{x_2}{y}), \\
    (f \bullet g)\,  (\Rectangle{x}{y}) & = \sum_{y = y_1 \cdot y_2} f\, (\Rectangle{x}{y_1}) \bullet g\, (\Rectangle{x}{y_2}).
%    (f\circ g)\ ([x_1,x_2],[y_1,y_2])&=\sum_{x_1\le z\le x_2} f\ ([x_1,z],[y_1,y_2]) \circ g\ ([z,x_2],[y_1,y_2])\\
%    (f\bullet g)\ ([x_1,x_2],[y_1,y_2])&=\sum_{y_1\le z\le y_2} f\ ([x_1,x_2],[y_1,z]) \bullet g\ ([x_1,x_2],[z,y_2]).
  \end{align*}
  The situation easily generalises to n-dimensional intervals with $n$
  convolutions which may or may not be commutative.\qed
 \end{example}
 \begin{example}[Series-Parallel Pomset Languages]\label{ex:pomsets}
   Let $(S,\cdot,\ast,1)$ be a bi-monoid with non-commutative
   composition $\cdot$, commutative composition $\ast$ and shared unit
   $1$. Furthermore, let $(Q,\le,\cdot,\ast,1)$ be a bi-quantale with
   non-commutative composition $\cdot$, commutative composition $\ast$
   and shared unit $1$. Then $Q^S$ forms a bi-quantale according to
   Proposition~\ref{prop:biquantale-lifting} with a non-commutative
   convolution given by $\cdot$ and a commutative convolution given by
   $\ast$. For $\mathbb{B}^S$ and $S$ being freely generated from a
   finite alphabet $\Al$, we obtain the \emph{series-parallel pomset
     languages} or \emph{partial word languages} over $\Al$, which
   have been studied by Grabowski, Gischer and others
   \cite{Grabowski,Gischer}. They form a standard model of true
   concurrency.\qed
 \end{example}

 \begin{example}[Square Matrices with Parallel Composition]\label{ex:matrix-par}
   We define a partial commutative composition $\ast$ on square
   matrices as a generalisation of vector case, splitting matrices
   into blocks along the diagonal.
   \begin{equation*}
     (f\ast g)\, (i,j)=
     \begin{cases}
       f\ (i,j),& \text{ if } \forall k.\ g\ (i,k) = 0\wedge  g\ (k,j)= 0,\\
       g\ (i,j),& \text{ if } \forall k.\ f\ (i,k)=0\wedge g\ (k,j)=0,\\
       \bot, & \text{ otherwise}.
     \end{cases}
   \end{equation*}
   Associativity and commutativity of this operation is easy to check;
   (infinite) distributivity holds as well. It follows that square
   matrices into suitable coefficient algebras form partial
   bi-quantales.\qed
 \end{example}
 Examples~\ref{ex:pomsets} and~\ref{ex:matrix-par} thus show other
 situations where a commutative convolution gives rise to a notion of
 parallel or concurrent composition.

%%%%%%%%%%%%%%%%%%%%%%%%%%%%%%%%%%%%%%%%%%%%%%%%%%%%%%%%%%%%%%%%%%%%%%%%%

\section{Two-Dimensional Power Series
  Bi-Quantales}\label{sec:fpsbiquantale}

We now extend the power series approach to two dimensions; an
extension to $n$ dimensions can be obtained along the same lines.  We
consider two separate partial semigroups or monoids $(S_1,\circ)$ and
$(S_2,\bullet)$. In many cases, $S_2$ is assumed to be
commutative. This differs from \refsec{sec:formal-power-series} in
that two different semigroups algebras are lifted to a bi-quantale,
whereas in \refsec{sec:formal-power-series} a bi-semigroup is lifted
to a bi-quantale.

We consider functions $F:S_1\to S_2\to Q$ from the partial semigroups
$S_1$ and $S_2$ into an algebra $Q$, usually a bi-quantale. Note that
$A\to B\to C$ stands for $A\to (B\to C)$, and we write $(C^B)^A$ for
the class of functions of that type.

The main construction is as
follows. Theorem~\ref{thm:quantale-lifting} can be applied to
semigroup $S_1$ and target algebra $Q^{S_2}$ to lift to
$(Q^{S_2})^{S_1}$. Alternatively, $S_2$ and $Q^{S_1}$ can be lifted to
$(Q^{S_1})^{S_2}$. The algebras $Q^{S_1}$ and $Q^{S_2}$ can be
obtained by lifting as well; they can be considered as partial
evaluations of a power series $F:S_1\to S_2\to Q$ to power series
$F^y:S_1\to Q$ and $F^x:S_2\to Q$ where 
\begin{align*}
  F^y  = \lambda x.\ F\, x\, y, \qquad\qquad F^x = \lambda y.\ F\, x\, y
\end{align*}
%$F\ x\ y= F_1\ x = F_2\ y$.
This construction can be iterated $n$ times for power series
$F:S_1\to\dots\to S_n\to Q$.

It is well known that the function spaces obtained are isomorphic: in
general $(C^A)^B\cong (C^B)^A\cong C^{A\times B}\cong C^{B\times A}$
under the Curry-Howard isomorphism.  A categorical framework is
provided by the setting of symmetric monoidal closed categories
\cite{Kelly}, which we do not explore further in this
article. Instead we move freely between isomorphic function spaces.

By analogy to the one-dimensional case of  power series we
define operations on the function space $Q^{S_1\times S_2}$ which lift
the corresponding operations on $Q$. Ultimately our aim is to show
that bi-quantale axioms lift from $Q$ to $Q^{S_1\times S_2}$. We
define
\begin{align*}
  (\sum_{i\in I} F_i)\, x\, y &= \sum_{i\in I} (F_i\, x\, y),\\
  (\prod_{i\in I} F_i)\, x\, y &= \prod_{i\in I} (F_i\, x\, y),\\
  (F\circ G)\, x\, y &= \sum_{x=x_1\circ x_2} F\, x_1\, y\circ G\, x_2\, y,\\
  (F \bullet G)\, x\, y &= \sum_{y=y_1\bullet y_2} F\, x\, y_1\bullet G\, x\, y_2.
\end{align*}
As in the one dimensional case, $\zero = \sum_{i\in\emptyset} F_i$.  The
convolution $F\circ G$ acts on the first parameter whereas $F\bullet
G$ acts on the second one; $\sum_{i\in I}F_i$ and $\prod_{i\in I}F_i$ are defined by
pointwise lifting on both arguments. 

We now show how two-dimensional lifting results can be obtained in a
modular fashion from one-dimensional ones with
Theorem~\ref{thm:quantale-lifting}.  By currying consider the
functions $F^y:S_1\to Q$ and $F^x:S_2\to Q$. %  defined by
% \begin{equation*}
% F_1 = F\ y,\qquad F_2 = F\ x.
% %  \lambda y:S_2.\ F_1 = F =\lambda x:S_1.\ F_2.
% \end{equation*}
For these we can reuse the definitions of suprema, infima and
convolution from the one dimensional case in
Section~\ref{sec:fpsquantale}. Suprema, for instance, are given by
\begin{equation*}
  (\sum_{i\in I} F_{i}^y)\, x  =\sum_{i\in I}(F_{i}^y\, x),\qquad
  (\sum_{i\in I} F_{i}^x)\, y  =\sum_{i\in I}(F_{i}^x\, y).
\end{equation*}
The equations for infima are lattice dual.  Convolutions are given by
\begin{align*}
  (F^y\circ G^y) \, x  =\sum_{x=x_1\circ x_2}F^Y\, x_1\circ G^y\, x_2,
  \qquad
  (F^x\bullet G^x)\, y  =\sum_{y=y_1\bullet y_2}F^x\, y_1\bullet G^x\, y_2.
\end{align*}

The relationship between operations of different dimensions is
captured by the following lemma.
\begin{lemma}\label{P:spacered1}~
  The maps $\varphi_1: Q^{S_1\times S_2}\to Q^{S_2}$ and $\varphi_2:Q^{S_1\times S_2}\to
  Q^{S_1}$  defined by
  \begin{equation*}
    \varphi_1=\lambda X.\ (X)^y,\qquad \varphi_2=\lambda X.\ (X)^x
  \end{equation*}
  are homomorphisms.
\begin{enumerate}
\item $(\sum_{i\in I} F_i)^y= (\sum_{i\in I} F_{i}^y)$ and $(\sum_{i\in
    I} F_i)^x= (\sum_{i\in I} F_{i}^x)$,
\item $(\prod_{i\in I} F_i)^y= (\prod_{i\in I} F_{i}^y)$ and $(\prod_{i\in
    I} F_i)^x= (\prod_{i\in I} F_{i}^x)$,
\item $(F\circ G)^y= (F^y\circ G^y)$ and  $(F\bullet G)^x= (F^x\bullet G^x)$.
\end{enumerate}
\end{lemma}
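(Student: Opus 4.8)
The plan is to prove each of the three homomorphism identities by \emph{function extensionality}: two power series in $Q^{S_1}$ (respectively $Q^{S_2}$) are equal exactly when they agree on every argument, so it suffices to fix an arbitrary $x\in S_1$ for the $(\cdot)^y$ identities and an arbitrary $y\in S_2$ for the $(\cdot)^x$ identities, then show that both sides applied to that argument denote the same element of $Q$. Each identity then reduces to unfolding the definition of the two-dimensional operation on $Q^{S_1\times S_2}$ given just above the lemma and comparing it with the definition of the corresponding one-dimensional operation on the curried spaces from Section~\ref{sec:fpsquantale}. Since $Q$ is a bi-quantale, hence a complete lattice, all suprema and infima in these definitions exist and no convergence side conditions arise.

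For item (1) I would compute, directly from the pointwise definition of $\sum$ on the two-dimensional space, $(\sum_{i\in I} F_i)^y\, x = (\sum_{i\in I} F_i)\, x\, y = \sum_{i\in I} (F_i\, x\, y)$, and observe that this coincides with $(\sum_{i\in I} F_i^y)\, x = \sum_{i\in I}(F_i^y\, x) = \sum_{i\in I}(F_i\, x\, y)$. The identity for $(\cdot)^x$ and the two infimum identities of item (2) follow in the same way, the infimum cases being lattice-dual. The conceptual content is merely that $\sum$ and $\prod$ are lifted pointwise in \emph{both} coordinates, so fixing one coordinate leaves them undisturbed.

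Item (3) carries the actual weight, and it is here that the careful pairing of operations in the statement matters. Unfolding the $\circ$-convolution gives $(F\circ G)^y\, x = (F\circ G)\, x\, y = \sum_{x=x_1\circ x_2} F\, x_1\, y\circ G\, x_2\, y$, and since $F^y\, x_1 = F\, x_1\, y$ and $G^y\, x_2 = G\, x_2\, y$ this is precisely $\sum_{x=x_1\circ x_2} F^y\, x_1\circ G^y\, x_2 = (F^y\circ G^y)\, x$. The point is that the $\circ$-convolution splits only the \emph{first} coordinate, so the fixed value $y$ rides along untouched in every summand; the matching identity $(F\bullet G)^x = F^x\bullet G^x$ for $\varphi_2$ is symmetric, with $\bullet$ splitting only the second coordinate.

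I expect the only genuine subtlety, rather than an obstacle, to be recognising \emph{why} the mixed combinations such as $(F\bullet G)^y$ or $(F\circ G)^x$ are deliberately absent from the claim: evaluating at a fixed point in the very coordinate over which the convolution ranges does not collapse to a single partial evaluation, so those pairings fail to be homomorphic. The lemma is stated exactly so that $\varphi_1$ respects the operations acting on the first coordinate and $\varphi_2$ those acting on the second, with both respecting the pointwise $\sum$ and $\prod$. Once this compatibility is in view, every identity is a one-line unfolding of definitions.
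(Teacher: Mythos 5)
Your proposal is correct and follows essentially the same route as the paper's own proof: extensionality plus a one-line unfolding of the pointwise and convolution definitions, with the infimum cases handled by duality and the representative computations for $\sum$ and $\circ$ matching the paper's calculation step for step. The extra remark on why the mixed pairings $(F\bullet G)^y$ and $(F\circ G)^x$ are excluded is a sensible observation but not needed for the proof.
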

%\fbox{is it an injection?}
\begin{proof}
  We only provide proofs for the first conjunct of $(a)$ and for
  $(c)$. The remaining proofs are similar.  For addition we calculate
  \begin{equation*}
    (\sum_{i\in I} F_i)^y\, x = (\sum_{i\in I} F_i)\, x\, y
    = \sum_{i\in I}(F_i\, x \, y)
    = \sum_{i\in I} (F_{i}^y\, x)
    = (\sum_{i\in I}F_{i}^y)\, x.
  \end{equation*}
  For composition $\circ$,
  \begin{align*}
   (F\circ G)^y\, x  &=(F\circ G)\, x\, y\\
  &=\sum_{x=x_1\circ x_2} (F\, x_1\, y)\circ (G\, x_2\, y)\\
&=\sum_{x=x_1\circ x_2} (F^y\, x_1)\circ (G^y\, x_2)\\
&= (F^y\circ G^y)\, x.
  \end{align*}
\end{proof}

If $(S_1,\circ,1_\circ)$ and $(S_2,\bullet,1_\bullet)$ are partial
monoids and the bi-quantale $Q$ has units $1^y$ and $1^x$ with respect
to $\circ$ and $\bullet$ (overloading notation), we define units on
$Q^{S_1\times S_2}$ as
\begin{equation*}
  \unit_\circ = \lambda x,y.
  \begin{cases}
    1_\circ, & \text{if } x = 1_\circ,\\
    0, & \text{otherwise},
  \end{cases}
\qquad
 \unit_\bullet = \lambda x,y.
  \begin{cases}
    1_\bullet, & \text{if } y = 1_\bullet,\\
    0, & \text{otherwise}.
  \end{cases}
\end{equation*}
The following result links these binary units with the unary units
$(\unit^y)_\circ:S_1\to Q$ and $(\unit^x)_\bullet:S_2\to Q$, as
defined in Section~\ref{sec:fpsquantale}.
\begin{lemma}\label{P:spacered2}~ 
  \begin{enumerate}
  \item $(\unit_\circ)^y = (\unit^y)_\circ$,
  \item  $(\unit_\bullet)^x= (\unit^x)_\bullet$.
  \end{enumerate}
\end{lemma}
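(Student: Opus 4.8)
The plan is to prove both identities by unfolding definitions, since each asserts an equality of two functions (of type $S_1\to Q$ in $(a)$ and of type $S_2\to Q$ in $(b)$), and two such functions are equal exactly when they agree at every argument. I would prove $(a)$ in full and then obtain $(b)$ by the evident symmetry that swaps the roles of $(S_1,\circ)$ and $(S_2,\bullet)$.

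For $(a)$, I would fix an arbitrary $x\in S_1$ and evaluate the left-hand side. By the definition of partial evaluation, $(\unit_\circ)^y\, x = \unit_\circ\, x\, y$, and by the definition of the binary unit $\unit_\circ$ this equals $1_\circ$ when $x=1_\circ$ and $0$ otherwise. The key observation is that $\unit_\circ$ inspects only its first argument, so the value is independent of the fixed $y$. This case distinction is precisely the definition of the one-dimensional unit power series $(\unit^y)_\circ$ on the partial monoid $(S_1,\circ)$ recalled from \refsec{sec:fpsquantale}. Hence $(\unit_\circ)^y\, x = (\unit^y)_\circ\, x$ for every $x$, which gives the claimed equality of functions.

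For $(b)$, the computation is the mirror image: fixing $y\in S_2$ and unfolding $(\unit_\bullet)^x\, y = \unit_\bullet\, x\, y$, one uses that $\unit_\bullet$ depends only on its second argument to read off $1_\bullet$ when $y=1_\bullet$ and $0$ otherwise, matching the definition of $(\unit^x)_\bullet$.

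I do not expect any genuine obstacle here: the statement is a definitional identity, and the only thing requiring care is the heavily overloaded notation, where superscripts denote partial evaluation and subscripts record which operation's unit is meant. The substance of the lemma is simply that pushing a partial evaluation through a case split that ignores the evaluated coordinate reproduces the corresponding one-dimensional unit; this is exactly what allows the homomorphism bookkeeping of \reflem{P:spacered1} to extend cleanly to the unit elements.
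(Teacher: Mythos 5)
Your proposal is correct and matches the paper's own argument: fix an argument, unfold the partial evaluation and the case-split definition of the binary unit, note that the result is independent of the frozen coordinate, and recognise the one-dimensional unit power series, with part (b) by the obvious symmetry. Nothing further is needed.
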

\begin{proof}
  For (a),
  \begin{equation*}
    (\unit_\circ)^y\, x = \unit_\circ\, x\, y =  
    \begin{cases}
      1_\circ, & \text{ if } x=1_\circ,\\
      0, & \text{otherwise}.
    \end{cases}
    =(\unit^y)_\circ\, x.
  \end{equation*}
 The proof of (b) is similar.
\end{proof}

By Lemmas~\ref{P:spacered1} and~\ref{P:spacered2}, a lifting from $Q$
can be decomposed into a lifting to $Q^{S_2}$ and, if the lifted
property is preserved, a function application in
$(Q^{S_2})^{S_1}$. Alternatively one can lift to $Q^{S_1}$ and then
use function application in $(Q^{S_1})^{S_2}$. In the above
constructions, there are two kinds of liftings: pointwise liftings
from $Q$ to $Q^{S_1}$ or $Q^{S_2}$ and lifting by convolution for
$Q^{S_1}$ and $Q^{S_2}$.

\begin{proposition}\label{prop:seq-lifting}
  Let $(S_1,\circ)$ be a partial semigroup and $S_2$ a set.  If
  $(Q,\le,\circ)$ is a (distributive) quantale, then so is
  $(Q^{S_1\times S_2},\le,\circ)$. Unitality and commutativity lift
  from $S_1$ and $Q$ to $Q^{S_1\times S_2}$.
\end{proposition}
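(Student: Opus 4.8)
The plan is to exploit the modular decomposition sketched just before the proposition: since $Q^{S_1\times S_2}\cong (Q^{S_2})^{S_1}$ under currying, I would obtain the quantale structure on $Q^{S_1\times S_2}$ by lifting $Q$ first to $Q^{S_2}$ and then to $(Q^{S_2})^{S_1}$ via \refthm{thm:quantale-lifting}, finally transporting everything back along the isomorphism.

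First I would establish that $Q^{S_2}$ is itself a (distributive) quantale. Because $S_2$ is merely a set, the relevant multiplication here is not a convolution but the pointwise lifting $(h\circ k)\, y = h\, y\circ k\, y$. The complete (distributive) lattice structure is already supplied by \reflem{lem:lattice-lifting}, so only associativity, distributivity over arbitrary suprema, and annihilation for this pointwise product remain, and each follows coordinatewise from the corresponding law in $Q$. Alternatively, one may regard $S_2$ as the discrete partial semigroup with $y\circ y = y$ and all other products undefined, exactly as in \refex{ex:multisets}; then \refthm{thm:quantale-lifting} applies verbatim and its convolution collapses to the pointwise product, since $y=y_1\circ y_2$ forces $y_1=y_2=y$. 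Either route yields that $Q^{S_2}$ is a quantale, commutative whenever $Q$ is, and unital with unit $\lambda y.\ 1$ whenever $Q$ is unital.

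Next I would apply \refthm{thm:quantale-lifting} to the partial semigroup $(S_1,\circ)$ with target quantale $Q^{S_2}$. This immediately gives that $(Q^{S_2})^{S_1}$ is a (distributive) quantale under convolution, with commutativity lifting when $S_1$ and $Q$ are commutative, and unitality lifting when $S_1$ is a partial monoid and $Q^{S_2}$ is unital, i.e.\ when $Q$ is unital.

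The final and only delicate step is to check that the currying isomorphism is a quantale isomorphism, that is, that the convolution produced on $(Q^{S_2})^{S_1}$ transports to the operation $\circ$ prescribed on $Q^{S_1\times S_2}$. Unfolding the outer convolution and evaluating at a fixed $y$, the pointwise product in $Q^{S_2}$ distributes through the supremum, giving $\sum_{x=x_1\circ x_2} F\, x_1\, y\circ G\, x_2\, y$, which is precisely $(F\circ G)\, x\, y$; this is exactly the content of \reflem{P:spacered1}(c), while the matching of the lifted unit with $\unit_\circ$ is \reflem{P:spacered2}. I therefore expect the main obstacle to be purely bookkeeping: keeping straight which $\circ$ denotes the pointwise product on $Q^{S_2}$ versus the convolution on the outer space, and confirming via \reflem{P:spacered1} and \reflem{P:spacered2} that the two-stage lifting reassembles into the single operation named in the statement.
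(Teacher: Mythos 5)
Your argument is correct, but it is organised differently from the proof the paper actually gives. The paper's proof does not pass through the target quantale $Q^{S_2}$ at all: it fixes $y\in S_2$, notes that $Q^{S_1}$ is a quantale by \refthm{thm:quantale-lifting}, and then transfers each law to $Q^{S_1\times S_2}$ via the jointly injective family of homomorphisms $(\cdot)^y\colon Q^{S_1\times S_2}\to Q^{S_1}$ from \reflem{P:spacered1} together with $\lambda$-abstraction (e.g.\ associativity is checked by computing $((F\circ G)\circ H)\,x\,y=((F^y\circ G^y)\circ H^y)\,x$ and reassociating in $Q^{S_1}$). You instead follow the modular decomposition the paper only advertises in the lead-up to the proposition: build the quantale $Q^{S_2}$ (pointwise, or as convolution over the discrete partial semigroup of \refex{ex:multisets}), apply \refthm{thm:quantale-lifting} once with this larger target, and transport along currying. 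Your route buys a single, uniform application of the main theorem with no per-law verification beyond the transport computation; its cost is that you must first establish that pointwise lifting of the multiplicative structure over a bare set yields a quantale, which \refthm{thm:quantale-lifting} does not literally cover (and note that in your discrete-semigroup variant the unitality clause of the theorem does not apply verbatim either, since $y\circ y=y$ gives no unit element in $S_2$ --- you correctly sidestep this by exhibiting $\lambda y.\,1$ directly). The paper's route avoids constructing $Q^{S_2}$ but has to spell out the transfer law by law. Your identification of the transport computation as the delicate step, and its verification via the pointwise nature of suprema and products in $Q^{S_2}$, is sound; just be aware that \reflem{P:spacered1}(c) and \reflem{P:spacered2} concern the other currying $(\cdot)^y$, so what you need is the analogous (and analogously proved) statement for $(\cdot)^x$.
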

\begin{proof}~ If $S_1$ is a partial semigroup and $Q$ a
  (distributive) quantale, then $Q^{S_1}$ is a (distributive) quantale
  by \refthm{thm:quantale-lifting}, and by $\lambda$-abstraction for
  $F=\lambda y.\ F^y\ x$ and the homomorphic properties of $(.)^y$ in
  Lemma~\ref{P:spacered1}. For example,
  \begin{align*}
    ((F\circ G)\circ H)\, x\, y &= ((F\circ G)\circ H)^y\, x\\
& =
    ((F^y\circ G^y)\circ H^y)\, x\\
& = (F^y \circ (G^y\circ H^y))\, x\\
& = (F\circ (G\circ H))^y\, x\\
 &= (F\circ (G\circ H))\ x\ y.
   \end{align*}
  If the quantale $Q$ is unital, then so is $Q^{S_1}$, again by
  \refthm{thm:quantale-lifting}. As previously, this follows by
  $\lambda$-abstraction and the homomorphic properties of $(.)^y$ by
  Lemmas~\ref{P:spacered1} and \ref{P:spacered2}. For instance,
  \begin{equation*}
    (\unit_\circ\circ F)\, x\, y = (1_\circ\circ
    F)^y\ x = (1_\circ^y \circ F^y)\, x =  F^y\, x =F\, x\, y.
  \end{equation*}
  If $S_1$ and $Q$ are both commutative, then 
  \begin{equation*}
    (F\circ G)\, x\, y=(F\circ G)^y\, x= (F^y \circ G^y)\, x =
    (G^y\circ F^y)\, x= (G\circ F)^y\ x = (G\circ
    F)\, x\, y
  \end{equation*}
  with the homomorphism properties of $(.)^y$ and commutativity on
  $Q^{S_1}$ due to \refthm{thm:quantale-lifting}.
\end{proof}
% \begin{proof}~ If $S_1$ is a partial semigroup and $Q$ a
%   (distributive) quantale, then $Q^{S_1}$ is a (distributive) quantale
%   by \refthm{thm:quantale-lifting}, and by $\lambda$-abstraction for
%   $F=\lambda y:S_2.\ F^y$ and the homomorphic properties of $(.)^y$ in
%   Lemma~\ref{P:spacered1}. For example,
%   \begin{align*}
%     (F\circ G)\circ H &=\lambda y:S_2.\ ((F\circ G)\circ H)^y\\
%  &=\lambda y:S_2.\ (F\circ (G\circ H))^y\\
%  &= F\circ (G\circ H).
%     % (F\circ 0) = \lambda y:S_2.\ (F\circ 0)_1=(\lambda y:S_2.\ F_1)\circ (\lambda y:S_2.\ 0_1)= (\lambda y:S_2.\ F_1)\circ 0 = 0. 
%   \end{align*}
%   If the quantale $Q$ is unital, then so is $Q^{S_1}$, again by
%   \refthm{thm:quantale-lifting}. As previously, this follows by
%   $\lambda$-abstraction and the homomorphic properties of $(.)^y$ by
%   Lemmas~\ref{P:spacered1} and \ref{P:spacered2}. For instance,
%   \begin{equation*}
%     \unit_\circ\circ F =\lambda y:S_2.\ (1_\circ\circ F)^y = \lambda
%   y:S_2. F^y=F.
%   \end{equation*}
%   If $S_1$ and $Q$ are both commutative, then 
%   \begin{equation*}
%     F\circ G=\lambda y:S_2.\ (F\circ G)^y=\lambda y:S_2.\ (G\circ F)^y = G\circ F
%   \end{equation*}
%   with the homomorphism properties of $(.)^y$ and commutativity on
%   $Q^{S_1}$ due to \refthm{thm:quantale-lifting}.
% \end{proof}
The next statement is immediate since
$Q^{S_1\times S_2}$ and $Q^{S_2\times S_1}$ are isomorphic.
\begin{corollary}\label{cor:conc-lifting} 
  Let $S_1$ be a set and $(S_2,\bullet)$ a partial semigroup. If
  $(Q,\le,\bullet)$ is a (distributive) quantale, then so is
  $(Q^{S_1\times S_2},\le,\bullet)$. Unitality and commutativity lift
  from $S_2$  and $Q$ to $Q^{S_1\times S_2}$.
\end{corollary}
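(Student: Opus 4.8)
The plan is to derive this corollary from Proposition~\ref{prop:seq-lifting} by exploiting the symmetry between the two coordinates, exactly as the preceding remark suggests. The hypotheses here are obtained from those of Proposition~\ref{prop:seq-lifting} by interchanging the roles of $S_1$ and $S_2$ (and of $\circ$ and $\bullet$): in Proposition~\ref{prop:seq-lifting} the partial semigroup sits in the first coordinate and convolution acts on it, whereas here the partial semigroup $(S_2,\bullet)$ sits in the second coordinate. So rather than redo the lifting from scratch, I would transport the existing result across the coordinate swap.

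Concretely, I would first introduce the swap map $\sigma\colon Q^{S_1\times S_2}\to Q^{S_2\times S_1}$ given by $\sigma(F)\, y\, x = F\, x\, y$. This is a bijection, and since all suprema and infima on both function spaces are defined by pointwise lifting from $Q$, it is immediate that $\sigma$ preserves $\sum_{i\in I}$ and $\prod_{i\in I}$; hence it is an isomorphism of complete (distributive) lattices by Lemma~\ref{lem:lattice-lifting}. Next I would apply Proposition~\ref{prop:seq-lifting} with $(S_2,\bullet)$ in the role of the partial semigroup and $S_1$ in the role of the set, which yields that $(Q^{S_2\times S_1},\le,\bullet)$ is a (distributive) quantale with unitality and commutativity lifting from $S_2$ and $Q$. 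The remaining step is to verify that $\sigma$ intertwines the two copies of $\bullet$: unfolding the definition of $F\bullet G$ on $Q^{S_1\times S_2}$, which acts on the second argument, against the convolution on $Q^{S_2\times S_1}$ handled by Proposition~\ref{prop:seq-lifting}, which acts on the first, gives
\begin{align*}
  \sigma(F\bullet G)\, y\, x
  &= (F\bullet G)\, x\, y
   = \sum_{y=y_1\bullet y_2} F\, x\, y_1 \bullet G\, x\, y_2\\
  &= \sum_{y=y_1\bullet y_2} \sigma(F)\, y_1\, x \bullet \sigma(G)\, y_2\, x
   = (\sigma(F)\bullet \sigma(G))\, y\, x.
\end{align*}
Thus $\sigma$ is a quantale isomorphism, and transporting the quantale axioms for $Q^{S_2\times S_1}$ back along $\sigma^{-1}$ establishes that $(Q^{S_1\times S_2},\le,\bullet)$ is a (distributive) quantale. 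Unitality and commutativity, being equational and lattice-theoretic properties preserved by isomorphisms, transfer in the same way from $S_2$ and $Q$.

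The argument is essentially bookkeeping, so there is no real obstacle. The one step meriting care is the intertwining identity above: one must check that the convolution $\bullet$ on $Q^{S_1\times S_2}$ acting on the $S_2$-argument corresponds, under the swap, precisely to the convolution on $Q^{S_2\times S_1}$ acting on the first ($S_2$) argument. Once the bound variables are renamed consistently this is a direct unfolding of the definitions, and the whole proof reduces to a single appeal to Proposition~\ref{prop:seq-lifting} together with the $Q^{S_1\times S_2}\cong Q^{S_2\times S_1}$ isomorphism.
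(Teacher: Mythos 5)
Your proposal is correct and follows exactly the route the paper takes: the paper dismisses this corollary in one line as ``immediate since $Q^{S_1\times S_2}$ and $Q^{S_2\times S_1}$ are isomorphic,'' i.e.\ transport Proposition~\ref{prop:seq-lifting} across the coordinate swap. You have merely made explicit the bookkeeping (the swap map $\sigma$, its lattice-isomorphism property, and the intertwining of the two convolutions) that the paper leaves implicit.
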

Proposition~\ref{prop:seq-lifting} and
Corollary~\ref{cor:conc-lifting} can therefore be combined into the
following lifting theorem for two-dimensional  power series.
\begin{theorem}\label{thm:biquantale}
  Let $(S_1,\circ)$ and $(S_2,\bullet)$ be partial semigroups. If
  $(Q,\le,\circ,\bullet)$ is a (distributive) bi-quantale, then so is
  $(Q^{S_1\times S_2},\le,\circ,\bullet)$. It is unital whenever $Q$
  is unital and $S_1$ and $S_2$ are partial monoids. A convolution on
  $Q^{S_1\times S_2}$ is commutative if the corresponding composition
  on $S_i$ and $Q$ are commutative.
\end{theorem}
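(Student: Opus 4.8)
The plan is to reduce the two-dimensional bi-quantale lifting to the two one-dimensional liftings already in hand, namely \refprop{prop:seq-lifting} and \refcor{cor:conc-lifting}. Recall that a bi-quantale $(Q^{S_1\times S_2},\le,\circ,\bullet)$ is, by definition, a structure whose two reducts $(Q^{S_1\times S_2},\le,\circ)$ and $(Q^{S_1\times S_2},\le,\bullet)$ are each quantales sharing the \emph{same} complete lattice $(Q^{S_1\times S_2},\le)$. Hence it suffices to treat the two multiplications separately and then check that the underlying lattice structure is common to both.

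First I would handle $\circ$. Since $Q$ is a bi-quantale, its reduct $(Q,\le,\circ)$ is a quantale; moreover $S_1$ is a partial semigroup while $S_2$ may be regarded merely as a set. Applying \refprop{prop:seq-lifting} directly yields that $(Q^{S_1\times S_2},\le,\circ)$ is a (distributive) quantale, with commutativity and unitality lifting from $S_1$ and $Q$ whenever present. Symmetrically, the reduct $(Q,\le,\bullet)$ is a quantale, $S_2$ is a partial semigroup and $S_1$ may be regarded as a set, so \refcor{cor:conc-lifting} gives that $(Q^{S_1\times S_2},\le,\bullet)$ is a (distributive) quantale, again with commutativity and unitality lifting from $S_2$ and $Q$. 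The unitality clause of the theorem then follows by combining the two unitality conclusions when $Q$ is unital and both $S_1$ and $S_2$ are partial monoids; the commutativity clause follows by invoking whichever of the two lifting results pertains to the convolution in question.

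The one genuine point to verify is that the two quantales just produced live on one and the same complete lattice, so that they amalgamate into a bi-quantale rather than remaining two unrelated structures. This is where I expect the only real care to be needed. The resolution is that both \refprop{prop:seq-lifting} and \refcor{cor:conc-lifting} obtain their lattice structure from the pointwise lifting of \reflem{lem:lattice-lifting} (iterated through the intermediate spaces $Q^{S_2}$ or $Q^{S_1}$), and in each case the order $f\le g\Leftrightarrow f+g=g$, the suprema $\sum_{i\in I}F_i$ and the infima $\prod_{i\in I}F_i$ are all defined pointwise in the arguments drawn from $S_1$ and $S_2$. Consequently the order and all suprema and infima agree regardless of which multiplication we started from, so the complete lattice $(Q^{S_1\times S_2},\le)$ is shared. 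With this coincidence in hand, the two quantale structures for $\circ$ and $\bullet$ fit together to form the asserted (distributive) bi-quantale, and the distributivity between infima and suprema carries over since each lifting already secures it, completing the proof.
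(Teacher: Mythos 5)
Your proposal is correct and matches the paper's own argument: the theorem is stated as an immediate combination of Proposition~\ref{prop:seq-lifting} for $\circ$ and Corollary~\ref{cor:conc-lifting} for $\bullet$. Your explicit check that the two quantale reducts share the same pointwise-defined complete lattice is a point the paper leaves implicit, but it is the same route.
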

Remember that a unital bi-quantale may have different units for its
two compositions.

As already mentioned, the construction of the bi-quantale of
two-dimensional power series generalises immediately to $n$ underlying
partial semigroups $(S_i,\circ_i)$, $n$-dimensional power series
$F:S_1\to \dots \to S_n\to Q$ and convolutions
\begin{equation*}
  (F\circ_i G)\ \dots \ x_i \ \ldots =\sum_{x_i=y\circ_i z} (F\ \dots \ y \ \dots)\circ_i (G\ \dots\ z\ \dots).
\end{equation*}
We do not pursue this generalisation in this article; the lifting
arguments apply without modification.

%%%%%%%%%%%%%%%%%%%%%%%%%%%%%%%%%%%%%%%%%%%%%%%%%%%%%%%%%%%%%%%%%%%%%%%%%

\section{Examples}\label{sec:biquantale-examples}

As examples of two-dimensional bi-quantales we present two interval
based models that distinguish between time and space dimensions. The
monoidal operators may be used to separate these two dimensions
independently; time is separated using chop, space using separating
conjunction as a notion of concurrent composition. The consideration
of such algebras with both kinds of separation was the starting point
of this article. In the second example of vector stream interval
functions, spatial or concurrent splitting is of course commutative,
whereas temporal splitting is not.

\begin{example}[Stream Interval
  Functions]\label{ex:stream-interval-functions}
  Let $(S_1,\cdot)$ be the partial semigroup $(I_P,\cdot)$ of closed
  intervals $I_P$ under interval function as in
  Example~\ref{ex:interval-functions} and let $S_2$ be the set of all
  functions of type $P\to A$ for an arbitrary set $A$. It follows from
  Proposition~\ref{prop:seq-lifting} that $Q^{I_P\times A^P}$ forms a
  distributive quantale, whenever $Q$ is a distributive quantale. A
  unit can be adjoined to $Q^{I_P\times A^P}$ along the lines of
  Example~\ref{ex:interval-functions}, but with a second parameter.

  As a typical interpretation, consider $P=\mathbb{R}$ with the
  standard order on reals as a model of time and let functions
  $f:\mathbb{R}\to A$ model the temporal behaviour or trajectories of
  some system. For instance, $f$ could be the solution of a
  differential equation. In that case, $F\ x\ f$ evaluates the
  behaviour of system $f$ in the interval $x$. Such kinds of functions
  have been called \emph{stream interval functions}~\cite{DHD14}. The
  convolution
  \begin{equation*}
    (F\cdot G)\, x\, f = \sum_{x=y\cdot z} (F\, y\, f) \cdot (G\, z\, f)
  \end{equation*}
  splits the interval $x$ into all possible prefix/suffix pairs $y$
  and $z$, applies $F$ to the behaviour of $f$ on interval $y$ and $G$
  to the behaviour of $f$ on interval $z$ and then combines these
  results. There are different ways in which the application of stream
  interval functions can be realised.  Moreover, the situation
  generalised to arbitrary finitely bounded intervals without fusion.

  As in the case of interval functions, our prime example of stream
  interval functions are \emph{stream interval predicates}, where
  $Q=\mathbb{B}$. Then convolution becomes a generalised version of
  chop or non-commutative separating conjunction:
 \begin{equation*}
    (F\cdot G)\, x\, f = \sum_{x=y\cdot z} (F\, y\, f) \sqcap (G\, z\, f).
  \end{equation*}
  A predicate $F$ could, for instance, test the values of a function
  $f$ over an interval $x$---at all points of $x$, at some points of
  $x$, at almost all points of $x$, at no points of $x$ and so on. It
  could, for instance, test, whether the trajectory of system $f$
  evolves within given boundaries, that is a flight path is within a
  given corridor or that a train moves according to a given time
  schedule.

  More concretely, let $P=A=\mathbb{R}$ and that $f\ t=t^3$ as shown
  below. Note that the diagram is not drawn to scale. 
  \begin{center}
    \scalebox{1.25}{\input{tcube.pspdftex}}
  \end{center}
  Let
  \begin{equation*}
    F\ x\
    f= \forall t\in x.\ f\ t\ge 0,\qquad\qquad G\ x\ f =\forall t\in x.\ f\
    t<0.
  \end{equation*}
  Then $F\ [0,10]\ f=1$ and $G\ [-7,-1]\ f=1$, but $F\ [-2,-1]\ f=0$
  and $G\ [-7,0]\ f=0$.  \qed
\end{example}

Stream interval predicates have been used to reason about real-time
systems \cite{DHD14}, but their interpretation in terms of power
series is new.  It is worth noting that $P$ may be instantiated to
other partial orders (e.g., $\mathbb{Z}$), allowing one to model both
discrete and continuous systems. 

Using Theorem~\ref{thm:biquantale}, one may further develop this
approach with rules for system-level reasoning by decomposing systems
along a time and space dimension. To the best of our knowledge, our
treatment is the first to offer both decompositions and to add a
natural notion of concurrency to interval logics. Exploration of these
rules in concrete models as well as their application towards
verification of example systems is left as future work. Here we
present one single example which is based on vectors of functions.

\begin{example}[Vector Stream Interval Functions]\label{ex:vector-stream-interval-functions}
  Let $f$ from the previous example now be a vector or product of
  functions $f_i$ such that $f:P\to A^n$, or more concretely
  $f:\mathbb{R}\to A^n$. One can then split $f(t)$ as in
  Example~\ref{ex:separating-conjunction-vectors} with respect to the
  commutative operation $\ast$ on $A^n$. For functions $f,g:P\to A^n$
  we define
  \begin{equation*}
    (f\ast g)\, p = f\, p \ast g\, p
  \end{equation*}
  by pointwise lifting. This turns $(S_2,\ast)=((A^n)^P,\ast)$ into a
  partial commutative semigroup, whereas $(S_1,\cdot)$ is again the
  partial semigroup $(I_P,\cdot)$. According to
  Theorem~\ref{thm:biquantale}, $Q^{S_1\times S_2}$ forms a
  distributive bi-quantale with commutative convolution $\ast$
  whenever $Q$ does.

  The stream interval predicates in the case of $Q=\mathbb{B}$ yield
  once more an interesting special case.  Now a vector of functions,
  for instance the solution to a system of differential equations, is
  applied to arguments ranging over an interval and the stream
  interval predicates evaluate the behaviour modelled by this vector
  of functions on the interval. 

  The convolution
  \begin{equation*}
    (F\cdot
    G)\, x\, f = \sum_{x=y\cdot z}(F\, y\, f)\sqcap (G\, z\, f)
  \end{equation*}
  can be seen as a horizontal composition. It evaluates the full vector
  of functions to splittings of the interval $x$, using $F$ for the
  prefix part of the splitting and $G$ for its suffix part.  In the
  context of interval logics this corresponds to a chop operation,
  which has a temporal flavour.
  
  The convolution or separating conjunction
  \begin{equation*}
    (F\ast G)\, x\, f= \sum_{f=g\ast h} (F\, x\, g)\sqcap (G\, x\, h)
  \end{equation*}
  can be seen as a vertical composition. It evaluates the conjunction
  of $F$ and $G$, which is obtained by separating the vector $f$ into
  all possible parts $g$ and $h$, over the full interval $x$.  Applied
  to vectors this adds an algebraic notion of concurrent composition
  to interval calculi; it clearly has a spatial flavour. 

  The two types of convolution may be distinguished using diagrams
  such as the ones below, where time occupies the $x$-axis and space
  the $y$-axis. 
  \begin{center}
    \scalebox{0.75}{\input{mult-2dim.pspdftex}}
  \end{center}
  The left diagram depicts $(F \ast G) \cdot (H \ast K)$, where the
  convolution first splits the stream interval function along the
  $x$-axis (time dimension) to give us formulae $F \ast G$ and $H \ast
  K$. Each of these is then split along the $y$-axis (space
  dimension). On the other hand the right diagram depicts $(F \cdot H)
  \ast (G \cdot K)$, where the space dimension is split first to give
  $F \cdot H$ and $G \cdot K$, followed by a split along the time
  dimension.  \qed
\end{example}
The examples in Section~\ref{sec:fpscomquantaleexamples} suggest that
other notions of spatial separation, for instance those based on
disjoint unions for families of functions, or more specific notions
such as separating conjunction on heaps, can be used instead of vector
separation.  Theorem~\ref{thm:biquantale} is modular in this
regard. We therefore do not present these examples in detail.

%%%%%%%%%%%%%%%%%%%%%%%%%%%%%%%%%%%%%%%%%%%%%%%%%%%%%%%%%%%%%%%%%%%%%%%%%

\section{Power Series over Futuristic Monoids}\label{sec:futuristic}

This section adapts the power series approach to a case which is
appropriate, for instance, for languages with finite and infinite
words and for intervals which may be semi-infinite in the sense that
they have no upper bounds.  Such approaches are, for instance,
appropriate for total correctness reasoning, where termination cannot
be assumed or for reactive (concurrent) systems. We model these cases
abstractly with monoids which, due to lack of better nomenclature, we
call futuristic.

Formally, a partial semigroup $(S,\cdot)$ is \emph{futuristic} if
$S=S^u\cup S^b$, $S^u\cap S^b=\emptyset$ and $x\cdot y$ is undefined
whenever $x\in S^u$. Thus, $S^u$ and $S^b$ correspond to the unbounded
and bounded elements of $S$, respectively. For $S^b$, we require that if
$x \cdot y \in S^b$, then $x \in S^b$.

In that case, for $f,g:X\to Y$, we define
\begin{equation*}
  (f\cdot g)\, x =  \sum_{x=y\cdot z} (f\, y)\cdot (g\, z) +
  \begin{cases}
    f\, x, &\text{ if } x\in S^u,\\
    0, & \text{ if } x\in S^b.
  \end{cases}
\end{equation*}

\begin{lemma}
  \label{lem:futuristic-quantale}
  Let $(S,\cdot)$ be a futuristic partial semigroup. If $Q$ is a
  (distributive) quantale, then $Q^S$ is a (distributive) quantale
  with $\zero:S\to Q$ not necessarily a right annihilator and left
  distributivity holding only for non-empty suprema.
\end{lemma}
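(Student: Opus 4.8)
The plan is to follow the proof of \refthm{thm:quantale-lifting} and to isolate exactly where the extra summand in the convolution forces the stated weakenings. The order $\le$, the suprema $\sum_{i\in I}$, the infima $\prod_{i\in I}$ and the infinite distributivity laws between them are all defined by pointwise lifting and are untouched by the change to $\cdot$, so \reflem{lem:lattice-lifting} gives at once that $(Q^S,\le)$ is a complete (distributive) lattice. Everything nonstandard therefore concerns the multiplicative reduct, and I would check, in turn, associativity of $\cdot$, the two distributivity laws and the annihilation laws, reading the caveats in the statement off these calculations. Throughout I write $\iota_f(x)=f\, x$ for $x\in S^u$ and $\iota_f(x)=0$ for $x\in S^b$, so that
\begin{equation*}
  (f\cdot g)\, x = \sum_{x=y\cdot z} f\, y\cdot g\, z + \iota_f(x).
\end{equation*}
The single structural fact that drives every case split is that in any defined product $y\cdot z$ the left factor lies in $S^b$, while $y\cdot z\in S^u$ holds exactly when $z\in S^u$ (equivalently $y\cdot z\in S^b$ exactly when $z\in S^b$); this is precisely what the conditions imposed on $S^b$ provide.

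Associativity is the crux. Expanding $((f\cdot g)\cdot h)\, x$, the inner improper term $\iota_f$ is evaluated at the outer left factor, which is bounded, and hence vanishes; the only surviving improper contribution is $\iota_{f\cdot g}(x)$, equal for $x\in S^u$ to $\sum_{x=y\cdot z} f\, y\cdot g\, z + f\, x$ and to $0$ for $x\in S^b$. Expanding $(f\cdot(g\cdot h))\, x$ instead, the improper contributions are $\sum_{x=y\cdot z,\, z\in S^u} f\, y\cdot g\, z$, arising from $\iota_g$ at the right factor (which may now be unbounded), together with $\iota_f(x)$. The pure triple sums coincide by associativity in $S$ and $Q$ exactly as in \refthm{thm:quantale-lifting}, so associativity reduces to matching the improper parts. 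Here the structural fact is indispensable: for $x\in S^u$ every splitting has $z\in S^u$, whence $\sum_{x=y\cdot z} f\, y\cdot g\, z=\sum_{x=y\cdot z,\, z\in S^u} f\, y\cdot g\, z$, while for $x\in S^b$ no splitting has $z\in S^u$, so both the $\iota_g$-contribution on one side and the whole improper part on the other vanish. The two $f\, x$ terms agree, giving associativity. I expect this matching to be the main obstacle, and it is exactly the point where the $S^u/S^b$ partition, rather than mere partiality, is needed: if the unboundedness of a product were not controlled by its right factor, the improper summand would attach to the wrong elements and associativity would break.

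Finally, the distributivity laws display the two weakenings. For right distributivity one computes that the improper term of $(\sum_{i\in I} f_i)\cdot g$ is $\sum_{i\in I} f_i\, x$ on $S^u$, which is exactly the improper part $\sum_{i\in I}\iota_{f_i}(x)$ of $\sum_{i\in I}(f_i\cdot g)$; hence right distributivity holds for every $I$, and the instance $I=\emptyset$ yields the left annihilation $\zero\cdot g=\zero$. For left distributivity the improper term of $f\cdot\sum_{i\in I} g_i$ is $\iota_f(x)$ --- it depends only on $f$ --- whereas that of $\sum_{i\in I}(f\cdot g_i)$ is $\sum_{i\in I}\iota_f(x)$. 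By idempotency of join the constant family collapses, $\sum_{i\in I}\iota_f(x)=\iota_f(x)$, precisely when $I\neq\emptyset$, so left distributivity holds for non-empty suprema. When $I=\emptyset$ the two sides differ: $(f\cdot\zero)\, x=\iota_f(x)=f\, x$ for $x\in S^u$, which is in general nonzero, so $\zero$ need not be a right annihilator. These are exactly the two qualifications in the statement; commutativity and unitality, when available, lift as in \refthm{thm:quantale-lifting} by the same case analysis but are not needed here.
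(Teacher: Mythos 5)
Your proof is correct and takes essentially the same approach as the paper's: the lattice structure is deferred to \reflem{lem:lattice-lifting}, and associativity, both distributivity laws and annihilation are re-verified by tracking the extra summand (your $\iota_f$), with the failure of right annihilation and of left distributivity for $I=\emptyset$ isolated exactly where the paper isolates them. The one point to flag is that the biconditional you invoke, $y\cdot z\in S^u$ iff $z\in S^u$, goes slightly beyond the closure condition on $S^b$ as literally stated in \refsec{sec:futuristic}; however, it holds in the intended models and is precisely what the paper's own associativity computation uses implicitly (in its first line and at the step annotated ``$w\in S^b$''), so this is a looseness inherited from the paper rather than a gap in your argument --- if anything you are more explicit, since you also treat the $x\in S^b$ case that the paper dismisses as already covered by \refthm{thm:quantale-lifting}.
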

\begin{proof}
  We need to verify the laws involving `$\cdot$' with our new
  multiplication. It suffices to consider the cases where $x\in S^u$;
  the others are covered by Theorem~\ref{thm:quantale-lifting}. For
  left distributivity we calculate, for $I\neq\emptyset$, 
  \begin{align*}
    (f\cdot \sum_{i\in I}g_i)\, x & = f\, x + \sum_{x=y\cdot z} f\, y\cdot \sum_{i\in I}(g_i\, z)\\ 
&= (\sum_{i\in I} f\, x) + \sum_{i\in I}\sum_{x=y\cdot z} (f\, y\cdot g_i\, z)\\
&= \sum_{i\in I} (f\, x + \sum_{x=y\cdot z} (f\, y\cdot g_i\, z))\\
&= (\sum_{i\in I} (f\cdot g_i))\, x.
  \end{align*}
  For $I=\emptyset$, however $(f\cdot \zero)\, x = f\, x$ if $x\in
  S^u$, hence in this case left distributivity fails.

  For right distributivity, which is no longer opposition dual, we
  calculate
  \begin{align*}
    ((\sum_{i\in I}f_i)\cdot g)\, x &= (\sum_{i\in I} f_i\, x) + \sum_{x=y\cdot z} (\sum_{i\in I} f_i\, y)\cdot g\, z\\
&= (\sum_{i\in I} f_i\, x) + \sum_{i\in I}\sum_{x=y\cdot z} (f_i\, y\cdot g\, z)\\
&= \sum_{i\in I} (f_i\, x + \sum_{x=y\cdot z} (f_i\, y\cdot g\, z))\\
&=(\sum_{i\in I} (f_i\cdot g))\, x.
  \end{align*}
  Left annihilation is as usual a special case of right
  distributivity. We calculate explicitly
\begin{equation*}
  (\zero\cdot f)\, x = \zero\, x + \sum_{x=y\cdot z} \zero\,
  y\cdot f\, z= 0+0=0 
\end{equation*}
Finally, for associativity, we calculate
    \begin{align*}
      (f\cdot (g\cdot h))\,x 
&= f\,x + \sum_{x=y\cdot z} f\,y\cdot (g\,z + \sum_{z=u\cdot v} g\,u\cdot h\,v)\\
&= f\,x + (\sum_{x=y\cdot z} f\,y\cdot g\,z) + \sum_{x = y \cdot z} f\, y \cdot (\sum_{z=u \cdot v} g\,u\cdot h\,v)\\
&= (f\cdot g)\,x + \sum_{x=y \cdot u \cdot v} f\,y\cdot g\,u \cdot h\,v\\
&= (f\cdot g)\,x + \sum_{x=w \cdot v} (\sum_{w = y \cdot u} f\,y \cdot g\,u) \cdot h\,v\\
&= (f\cdot g)\,x + \sum_{x=w\cdot v}(f\cdot g)\,w \cdot h\,v\\
&= ((f\cdot g)\cdot h)\,x.
    \end{align*}
The last but first step uses the fact that $w\in S^b$
\end{proof}
\begin{proposition}\label{prop:futuristic-biquantale}
  Let $(S_1,\circ)$ be a futuristic partial semigroup and $S_2$ a set.
  If $(Q,\le, \circ)$ is a (distributive) quantale, then
  $(Q^{S_1\times S_2},\le,\circ)$ is a (distributive) quantale with
  $\zero$ not necessarily a right annihilator and left distributivity
  holding only for non-empty suprema. Unitality lifts from $Q$ to
  $Q^{S_1\times S_2}$ with unit $\unit_\circ$ if $S_1$ is a partial
  monoid.
\end{proposition}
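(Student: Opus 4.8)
The plan is to mimic the modular reduction used in the proof of \refprop{prop:seq-lifting}: transfer the one-dimensional futuristic lifting of \reflem{lem:futuristic-quantale} to the two-dimensional space $Q^{S_1\times S_2}\cong (Q^{S_1})^{S_2}$ through the partial-evaluation homomorphism $(.)^y$, followed by $\lambda$-abstraction over $y\in S_2$. First I would record the convolution on $Q^{S_1\times S_2}$ induced by the futuristic structure on $S_1$, namely
\[
  (F\circ G)\, x\, y = \sum_{x=x_1\circ x_2} (F\, x_1\, y)\circ (G\, x_2\, y) +
  \begin{cases}
    F\, x\, y, & \text{if } x\in S_1^u,\\
    0, & \text{if } x\in S_1^b,
  \end{cases}
\]
so that composition on the first argument reproduces, fibrewise in $y$, the futuristic convolution of \reflem{lem:futuristic-quantale} on $Q^{S_1}$.

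The only genuinely new ingredient is to re-establish the homomorphism clause of \reflem{P:spacered1}(c) for this modified product, that is $(F\circ G)^y=F^y\circ G^y$. The calculation is the one already used for the ordinary convolution; the point is that the case-split summand $F\, x\, y$ (respectively $0$) equals $F^y\, x$ (respectively $0$) after evaluation at $y$, so the extra futuristic term commutes with $(.)^y$. The supremum and infimum clauses of \reflem{P:spacered1} are untouched and carry over verbatim, and the complete (distributive) lattice structure lifts unproblematically: since $S_2$ is merely a set, \reflem{lem:lattice-lifting} applied twice lifts it from $Q$ through $Q^{S_1}$ to $Q^{S_1\times S_2}$.

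With the homomorphism in hand, I would invoke \reflem{lem:futuristic-quantale} to conclude that $Q^{S_1}$ is a (distributive) quantale under the futuristic convolution---with $\zero$ not necessarily a right annihilator and left distributivity restricted to non-empty suprema---and then lift each multiplicative law to $(Q^{S_1})^{S_2}$ by $\lambda$-abstraction over $y$, exactly as in \refprop{prop:seq-lifting}: associativity, for instance, follows from $((F\circ G)\circ H)^y=(F^y\circ G^y)\circ H^y=F^y\circ(G^y\circ H^y)=(F\circ(G\circ H))^y$ together with pointwise equality over $S_2$. Right distributivity, which is no longer opposition dual in the futuristic setting, is transported directly from its separate one-dimensional proof in \reflem{lem:futuristic-quantale} through the same homomorphism. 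For unitality, assuming $S_1$ is a partial monoid, I would use \reflem{P:spacered2} to identify $(\unit_\circ)^y$ with the unary unit $(\unit^y)_\circ$ and then conclude as in the unital case of \refprop{prop:seq-lifting}.

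The point requiring care---rather than a genuine obstacle---is that the \emph{defects} of the futuristic product must be tracked faithfully through the lifting rather than silently discarded. Because $(.)^y$ is a homomorphism for the modified product and $\lambda$-abstraction over the set $S_2$ preserves equalities and the relevant restricted distributivities fibrewise, the failure of right annihilation for $\zero$ and the restriction of left distributivity to non-empty index sets lift precisely as stated: for the empty supremum one obtains $(F\circ\zero)\, x\, y = F\, x\, y$ whenever $x\in S_1^u$, mirroring exactly the one-dimensional counterexample.
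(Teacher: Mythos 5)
Your proposal is correct and follows exactly the route the paper intends: the paper's own ``proof'' is the single remark that one adapts the argument of \refprop{prop:seq-lifting} to \reflem{lem:futuristic-quantale}, and you have carried out precisely that adaptation, correctly identifying the one genuinely new check (that the partial-evaluation map $(.)^y$ remains a homomorphism for the modified futuristic convolution) and tracking the failure of right annihilation and of empty-index left distributivity through the lifting.
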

The proof adapts that of Proposition~\ref{prop:seq-lifting} to
Lemma~\ref{lem:futuristic-quantale}. A treatment of historistic
intervals is dual, that is, left annihilation
fails. Proposition~\ref{prop:futuristic-biquantale} can be extended
further into an analogue of Theorem~\ref{thm:biquantale}. We do not
explicitly display this statement.

\begin{example}[Formal Languages with Infinite Words]
  Let $\Al$ be a finite alphabet. Let $\Al^\ast$, as previously,
  denote the set of finite words over $\Al$ and $\Al^\omega$ the
  set of all infinite words, which are sequences of type
  $\mathbb{N}\to \Al$. Let
  $\Al^\infty=\Al^\ast\cup\Al^\omega$. Then $\Al^\ast\cap
  \Al^\omega=\emptyset$ by definition. Every language
  $L\subseteq\Al^\infty$ may contain finite as well as infinite
  words and we write $\mathsf{fin}(L)$ and $\mathsf{inf}(L)$ for
  the sets of all finite and infinite words in $L$.

  In this context it is natural to disallow the concatenation of an
  infinite word with another word, hence $\Al^\infty$ is endowed
  with a futuristic partial monoid structure. In addition, the product
  of $L_1,L_2\subseteq\Al^\infty$ is commonly defined as
  \begin{equation*}
    L_1\cdot L_2=\mathsf{inf}(L_1)\cup \{vw \mid v\in\mathsf{fin}(L_2)\wedge w\in L_2\}.
  \end{equation*}
  This is captured by the futuristic product with $Y=\mathbb{B}$. It
  then follows from Lemma~\ref{lem:futuristic-quantale} that
  $\Al^\infty$ forms a distributive quantale in which
  $L\cdot\emptyset=\emptyset$ need not hold and left distributivity
  holds only for non-empty suprema. In fact, the absence of right
  annihilation can be verified with the singleton stream
  $L=\{aaa\dots\}$.\qed
\end{example}
Models with finite/infinite paths and traces can be built in a similar
fashion.

\begin{example}[Functions and Predicates over Futuristic Intervals]
  Let $(P,\le)$ be a linear order without right endpoint. Let $I_P^f$
  stand for the set of all non-empty closed intervals over $X$ and let
  $I_X^i$ denote the set of all \emph{futuristic intervals}
  $[a,\infty]= \{ b\ |\ b\ge a\}$. This does not mean that we add an
  explicit element $\infty$ to $X$; $\infty$ is merely part of our
  naming conventions. Then $I_X=I_X^f\cup I_X^i$ and $I_X^f\cap
  I_X^i=\emptyset$. The fusion product of intervals can now be
  redefined as
\begin{equation*}
  x\cdot y =
  \begin{cases}
    x, &\text{ if } x\in I_X^i,\\
    [x_{\min},y_{\max}], &\text{if } x\in I_X^f \text{ and } x_{\max}=y_{\min},\\
\bot, &\text{otherwise},
  \end{cases}
\end{equation*}
where $y_{\max}=\infty$ is included as an option.  It then follows
from Lemma~\ref{lem:futuristic-quantale} that $Q^{I_P}$ forms a
distributive quantale in which $\zero$ is not necessarily a right
annihilator. In fact, $f\circ \zero = \zero$ can be falsified with any
interval $x=[a,\infty]$ and interval predicate $f=\lambda x.\ a\in
x$.\qed
\end{example}
An example of closed and open intervals without fusion can be obtained
along the same lines. Examples of bi-quantales based on stream
functions over futuristic intervals with a notion of separating
conjunction can be obtained in a straightforward way.

%%%%%%%%%%%%%%%%%%%%%%%%%%%%%%%%%%%%%%%%%%%%%%%%%%%%%%%%%%%%%%%%%%%%%%%%%%%%

\section{Interchange Laws}\label{sec:interchange}

Algebras in which a spatial or concurrent separation operation
interact with a temporal or sequential one have already been studied,
for instance, in the context of concurrent Kleene
algebra~\cite{HMSW11}. In addition to the trioid or bi-quantale laws,
these algebras provide interesting interaction laws between the two
compositions, which in this context are interpreted as concurrent and
sequential composition.  Such laws are, obviously, of
general interest.

More concretely, the following \emph{interchange laws} hold in
concurrent Kleene algebras:
\begin{align*}
  (x\ast y)\cdot z & \le x\ast (y\cdot z),\\
  x\cdot (y\ast z) & \le (x\cdot y)\ast z ,\\
  (w\cdot x)\ast (y\cdot z) & \le (w\cdot y)\ast (x\cdot z).
\end{align*}
We call the first two laws \emph{small interchange laws} and the last
one \emph{weak interchange law}.  These laws hold in models of
concurrency including shuffle languages and certain classes of
partially ordered multisets~\cite{Gischer}. It has been shown that one
of the small interchange laws is equivalent to a separation logic style
frame rule in a certain encoding of Hoare logics~\cite{Locality}. The
weak interchange law, in turn, is equivalent to one of the standard
concurrency rules for Hoare logic, which is similar to those
considered in Owicki and Gries' logic~\cite{OG76} or in concurrent
separation logic~\cite{COY07}. This relationship is considered further
in Section~\ref{sec:hoare}.

The close relationship between power series and separation logic and
the similarity between two-dimensional power series and concurrent
Kleene algebras make it worth considering the interchange laws in this
setting.  However we obtain mainly negative results.

To start with a positive result, we establish interchange laws between
other kinds of operations.

% First, the following subdistributivity laws are immediate
% consequences of isotonicity of sequential and parallel composition,
% which themselves follow from distributivity.

% \begin{lemma}
%   In every quantale,
%   \begin{gather*}
%     x\cdot (\prod_{i\in I} y_i) \le \prod_{i\in I} (x\cdot y_i),\qquad
%     (\prod_{i\in I} y_i)\cdot x \le \prod_{i\in I} (y_i\cdot x),\\
%     x\ast (\prod_{i\in I} y_i) \le \prod_{i\in I} (x\ast y_i).
%   \end{gather*}
% \end{lemma}

\begin{lemma}\label{P:quantaleprops}
  In every quantale, the following interchange laws hold:
\begin{equation*}
  (w\sqcap x)\cdot (y\sqcap z) \le (w\cdot
  y)\sqcap (x\cdot z),\qquad (w\sqcap x)\ast (y\sqcap z) = (w\ast
  y)\sqcap (x\ast z).
  \end{equation*}
\end{lemma}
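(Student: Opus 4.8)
The plan is to derive both laws from the two isotonicity rules recorded among the quantale axioms, namely $a\le b\Rightarrow c\cdot a\le c\cdot b$ and $a\le b\Rightarrow a\cdot c\le b\cdot c$, together with the defining property of $\sqcap$ as the greatest lower bound for $\le$. All four factors appearing on the left-hand sides sit below the corresponding factors on the right, so the real work is to package these comparisons correctly and to decide where, if anywhere, commutativity is genuinely needed.

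First I would prove the inequality. From $w\sqcap x\le w$, $w\sqcap x\le x$, $y\sqcap z\le y$ and $y\sqcap z\le z$, two applications of isotonicity (one on each argument of $\cdot$) give $(w\sqcap x)\cdot(y\sqcap z)\le w\cdot(y\sqcap z)\le w\cdot y$ and, symmetrically, $(w\sqcap x)\cdot(y\sqcap z)\le x\cdot(y\sqcap z)\le x\cdot z$. Hence the left-hand side is a common lower bound of $w\cdot y$ and $x\cdot z$, and the universal property of $\sqcap$ yields $(w\sqcap x)\cdot(y\sqcap z)\le(w\cdot y)\sqcap(x\cdot z)$. No commutativity enters here, and I would stress that isotonicity alone produces only this one direction, so it cannot by itself deliver an equality.

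For the second law the inequality $(w\sqcap x)\ast(y\sqcap z)\le(w\ast y)\sqcap(x\ast z)$ follows verbatim from the same argument applied to the commutative product $\ast$. The substance of the statement is therefore the converse $(w\ast y)\sqcap(x\ast z)\le(w\sqcap x)\ast(y\sqcap z)$, and I expect this reverse inequality to be the main obstacle. The tension is structural: in a bare quantale multiplication distributes over arbitrary suprema but not over infima, so the reverse direction cannot be manufactured from monotonicity and must instead exploit genuine interaction of $\ast$ with $\sqcap$. The plan is to use commutativity of $\ast$ to rewrite the two products into a common shape and then to push $\ast$ through the meet, collapsing the cross-terms $w\ast y$, $x\ast y$, $w\ast z$, $x\ast z$ back onto $(w\sqcap x)\ast(y\sqcap z)$; the delicate point I would isolate as a separate claim is precisely the distribution of $\ast$ over $\sqcap$, which is the nontrivial ingredient the commutative target is meant to supply and which holds, in particular, in the boolean quantale $\mathbb{B}$ where $\ast$ coincides with $\sqcap$. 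I would carry out the routine isotonicity bookkeeping first and concentrate the remaining effort on verifying this single converse inequality.
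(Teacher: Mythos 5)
Your argument for the first law, and for the $\le$ direction of the second, is correct and is essentially the only argument available: isotonicity in each argument gives $(w\sqcap x)\cdot(y\sqcap z)\le w\cdot y$ and $(w\sqcap x)\cdot(y\sqcap z)\le x\cdot z$, and the universal property of $\sqcap$ does the rest. For comparison, the paper states this lemma without any proof at all, and only the first, inequational law is ever used downstream (in the proof of Lemma~\ref{lem:quantale-concrule}), so this part of your proposal is complete and matches what the authors must have intended.

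The genuine problem is the converse inequality $(w\ast y)\sqcap(x\ast z)\le(w\sqcap x)\ast(y\sqcap z)$ that you defer to a separate claim. Your proposed route cannot close it: distribution of $\ast$ over $\sqcap$ is not a quantale axiom (quantales distribute only over suprema), and even granting it you would obtain $(w\sqcap x)\ast(y\sqcap z)=(w\ast y)\sqcap(w\ast z)\sqcap(x\ast y)\sqcap(x\ast z)$, which carries extra cross-terms and is therefore still only $\le(w\ast y)\sqcap(x\ast z)$. In fact the stated equality is false in general commutative quantales. In the max-plus quantale $(\mathbb{N},\max,+)$ of Example~\ref{ex:multisets}, where $\sqcap=\min$, taking $w=z=0$ and $x=y=1$ gives $\min(w,x)+\min(y,z)=0$ while $\min(w+y,x+z)=1$; the same failure occurs in the powerset quantale of any nontrivial commutative monoid, e.g.\ with $w=\{0\}$, $x=\{1\}$, $y=\{1\}$, $z=\{0\}$, where the left side is empty and the right side is $\{1\}$. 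The equality does hold when $\ast$ coincides with $\sqcap$ (as in $\mathbb{B}$, and more generally in frames), which is presumably the situation the authors had in mind. So your instinct that the reverse direction is the real obstacle is exactly right, but the honest conclusion is that in a bare quantale it should be weakened to $\le$ rather than proved.
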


It turns out, however, that the small and weak interchange laws
between sequential and concurrent composition do not hold in
general. This is established by the counterexamples which support the
following lemma.
\begin{proposition}
  There are $F,G,H,K:S_1\to S_2\to \mathbb{B}$ such that the following
  holds.
  \begin{enumerate}\label{prop:interchangeref}
  \item $F\cdot G\not\le F\ast G$,
  \item $(F\ast G)\cdot H\not\le F\ast (G\cdot H)$,
  \item $F\cdot (G\ast H)\not\le (F\cdot G)\ast H$,
  \item $(F\ast G)\cdot (H\ast K)\not\le (F\cdot H)\ast (G\cdot K)$.
  \end{enumerate}
\end{proposition}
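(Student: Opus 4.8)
The plan is to use that the order on $Q^{S_1\times S_2}$ is pointwise, so for $Q=\mathbb{B}$ each non-inequality $A\not\le B$ reduces to exhibiting a single pair $(x,y)$ with $A\,x\,y=1$ and $B\,x\,y=0$. I would therefore fix, once and for all, two minimal partial semigroups whose splittings I control completely: a non-commutative $S_1=\{L,R,W\}$ whose only defined product is $W=L\circ R$, and a commutative $S_2=\{T,B,S\}$ whose only defined product is $S=T\ast B=B\ast T$. Both are routinely associative partial semigroups, and, crucially, neither carries a unit, so no spurious decompositions such as $x=1\circ x$ arise. By \refthm{thm:biquantale}, $\mathbb{B}^{S_1\times S_2}$ is a genuine bi-quantale with convolutions $\cdot$ (splitting the first argument) and $\ast$ (splitting the second), and I would test all four laws at the single point $(W,S)$, where unfolding the definitions turns $(\Phi\cdot\Psi)\,W\,S$ into a split $W=L\circ R$ and $(\Phi\ast\Psi)\,x\,S$ into a split $S=T\ast B$ taken in either order.

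For items (a)–(c) the failure is a resolution mismatch: since $\cdot$ decomposes the time coordinate while $\ast$ decomposes the space coordinate, each law breaks on a function that one side evaluates on a whole coordinate while the other side can only reach a proper split-piece of it. Writing $\chi_{(x,y)}$ for the characteristic function of the single point $(x,y)$, I would witness (a) by $F=\chi_{(L,S)}$ and $G=\chi_{(R,S)}$: the left side reads $F,G$ on the time-pieces $L,R$ and yields $1$, whereas $F\ast G$ can only read them over the whole time $W$ and yields $0$. Item (b) is witnessed by $F=\chi_{(L,T)}$, $G=\chi_{(L,B)}$, $H=\chi_{(R,S)}$, and its opposition dual (c) by $F=\chi_{(L,S)}$, $G=\chi_{(R,T)}$, $H=\chi_{(R,B)}$; in each case a one-line unfolding at $(W,S)$ gives left value $1$ and right value $0$, the offending literal being a function forced onto a whole coordinate where it is false.

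The real work is the weak interchange law (d), and this is where I expect the main obstacle. A uniquely-splitting model is not enough: at any point whose two coordinates decompose in one aligned way, both $(F\ast G)\cdot(H\ast K)$ and $(F\cdot H)\ast(G\cdot K)$ collapse to the same conjunction, so no failure is visible. The failure must exploit that the commutative $\ast$ presents the split $S=T\ast B$ in two orders: on the left, after splitting time $W=L\circ R$, the factors $F\ast G$ (over $L$) and $H\ast K$ (over $R$) may independently choose opposite space-orderings, whereas on the right, after splitting space $S=T\ast B$, the factors $F\cdot H$ and $G\cdot K$ force $F,H$ to share one space-part and $G,K$ the other. I would therefore use the crossed pattern $F=\chi_{(L,T)}$, $G=\chi_{(L,B)}$, $H=\chi_{(R,B)}$, $K=\chi_{(R,T)}$. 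The left side realizes it by letting $(F,G)$ use $S=T\ast B$ while $(H,K)$ use $S=B\ast T$, giving $1$; the right side, under either ordering, always needs one of $F\,L\,T\wedge H\,R\,T$ or $F\,L\,B\wedge H\,R\,B$ together with the analogous $G,K$ conjunct, each of which contains a false literal, giving $0$.

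The only routine part is unfolding the two convolutions into disjunctions of conjunctions at $(W,S)$. Beyond (d), the standing obligation I would keep in view is to confirm that $S_1$ and $S_2$ admit no decompositions beyond the intended ones — this is exactly why I avoid units — since any extra splitting would introduce further disjuncts into the right-hand sides and could silently repair the inequalities.
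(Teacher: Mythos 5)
Your construction is correct, and I checked all four witnesses: with $S_1=\{L,R,W\}$ ($W=L\circ R$ the sole defined product) and $S_2=\{T,B,S\}$ ($S=T\ast B=B\ast T$ the sole defined products), both structures are associative partial semigroups, the point $(W,S)$ admits exactly one time-splitting and exactly two (order-swapped) space-splittings, and unfolding the convolutions at that point gives left-hand value $1$ and right-hand value $0$ in each of (a)--(d). Your route is, however, genuinely different from the paper's. The paper works inside the stream interval model of Examples~\ref{ex:interval-functions} and~\ref{ex:vector-stream-interval-functions}: $S_1$ is the fusion semigroup of closed intervals, $S_2$ consists of vectors of trajectories $f:\mathbb{R}\to A^n$ under the componentwise separation $\ast$, and the witnesses are concrete predicates such as $\forall t\in x.\ f_1\,t=1$, refuted on $[-10,10]$ split at $0$; the refutation of (d) there uses a three-component vector $(f_1,f_2,f_3)$ playing exactly the role of your crossed pattern. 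What the paper's choice buys is that the laws are seen to fail in the computationally meaningful model that motivates the whole section (so no locality-style repair is available there), and indeed its proof explicitly frames everything in terms of stream interval predicates; strictly speaking your proof establishes the proposition as literally stated (existence of some $S_1,S_2,F,G,H,K$) but not the failure in that particular model. What your choice buys is a much smaller and fully controlled search space --- finitely many decompositions, so the disjunctions are finite and checkable by inspection --- and it isolates the structural cause of the failure of (d) cleanly: the left-hand side lets the two temporal factors choose \emph{opposite} orderings of the commutative split of $S$, while the right-hand side forces a single global space-split, which is precisely the mechanism hidden inside the paper's three-trajectory example. Your closing caveat about spurious decompositions is the right thing to watch, and your unit-free semigroups do dispose of it.
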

\begin{proof}
  First, note that $\le$ can be interpreted as $\Rightarrow$ for
  stream interval predicates, and recall that parallel composition of
  predicates is separating conjunction when $f$ is a vector of
  functions. 
\begin{enumerate}
\item To refute $F\cdot G\le F\ast G$, let $x=[-10,10]$, $f=(f_1,f_2)$
  with
  \begin{equation*}
    f_1\, t =
    \begin{cases}
      1, &t\le 0,\\
      0, &t> 0,
    \end{cases}
    \qquad\qquad
    f_2\, t = 
    \begin{cases}
      0, &t\ge 0,\\
      1, &t<0,
          \end{cases}
  \end{equation*}
  and
  \begin{equation*}
    F\, x \, f = \forall t \in x.\ f_1\, t = 1,\qquad G\, x \, f = \forall
  t\in x.\ f_2\, t = 1.
\end{equation*} 
Then $(F\cdot G)\, x\, f =1$, splitting interval $x$ at $t=0$, whereas
$(F \ast G)\, x\, f =0$ since neither $F$ nor $G$ holds on the entire
interval $x$. This may be visualised using the diagrams below, where
dashed lines represent that the corresponding function has value $0$,
and solid lines represent a value $1$. For the right diagram, there is
not possible way for the vectors $f_1$ and $f_2$ to go through $F$ and
$G$. 
\begin{center}
  \scalebox{0.75}{\input{no-weak-inter-3.pspdftex}}
\end{center}
\item To refute $(F\ast G)\cdot H\le F\ast (G\cdot H)$, let $x=[-10,10]$,
  $f_1$ as in (a) and $f_2=\lambda t.\ 0$, where
  \begin{align*}
    F\, f\, x& =  \forall t\in x.\ f_1\, t = 1, \\
    G\, f\, x &=   \forall t\in x.\ f_2\, t=0, \\
    H\, f\, x &=    \forall t\in x.\ f_1\, t=0\vee f_2\, t =0.
  \end{align*}
  This makes the left hand side  $1$ and the right hand side $0$. This
  is visualised by the diagram below---neither $f_1$ nor $f_2$ may
  go through $F$. 
  
\begin{center}
  \scalebox{0.75}{\input{no-weak-inter-4.pspdftex}}
\end{center}

\item $H\cdot (G\ast F)\le (H\cdot G)\ast F$ can be
refuted by function
\begin{equation*}
  f_1'\, t =
  \begin{cases}
    0,& t\le 0,\\
    1,& t >0,
  \end{cases}
\end{equation*}
and $f_2$ as in (b), exploiting opposition duality between the two
interchange laws and realising that $f_1'$ is the ``time reverse'' of
$f_1$.

\item To refute $(F\ast G)\cdot (H\ast K)\le (F\cdot H)\ast (G\cdot K)$,
consider $f = (f_1, f_2, f_3)$ where
\begin{equation*}
  f_1\, t  =  0,\qquad
  f_2\, t  = 
  \begin{cases}
    0,& t\le 0,\\
    1,& t >0,
  \end{cases}
  \qquad
  f_3\, t  =  1 
\end{equation*}
and
\begin{align*}
  F\, f\, x  &=  \forall t \in x.\ f_1\, t = 0, \\
  G\, f\, x  &= \forall t \in x.\ f_2\, t < f_3\, t, \\
  H\, f\, x &=  \forall t \in x.\ f_1\, t < f_2\, t, \\
  K\, f\, x  &=  \forall t \in x.\ f_3\, t = 1. 
\end{align*}
For $x = [-10, 10]$, the diagram on the left below shows that the left
hand side $(F\ast G)\cdot (H\ast K)$ holds. However, in the diagram on
the right, which represents $(F\cdot H) \ast (G\cdot K)$, there is no
possible combination of horizontal and vertical splits that satisfy
$f$. In particular, $f_1$ must go through $F$, and similarly $f_3$
must go through $K$. We have a choice of placing $f_2$ above the
horizontal line (through $F$ and $H$), or below (through $G$ and $K$),
however, neither choice is appropriate.
\begin{center}
   \scalebox{0.75}{\input{no-weak-inter-2.pspdftex}}
\end{center}
\end{enumerate}
\end{proof}

Imposing addition algebraic restrictions, which would allow the
derivation of interchange laws, is left for future work. A promising
candidate is the consideration of locality assumptions, as in
separation logic~\cite{COY07}, which are briefly explained in the
following section, or the inclusion of dependency
relations~\cite{HMSW11} in the definition of the semigroup operations.

%%%%%%%%%%%%%%%%%%%%%%%%%%%%%%%%%%%%%%%%%%%%%%%%%%%%%%%%%%%%%%%%%%%%%%%%%%%

\section{Hoare Logics from Power Series Quantales}\label{sec:hoare}

One benefit of algebras is that they support the development of
verification systems. It is well known, for instance, that quantales
can be endowed with Hoare logics~\cite{HMSW11}, more precisely
\emph{propositional} Hoare logics, in which data flow rules such as
assignment rules are missing. This section illustrates how this leads to
propositional Hoare logics over power series. 

But before that we briefly recall how notions of iteration arise in
the quantale setting, since these are needed for while rules in Hoare
logic. 

Since quantales are complete lattices, least and greatest fixpoints of
isotone functions exist. Moreover, due to their infinite
distributivity laws, functions such as $\lambda \alpha.\ x+ \alpha$,
$\lambda \alpha.\ x\cdot \alpha$ and $\lambda \alpha.\ \alpha\cdot x$
are continuous and the first one is even co-continuous in distributive
quantales. This means that in particular the least fixpoints built by
using combinations of these functions can be obtained by iteration
from $0$ to the first limit ordinal.

More specifically, the function $\varphi=\lambda\alpha.\ 1+x\cdot
\alpha$ is continuous, hence has the least fixpoint
$\mu\varphi=x^\ast=\sum_{i\in\mathbb{N}}\varphi^i(0)=\sum_{i\in\mathbb{N}}x^i$. This
notion of finite iteration is needed for deriving a while-rule for a
finite loop in a partial correctness setting.

More generally, the unfold and induction rules
\begin{alignat*}{4}
  1+x\cdot x^\ast &= x^\ast,&\qquad z+x\cdot y \le y&\Rightarrow 
  x^\ast\cdot z\le y,\\
  1+x^\ast\cdot x &= x^\ast,&\qquad z+y\cdot x \le y&\Rightarrow 
  z\cdot x^\ast\le y
\end{alignat*}
can be used for reasoning about the star. In a total correctness
setting, a notion of possibly infinite iteration is preferable, which
corresponds to the greatest fixpoint of $\varphi$. Infinite iteration
is also useful for futuristic monoids \refsec{sec:futuristic}, for
example, when reasoning about reactive systems, and Hoare rules for
these can be developed. However, because these follow a similar
pattern to finite iteration, we leave their full treatment as future
work.

% \fbox{do we want to expand? do we want to present a hoare rule for
%   possibly infinite iteration?}

Equipped with the star in the power series quantale we can now
follow~\cite{HMSW11} in setting up a propositional Hoare logic. The
development is slightly non-standard, in that there is no distinction
between assertions and programs at the level of algebra. It follows
the lines of a previous approach by Tarlecki~\cite{tarlecki}.
 
For a quantale $Q$ and elements $x,y,z\in Q$, we define validity of a
Hoare triple Tarlecki-style as
\begin{equation*}
  \vdash\{x\}y\{z\} \Leftrightarrow x\cdot y\le z.
\end{equation*}
In Tarlecki's original article, this encoding has been used for a
relational semantics where not only the program, but also its pre- and
postconditions are modelled as relations.  It is equally suitable for
trace or language based extensions of Hoare logic to concurrency, such
as the rely-guarantee method~\cite{Jon83}.

The proof of the following proposition is then straightforward and
generic for quantales.
\begin{proposition}[\cite{HMSW11}]\label{prop:phl}
  Let $Q$ be a unital quantale with unit $1$. The following
  rules of propositional Hoare logic are derivable, for all
  $w, w_1, w_2,x,x_1,x_2,y,y_1,y_2,z,z_1,z_2\in Q$.
  \begin{gather*}
\vdash\{x\}1\{x\}
\qquad
\frac{x_1\le x_2\quad\vdash \{x_2\}y\{z_2\}\quad z_2\le z_1}{\vdash\{x_1\}y\{z_1\}}
\\
    \frac{\vdash\{x\}y_1\{z\}\quad\vdash\{x\}y_2\{z\}}{\vdash\{x\}y_1+y_2\{z\}}
\qquad
\frac{\vdash\{w\}x_1\{z\}\quad\vdash\{z\}x_2\{y\}}{\vdash\{w\}x_1\cdot x_2\{y\}}
\\
\frac{\vdash\{x\}y\{x\}}{\vdash\{x\}y^\ast\{x\}}
  \end{gather*}
\end{proposition}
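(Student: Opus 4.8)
The plan is to unfold the Tarlecki-style encoding $\vdash\{x\}y\{z\}\Leftrightarrow x\cdot y\le z$ so that each inference rule becomes a purely algebraic implication between inequalities in the unital quantale $Q$. Under this reading the five rules to establish are: (skip) $x\cdot 1\le x$; (consequence) from $x_1\le x_2$, $x_2\cdot y\le z_2$ and $z_2\le z_1$ derive $x_1\cdot y\le z_1$; (choice) from $x\cdot y_1\le z$ and $x\cdot y_2\le z$ derive $x\cdot(y_1+y_2)\le z$; (sequence) from $w\cdot x_1\le z$ and $z\cdot x_2\le y$ derive $w\cdot(x_1\cdot x_2)\le y$; and (while) from $x\cdot y\le x$ derive $x\cdot y^\ast\le x$.

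First I would dispatch the four non-iterative rules, each of which follows from the basic quantale structure recalled in Section~\ref{sec:algebr-prel}. The skip rule is immediate from the unit law $x\cdot 1=x$. The consequence rule chains isotonicity with transitivity: $x_1\le x_2$ gives $x_1\cdot y\le x_2\cdot y$, and then $x_2\cdot y\le z_2\le z_1$. The choice rule uses left distributivity to rewrite $x\cdot(y_1+y_2)=x\cdot y_1+x\cdot y_2$ and then the universal property of the supremum, since both summands lie below $z$. The sequence rule uses associativity together with two applications of isotonicity: $w\cdot(x_1\cdot x_2)=(w\cdot x_1)\cdot x_2\le z\cdot x_2\le y$. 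None of these steps is delicate.

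The one step requiring the fixpoint apparatus is the while rule, and this is where I would spend the effort. The goal $x\cdot y^\ast\le x$ should follow from the right star-induction rule recalled just above the proposition, namely $z+y\cdot x\le y\Rightarrow z\cdot x^\ast\le y$. Instantiating that rule with the starred element taken to be the loop body and both the left factor and the target taken to be the invariant $x$, the side condition becomes $x+x\cdot y\le x$. This holds because $x\le x$ and, by the rule's hypothesis, $x\cdot y\le x$, so their supremum is bounded by $x$. The induction rule then yields exactly $x\cdot y^\ast\le x$. I expect this to be the only genuinely non-routine part, and its only subtlety is bookkeeping: aligning the variables of the abstract induction law (where the star sits on the right factor) with the Hoare-rule variables so that the invariant plays the roles of both the preserved quantity and the loop-free base case.
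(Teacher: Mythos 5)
Your proof is correct and follows exactly the route the paper intends: the paper omits the proof (deferring to~\cite{HMSW11} and calling it ``straightforward and generic for quantales''), but your unfolding of the Tarlecki encoding, the use of unitality, isotonicity, left distributivity and associativity for the first four rules, and the right star-induction law $z+y\cdot x\le y\Rightarrow z\cdot x^\ast\le y$ for the while rule match precisely the machinery the paper sets up just before the proposition and reuses for the strengthened rules immediately after it. No gaps.
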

We can strengthen the choice and star rule  as follows.
\begin{equation*}
  \frac{\vdash\{x\cdot w_1\} y_1\{z\}\quad\vdash\{x\cdot w_2\}y_2\{z\}}{\vdash\{x\}w_1\cdot y_1+w_2\cdot y_2\{z\}}
\qquad 
\frac{\vdash\{x\cdot w_1\}y\{x\}}{\vdash\{x\}(w_1\cdot y)^\ast\cdot w_2\{x\cdot w_2\}}
\end{equation*}
The proof of the first one is essentially that of the choice rule. For
the second one suppose $x\cdot w_1\cdot y\le x$. Then $x\cdot
(w_1\cdot y)^\ast \le x$ by star induction and $x\cdot (w_1\cdot
y)^\ast\cdot w_2\le x\cdot w_2$ by isotonicity. If $w_1$ and $w_2$
are, in some sense, complemented, then this yields the standard
conditional rule and while rule of Hoare logic.

Instantiating Proposition~\ref{prop:phl} to power series quantales
automatically yields Hoare calculi for virtually all the examples
discussed in this article. The instantiation to the binary relations
quantale reproduces Tarlecki's original soundness result. Other
instances yield, in a generic way, Hoare logics over computationally
meaningful semantics based on finite words (traces in the sense of
concurrency theory), paths in graphs (sequences of events in
concurrency theory), paths in the sense of automata theory, or
pomsets. We also obtain generic propositional Hoare logics for
reasoning about interval and stream interval predicates in algebraic
variants of interval logics.

% \begin{example}
%   *** hoare example for stream interval predicates.
% \end{example}

In addition, Proposition~\ref{prop:phl} covers commutative quantales,
where the Tarlecki-style encoding of the validity of Hoare triples
might make less sense. 

The rules covered by Proposition~\ref{prop:phl}, however, are entirely
sequential. For applications involving concurrency, such as the vector
stream interval functions in
Example~\ref{ex:vector-stream-interval-functions}, additional rules are
desirable.  In concurrent Kleene algebra, Owicki-Gries-style
concurrency rules and frame rules in the style of separation logic can
be derived. The same derivation, however, is ruled out in the quantale
context, because the concurrency rule obtained is equivalent to the
weak interchange law and the frame rule to one of the small
interchange laws, both of which have been refuted in
Proposition~\ref{prop:interchangeref}.

Instead we can use the interchange laws provided by
Lemma~\ref{P:quantaleprops}.
\begin{lemma}\label{lem:quantale-concrule}
  In quantale $Q$ the following concurrency rule is derivable,
  for all $x_1,x_2,$ $y_1,y_2,$ $z_1,z_2\in Q$.
\begin{equation*}
  \frac{\vdash \{x_1\}y_1\{z_1\}\quad \vdash \{x_2\}y_2\{z_2\}}{\vdash \{x_1\sqcap x_2\}y_1\sqcap y_2\{z_1\sqcap z_2\}}
\end{equation*}
\end{lemma}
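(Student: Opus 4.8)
The plan is to unfold the Tarlecki-style definition of Hoare triple validity and reduce the concurrency rule to the second (equational) interchange law from Lemma~\ref{P:quantaleprops}. Recall that $\vdash\{x\}y\{z\}$ means $x\cdot y\le z$. So the two hypotheses are $x_1\cdot y_1\le z_1$ and $x_2\cdot y_2\le z_2$, and the desired conclusion is $(x_1\sqcap x_2)\cdot(y_1\sqcap y_2)\le z_1\sqcap z_2$. This is exactly the shape handled by the first inequality of Lemma~\ref{P:quantaleprops}, namely $(w\sqcap x)\cdot(y\sqcap z)\le (w\cdot y)\sqcap(x\cdot z)$.

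First I would apply that interchange law with the substitution $w:=x_1$, $x:=x_2$, $y:=y_1$, $z:=y_2$, obtaining
\begin{equation*}
  (x_1\sqcap x_2)\cdot(y_1\sqcap y_2)\ \le\ (x_1\cdot y_1)\sqcap(x_2\cdot y_2).
\end{equation*}
Next I would use the two hypotheses together with isotonicity of meet: since $x_1\cdot y_1\le z_1$ and $x_2\cdot y_2\le z_2$, monotonicity of $\sqcap$ in both arguments gives $(x_1\cdot y_1)\sqcap(x_2\cdot y_2)\le z_1\sqcap z_2$. Chaining these two inequalities by transitivity of $\le$ yields $(x_1\sqcap x_2)\cdot(y_1\sqcap y_2)\le z_1\sqcap z_2$, which is precisely $\vdash\{x_1\sqcap x_2\}y_1\sqcap y_2\{z_1\sqcap z_2\}$.

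There is essentially no obstacle here: the real content is entirely packaged in Lemma~\ref{P:quantaleprops}, and the derivation is a two-step chase. The only minor point worth checking is that $\sqcap$ is isotone, but this follows immediately from its definition as infimum. It is worth emphasising the conceptual contrast the surrounding text sets up: this rule goes through precisely because it relies on the meet-based interchange law of Lemma~\ref{P:quantaleprops}, which holds in \emph{every} quantale, rather than on the small or weak interchange laws between sequential and concurrent composition, which were refuted in Proposition~\ref{prop:interchangeref}.
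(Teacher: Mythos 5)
Your proof is correct and is essentially identical to the paper's: both unfold the Tarlecki encoding, apply the meet-based interchange inequality $(w\sqcap x)\cdot(y\sqcap z)\le (w\cdot y)\sqcap(x\cdot z)$ from Lemma~\ref{P:quantaleprops}, and finish by isotonicity of $\sqcap$ and transitivity. The only blemish is a slip in your opening sentence, where you call the needed law the ``second (equational)'' interchange law of Lemma~\ref{P:quantaleprops} before correctly invoking the first (inequational) one in the actual derivation.
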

\begin{proof}
Suppose $x_1\cdot y_1\le z_1$ and $x_2\cdot y_2\le
z_2$. Then 
\begin{equation*}
    (x_1\sqcap x_2)\cdot (y_1\sqcap y_2) 
  \le (x_1\cdot y_1)\sqcap (x_2 \cdot y_2)
  \le z_1\sqcap z_2
\end{equation*}
by weak interchange (Lemma~\ref{P:quantaleprops}) and the assumptions.
\end{proof}
\noindent Once more this rule is available automatically in all examples
discussed in this article.
% \begin{example}
%   ***show why such a rule is useful as a concurrency rule for stream
%   interval funcations.
% \end{example}

As an alternative to conjunction-based notions of concurrency, it
might still be possible to derive concurrency and frame rules under
additional syntactic restrictions, for instance, those capturing the
synchronisation between sequential and concurrent compositions, or in
particular models. An investigation is left for future work.

%%%%%%%%%%%%%%%%%%%%%%%%%%%%%%%%%%%%%%%%%%%%%%%%%%%%%%%%%

\section{The Frame Rule in a Power Series Context}\label{sec:frame}

Section~\ref{sec:fpscomquantaleexamples} shows that the assertion
quantales which underlie separation logic---implementing the boolean
operations together with a notion of separation logic on predicates
over a resource monoid---can be modelled in the power series setting.
Predicate transformers, which yield another way of deriving Hoare
logics over assertion algebras, can be modelled in that setting as
well (Section~\ref{sec:transformers}).

In this section we sketch how a combination of these results allows us
to derive the frame rule of separation logic by equational
reasoning. Convolution plays a central part in the proof.  Previously,
algebraic proofs of the frame rule have been given in a state
transformer context~\cite{COY07} as well as in the context of
concurrent Kleene algebra~\cite{HMSW11}.

It is well known that in the predicate transformer setting, validity
of Hoare triples can be encoded as
\begin{equation*}
  \vdash \{p\}R\{q\}\Leftrightarrow p\le \hat{f}_R\, q,
\end{equation*}
which is essentially an adjunction, using the notation of Section~\ref{sec:transformers},
but writing $p,q,\dots$ for predicates, which are elements of the
assertion quantale of separation logic. It is also well known that the
rules of Hoare logic can be derived in this setting, assuming that
predicate transformers are isotone. A result of separation logic
states that the frame rule can be derived whenever the predicate
transformer $f$ under consideration is \emph{local}, that is, it
satisfies
\begin{equation*}
  f\ast\mathit{id}\le f.
\end{equation*}
Intuitively, locality means that the effect of a transformer can
always be localised on part of the state. For a detailed discussion
see~\cite{COY07}.

Before deriving the frame rule we use properties of power series and
convolution to prove a point-wise analogue of locality which simplfies
the proof.
\begin{lemma}\label{lem:local-prop}
  $f$ is local if and only if $(f\, p) \ast q \le f\, (p\ast q)$.
\end{lemma}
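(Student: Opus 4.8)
The plan is to prove both implications at once by unfolding the definition of locality and reducing it, through pointwise reasoning, to the stated inequality. Recall that on the predicate-transformer quantale $(2^S)^{2^S}$ the convolution $\ast$ is the one furnished by \refthm{thm:quantale-lifting} (via \refprop{prop:ptquantale}) with the assertion monoid $(2^S,\ast)$ as source and the quantale $2^S$ as target, so that
\begin{equation*}
  (f\ast g)\, P = \sum_{P=Q\ast R} (f\, Q)\ast (g\, R),
\end{equation*}
whereas $\mathit{id}$ is the identity transformer, the unit of function composition, with $\mathit{id}\, R = R$. First I would substitute $g=\mathit{id}$ to obtain $(f\ast\mathit{id})\, P = \sum_{P=Q\ast R}(f\, Q)\ast R$.

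Next, since the order on $(2^S)^{2^S}$ is pointwise (\reflem{lem:lattice-lifting}), locality $f\ast\mathit{id}\le f$ is equivalent to $(f\ast\mathit{id})\, P \le f\, P$ for every predicate $P$. Because a supremum lies below $f\, P$ exactly when each of its summands does, this in turn is equivalent to the family of implications
\begin{equation*}
  P = Q\ast R \ \Longrightarrow\ (f\, Q)\ast R \le f\, P,
\end{equation*}
ranging over all $P,Q,R\in 2^S$. The crucial observation is that quantifying over $P$ together with all factorisations $P = Q\ast R$ is the same as quantifying freely over $Q,R$ and setting $P := Q\ast R$, since the complex product is a total operation on $2^S$. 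Making this substitution turns the implication into $(f\, Q)\ast R \le f\,(Q\ast R)$, which is precisely the pointwise condition after renaming $Q,R$ to $p,q$.

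Concretely, for the forward direction, given $p,q$ I would use the factorisation $(Q,R)=(p,q)$ of $P = p\ast q$, so that $(f\, p)\ast q$ is one summand of $(f\ast\mathit{id})\,(p\ast q)$ and hence bounded by $f\,(p\ast q)$ by locality; for the converse I would bound each summand $(f\, Q)\ast R$ of $(f\ast\mathit{id})\, P$ by $f\,(Q\ast R)=f\, P$ using the pointwise inequality and then take the supremum. The one thing to keep straight is the overloading of $\ast$---it simultaneously denotes the splitting in the source monoid $2^S$, the separating conjunction recombining values in the target $2^S$, and the separating conjunction appearing in the statement---so I would make these roles explicit while unfolding. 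No genuine obstacle arises beyond this bookkeeping and the routine suprema manipulation; one may note that every $P$ admits at least the factorisation $(P,\{1\})$, although this is not needed, as an empty supremum equals $\zero$ and is trivially dominated.
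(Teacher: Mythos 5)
Your proof is correct and follows essentially the same route as the paper: unfold the convolution $(f\ast\mathit{id})\,r=\sum_{r=p\ast q}(f\,p)\ast(\mathit{id}\,q)$, bound each summand by the pointwise inequality for one direction, and for the converse observe that each summand of a supremum dominated by $f\,r=f\,(p\ast q)$ is itself so dominated. Your extra remarks on the totality of the complex product and the bookkeeping of the overloaded $\ast$ only make explicit what the paper leaves implicit.
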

\begin{proof}
  Let $(f\, p)\ast q\le  f\, (p\ast q)$. Then 
$(f\, p)\ast (\mathit{id}\, q) = (f\, p) \ast q \le f\, (p\ast q)$
 and therefore 
\begin{equation*}
    (f\ast \mathit{id})\, r 
  = \sum_{r=p\ast q} (f\, p)\ast (\mathit{id}\, q)
  \le \sum_{r=p\ast q}  f\, (p\ast q) = f\, r.
\end{equation*}

Let $f$ be local.  Then 
\begin{equation*}
(f\ast \mathit{id})\, r = \sum_{r=p\ast q} (f\, p) \ast q \le f\, r = f\, (p\ast q),
\end{equation*}
 whence $(f\, p)\ast q\le f\, (p\ast q)$.
\end{proof}

\begin{lemma}
  Let $\hat{f}_R$ be a local predicate transformer associated to program
  $R$. Then the following frame rule holds.
  \begin{equation*}
    \frac{\vdash\{p\}R\{q\}}{\vdash\{p\ast r\}R\{q\ast r\}}
  \end{equation*}
\end{lemma}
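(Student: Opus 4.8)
The plan is to prove the frame rule using the adjunction encoding of Hoare triples together with the pointwise locality characterisation from Lemma~\ref{lem:local-prop}.

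First I would unfold the hypothesis and conclusion through the predicate transformer encoding of validity. The assumption $\vdash\{p\}R\{q\}$ means $p \le \hat{f}_R\, q$, and the goal $\vdash\{p\ast r\}R\{q\ast r\}$ means $p\ast r \le \hat{f}_R\,(q\ast r)$. So the entire task reduces to showing that $p \le \hat{f}_R\, q$ implies $p\ast r \le \hat{f}_R\,(q\ast r)$.

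Next I would chain two facts. Since separating conjunction $\ast$ is convolution, it is isotone in each argument (this follows from the quantale distributivity laws via \refthm{thm:quantale-lifting}). Applying isotonicity of $\ast$ to the assumption $p \le \hat{f}_R\, q$ gives
\begin{equation*}
  p \ast r \le (\hat{f}_R\, q) \ast r.
\end{equation*}
Now I would invoke the pointwise form of locality from \reflem{lem:local-prop}, namely $(\hat{f}_R\, q)\ast r \le \hat{f}_R\,(q\ast r)$, which holds precisely because $\hat{f}_R$ is local. Composing the two inequalities by transitivity yields $p\ast r \le \hat{f}_R\,(q\ast r)$, which is exactly the conclusion. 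Thus the frame rule follows by a two-line equational chain:
\begin{equation*}
  p\ast r \le (\hat{f}_R\, q)\ast r \le \hat{f}_R\,(q\ast r).
\end{equation*}

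The main obstacle is really not in this lemma at all, since the heavy lifting has been front-loaded into \reflem{lem:local-prop}: that lemma is what converts the global operator-level locality condition $f\ast\mathit{id}\le f$ into the pointwise statement $(f\,p)\ast q \le f\,(p\ast q)$ that slots directly into the transitivity chain. Given that characterisation, the only things I must be careful about are that $\ast$ is genuinely isotone (guaranteed by the quantale structure of the assertion algebra) and that I correctly keep the predicate transformer $\hat{f}_R$ covariant rather than contravariant, so that the adjunction direction $p\le \hat{f}_R\,q$ is preserved under the framing. No fixpoint or star reasoning is needed here, so the proof is a short, purely equational argument.
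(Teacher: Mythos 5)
Your proof is correct and follows the paper's argument exactly: both unfold the triple via $p\le \hat{f}_R\,q$, then chain $p\ast r \le (\hat{f}_R\,q)\ast r \le \hat{f}_R\,(q\ast r)$ using isotonicity of $\ast$ and the pointwise locality of Lemma~\ref{lem:local-prop}. Your remarks on isotonicity and on the heavy lifting being done in Lemma~\ref{lem:local-prop} are accurate but add nothing beyond what the paper's two-line proof already uses implicitly.
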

\begin{proof} 
  Let $\vdash\{p\}R\{q\}$, that is, $p\le \hat{f}_R\, q$. Then $p \ast
  r \le (\hat{f}_R\, q) \ast r \le \hat{f}_R\, (q \ast r)$ by
  Lemma~\ref{lem:local-prop} and therefore $\vdash\{p\ast r\}R\{q\ast
  r\}$.
\end{proof}
A deeper investigation of Hoare logics, inference rules for separation
logic, and extensions to concurrency in this setting is left for
future work.

%%%%%%%%%%%%%%%%%%%%%%%%%%%%%%%%%%%%%%%%%%%%%%%%%%%%%%%%%%%%%%%%%%%

\section{Conclusion}\label{sec:conclusion}

The aim of this article is to demonstrate that convolution is a
versatile and interesting construction in mathematics and computer
science.  Used in the context of power series and integrated into
lifting results, it yields a powerful tool for setting up various
mathematical structures and computational models and calculi endowed
with generic algebraic properties.

Beyond the language models known from formal language theory, these
include assertion quantales of separation logic (which can be lifted
from an underlying resource monoid), assertion quantales of interval
logics (which can be lifted from an underlying semigroup of intervals)
and stream interval functions (which have applications in the analysis
of dynamic and real-time systems). For all these examples, the power
series approach provides a simple new approach. For the latter two,
new kinds of concurrency operations are provided.

In addition, the modelling framework based on power series has been
combined with a verification approach by deriving, in generic fashion,
propositional Hoare logics for virtually all the examples
considered. In particular, state, predicate or resource transformers,
which can be used for constructing these logics, arise as instances of
power series.

This article focused mainly on the proof of concept of the relevance
of convolution. Many of the modelling examples and verification
approaches featured require further investigation. This includes in
particular the derivation of more comprehensive sets of Hoare-style
inference rules for concurrency verification, separation logic and
interval temporal logics, and more detailed case studies with
separation, inverval and stream interval algebras, and with concurrent
systems with infinite behaviours.

For all these case studies, the formalisation of the power series
approach and the implementation of modelling tools plays and important
role.  In fact, the basic lifting lemma and a detailed predicate
transformer approach based on power series have already been
formalised within the Isabelle/HOL proof assistant~\cite{NPW02}.  The
development of a power series based verification tool for separation
logic, and even concurrent separation logic, will be the next step in
the tool chain.

\bibliographystyle{plain}
\bibliography{interval}

\end{document}